%% file: main.tex
\documentclass[11pt]{article}
\usepackage[letterpaper, margin=1in]{geometry} 
\usepackage{lastpage}
\usepackage{fancyhdr}
\def\randomcr{0}

\usepackage[utf8]{inputenc}
\usepackage[T1]{fontenc}
\usepackage{lmodern} %
\usepackage{microtype} %
\usepackage{setspace} %
\usepackage{booktabs}
\usepackage{tikz}
\usepackage{pgfplots}
\pgfplotsset{compat=1.18}
\usepackage{algorithm}
\usepackage[noend]{algpseudocode}
\usepackage{amsmath, amssymb, amsthm}
\renewcommand{\geq}{\geqslant} %
\renewcommand{\leq}{\leqslant}
\newcommand{\EET}{\ensuremath{\mathsf{EET}}}
\newcommand{\EIT}{\ensuremath{\mathsf{EIT}}}
\newcommand{\MI}{\ensuremath{\mathsf{MI}}}

\usepackage{thmtools}
\usepackage{thm-restate}
\usepackage{mathtools}
\usepackage{bm} %

\usepackage{graphicx}
\definecolor{darkblue}{rgb}{0.0, 0.0, 0.55}

\usepackage[style=alphabetic, maxalphanames=4, minalphanames=3, backref=true, natbib=true]{biblatex}
\usepackage{hyperref}
\hypersetup{
    colorlinks=true,
    linkcolor=darkblue,
    citecolor=darkblue,
    urlcolor=darkblue,
    pdftitle={Entropy Equivalence Testing}
}

\usepackage[capitalize]{cleveref}
\usepackage{framed}

\usepackage{titlesec}
\titleformat{\section}{\large\bfseries\sffamily}{\thesection}{1em}{}
\titleformat{\subsection}{\normalsize\bfseries\sffamily}{\thesubsection}{1em}{}

\newtheorem{theorem}{Theorem}

\newtheorem{lemma}[theorem]{Lemma}
\newtheorem{corollary}[theorem]{Corollary}
\theoremstyle{definition}
\newtheorem{definition}[theorem]{Definition}
\newtheorem{remark}[theorem]{Remark}

\usepackage{texmacs}

\title{\textbf{\sffamily Entropy Equivalence Testing}}
\date{}
\author{\textbf{Clément L. Canonne}\\
        University of Sydney\\
        \texttt{clement.canonne@sydney.edu.au}
        \and
        \textbf{Yash Pote}\\
        National University of Singapore\\
        \texttt{yashppote@gmail.com}\\
        \and
        \textbf{Jonathan Scarlett}\\
        National University of Singapore\\
        \texttt{scarlett@comp.nus.edu.sg}
        \and
        \textbf{Joy Qiping Yang}\\
        University of Sydney\\
        \texttt{qiping.yang@sydney.edu.au}}

\begin{document}
\maketitle

\begin{abstract}
    We introduce the problem of \emph{entropy equivalence testing} for probability distributions, a relaxation of the well-studied closeness testing problem, where the distribution testing algorithm is now only required to distinguish, given samples from two unknown distributions $p,q$ and a parameter $\varepsilon \in(0,1/2]$, between $p=q$ and $|H(p)-H(q)| \geq \varepsilon$ (where $H$ denotes the Shannon entropy). We provide a time- and sample-efficient algorithm for this task, showing that the optimal sample complexity for this task can be significantly lower than that of closeness testing.  
    As an application, we leverage this result to provide the first non-trivial testing algorithm for (standard) closeness of low-degree \emph{Bayesian networks}, which significantly improves on either the sample or time complexity of a baseline based on full learning.
\end{abstract}
\hrule
\vspace{1em}

\input{introduction}
\input{equivalence}

\section*{Acknowledgements}
This work is supported by the National University of Singapore under the Presidential Young Professorship scheme. JY is supported by a JD Technology Research Scholarship in Artificial intelligence. We would like to thank Sayantan Sen, Jan Seyfried and Marco Tomamichel for the discussion on their work of testing Conditional Mutual Information. JY would also like to thank Davin Choo for the early discussions predating this project. This work is conducted while JY was interning at National Institute of Informatics, hosted by Yuichi Yoshida. We thank the reviewers of RANDOM 2026, whose comments help improve the presentation of this work.

\printbibliography
\input{Appendix}
\end{document}

%% file: introduction.tex
\section{Introduction}
Distribution testing, a subfield of property testing~\cite{RubinfeldS96,GoldreichGR98} concerned with efficient algorithms for deciding statistical properties of data, has witnessed a significant interest since its systematic introduction by Batu, Fortnow, Rubinfeld, Smith, and White~\cite{BatuFRSW00}, with a flurry of works over the past two decades addressing a range of hypothesis testing tasks, under various models of distribution testing (see, e.g.,~\cite{Rubinfeld12,Canonne20,Canonne22} and~\cite[Chapter~11]{Goldreich17} for surveys and expositions).

Two of the flagship and most fundamental tasks in distribution testing are \emph{identity testing} (and its special case, \emph{uniformity testing}), and \emph{closeness testing} (also known as \emph{equivalence} testing): in the first, an algorithm is given independent samples from an (unknown) discrete probability distribution, and must decide whether the distribution is equal to some known reference distribution, or significantly deviates from it in a meaningful notion of distance (typically total variation distance). In the second, the algorithm has access to samples from \emph{two} unknown distributions, and similarly must decide whether they are equal, or significantly far from each other. Here, ``significantly far'' is quantified by an input parameter $\varepsilon>0$, and the algorithm is required to be correct with high probability $1-\delta$,\footnote{One typically sets $\delta$ to be a small constant, e.g., $\delta=1/10$: this is without much loss of generality, as one can amplify this by repetition to any target $1-\delta$, at the cost of an $O(\log(1/\delta))$ factor in sample and time complexity.\label{ft:delta}} while minimizing the \emph{sample complexity}, that is, the total number of samples taken from the distribution(s) in the worst case as a function of the domain size $n$, distance parameter $\varepsilon$, and error probability $\delta$.

The sample complexity of these two flagship tasks is by now well understood, with several time-efficient algorithms known to achieve the optimal sample complexity, as a function of all parameters: namely,
\[
\Theta\left(\frac{\sqrt{n\log(1/\delta)}+\log(1/\delta)}{\varepsilon^2}\right)
\]
for identity testing~\cite{Paninski08,AcharyaDK15,DK16,DiakonikolasGPP18}, and
\[
\Theta\left(\frac{n^{2/3}\log^{1/3}(1/\delta)+\log(1/\delta)}{\varepsilon^{4/3}}+\frac{\sqrt{n\log(1/\delta)}+\log(1/\delta)}{\varepsilon^2}\right)
\]
for closeness testing~\cite{BatuFRSW00,CDVV14,DK16,DiakonikolasGKP21}. %
However, in spite of these impressive advances, these sample complexities can remain prohibitively large for many combinations of $(n,\varepsilon)$, notably for small $\varepsilon$ corresponding to distributions that may be very close to one another.
Given that their sample complexities are constant-factor \emph{optimal}, there would seem to be little room for improvement, yet two possible avenues remain. The first is to \emph{relax the guarantees}: total variation is known to be a very stringent measure of distance, and testing with respect to a more permissive distance measure may enable significant sample complexity savings, while still being ``good enough'' for the application at hand. This direction was partially explored in, for instance,~\cite{DaskalakisKW18}. The second avenue, particularly well suited to high-dimensional data, is to \emph{make additional structural assumptions} on the distribution: high-dimensional data is seldom fully unstructured, and one can usually leverage domain knowledge to make modelling assumptions, for instance that the probability distributions exhibit a low-degree Bayesian network structure, or are consistent with an Ising model. This line of work, in the context of distribution testing, was initiated in~\cite{CDKS20,DaskalakisDK18}.

Our work makes progress on both of these fronts.  We first define a relaxation of  closeness testing, which we term \emph{entropy equivalence testing}, where one is asked to detect if the two unknown distributions have significantly different (Shannon) entropy:\footnote{Recall that the (Shannon) entropy of a probability distribution $p$ over a discrete domain $\mathcal{X}$ is defined as $H(p) = \sum_{x\in \mathcal{X}} p(x)\log\frac{1}{p(x)}$.}

\begin{framed}
\noindent\textbf{Entropy Equivalence Testing (\EET)}:
Given sample access to distributions $p,q$ over a known domain of size $n$ and parameters $\varepsilon \in(0,1/2]$ and $\delta\in(0,1]$, distinguish, with probability
 at least $1-\delta$, between the following two cases: (1)~$p=q$, and (2)~$|H(p) - H(q)| \ge \varepsilon$.
\end{framed}
We note that the ``single-distribution'' version of this task, when $q$ is fully known to the algorithm, was recently introduced~\cite{CY24}, under the name Entropy Identity Testing (\EIT). \cite{CY24} then leveraged their algorithm for \EIT~to obtain an improved identity testing algorithm for ``standard'' identity testing (i.e., in total variation distance) of \emph{bounded-in-degree Bayes nets}, establishing a connection between the two aforementioned avenues of research. We build upon and deepen this connection to obtain, from our algorithm for \EET, the first non-trivial \emph{closeness} testing algorithm for Bayes nets.

\subsection{Our results}

We now describe our results in more detail. Our first contribution is an efficient algorithm for the \EET~ task defined above (where we implicitly set the probability of failure $\delta$ to be a small constant):
\begin{theorem}\label{theo:eet:intro}
    There exists an algorithm for the Entropy Equivalence Testing problem with sample complexity
    \[
        \tilde{O} \!\left( \frac{n^{3 / 4}}{\varepsilon}\right) + O\left(\frac{\log^2(n/\varepsilon)}{\varepsilon^2} \right)
    \]
    and running time near-linear in the number of samples.
\end{theorem}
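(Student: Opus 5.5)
We reduce the entropy comparison to a single statistic that is zero in expectation when $p=q$ and large when the entropies differ. By decomposition,
\[
H(p)-H(q)=\sum_x p(x)\log\frac1{p(x)}-\sum_x q(x)\log\frac1{q(x)} .
\]
Whenever $|H(p)-H(q)|\geq\varepsilon$, at least one of the cross quantities
\[
\sum_x (p(x)-q(x))\log\frac1{q(x)} \quad\text{or}\quad \sum_x (q(x)-p(x))\log\frac1{p(x)}
\]
must be at least $\varepsilon/2$ in absolute value. Indeed, their sum is $\mathrm{KL}(p\|q)+\mathrm{KL}(q\|p)$ minus nothing, and the difference $H(p)-H(q)$ can be written as $\sum_x(p-q)\log\frac1q-\mathrm{KL}(p\|q)$. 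Both quantities vanish when $p=q$.

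The plan is therefore to estimate a weighted linear statistic
\[
T=\sum_x (p(x)-q(x))\,w(x),
\]
with weights $w(x)=\log\frac1{\hat r(x)}$ computed from a held-out sample of the mixture $r=(p+q)/2$. Using the mixture rather than $q$ itself keeps the weights symmetric, and under $p=q$ the statistic has mean exactly $0$ regardless of the weights. We would first show that $|H(p)-H(q)|\geq\varepsilon$ forces $|\sum_x(p-q)\log\frac1r|\gtrsim\varepsilon$ after accounting for the KL terms. Concretely, one of $\mathrm{KL}(p\|r)$ or $\mathrm{KL}(q\|r)$ is large, or else the linear term itself is large. The large-KL case makes $p,q$ far in a sense detectable with $\tilde O(n^{3/4}/\varepsilon)$ samples via a closeness-type $\chi^2$ statistic on heavy elements. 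We expect this case analysis to be the main obstacle.

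Next, we truncate to handle light elements. Elements with $r(x)\lesssim \varepsilon/(n\log(n/\varepsilon))$ contribute at most $O(\varepsilon)$ to all the entropy terms and can be discarded. The remaining elements are split into $O(\log(n/\varepsilon))$ dyadic buckets by their empirical mixture mass, which is estimated using the first batch of samples.

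Within each bucket, we estimate $\sum_x (p-q)w$ by $\sum_x (X_x-Y_x)w(x)$ under Poissonized sample counts. Under $p=q$ the mean is $0$, and the variance is at most $\sum_x r(x)w(x)^2/m$. Combining this with the contribution from misestimated weights yields the $\log^2(n/\varepsilon)/\varepsilon^2$ term, since $w\leq\log(n/\varepsilon)$. The harder buckets are the light-to-moderate ones, where $\log\frac1{\hat r}$ is a noisy estimate of $\log\frac1r$. There we would balance the weight-estimation error (which needs about $1/(n r\varepsilon)$ relative accuracy) against the variance. Setting the threshold at $r\approx n^{-3/4}$ should make both error sources cost $\tilde O(n^{3/4}/\varepsilon)$ samples: elements above the threshold have well-estimated logs, and elements below it have a total count small enough that a closeness-tester on that sub-domain suffices.

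Finally, we threshold $T$ at $\varepsilon/4$ and apply Chebyshev plus median amplification, as in footnote~\ref{ft:delta}. All steps consist of counting and bucketing, so the running time is near-linear in the number of samples.
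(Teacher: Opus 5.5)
Your outline follows the paper's skeleton: pass through the mixture $r=(p+q)/2$, send the KL terms to a Hellinger closeness test, split the domain at a mass threshold near $n^{-3/4}$, and handle light elements with a conditional closeness test. But the heavy-element step has a genuine gap, and it is exactly where the paper's new idea is needed.

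Held-out weights do make your statistic mean-zero under $p=q$, and the \emph{random} part of the weight error averages out. The weight error also has a \emph{systematic} part, though. With $\hat r_x=N_x/m'$ and $N_x\sim\mathrm{Poi}(m'r_x)$, one has $\mathbb{E}[\log(1/\hat r_x)]\approx\log(1/r_x)+\frac{1}{2m'r_x}$. So your statistic is centered near $\sum_{x\in S}(p_x-q_x)\log(1/r_x)+\frac{1}{2m'}\sum_{x\in S}\frac{p_x-q_x}{r_x}$. This bias has the same form and order as the one bounded in Lemma~\ref{lemma:upper_bound_of_expected_error_Z} for the paper's plug-in estimator.

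When the $n^{3/4}/\varepsilon$ term dominates, elements near the threshold have $m'r_x=\tilde\Theta(1)$. Their logarithms are therefore \emph{not} well estimated, and nothing in your plan bounds the aggregate bias. The constraints $d_H^2(p,q)\leq O(\varepsilon)$ and $r_x\gtrsim\varepsilon n^{-3/4}$ only bound it, via Cauchy--Schwarz, by $O(n^{7/8}/m')$, i.e.\ about $n^{1/8}\varepsilon$.

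This is not just a loose bound. The functionals $\sum_x(p_x-q_x)\log(1/r_x)$ and $\sum_x(p_x-q_x)/r_x$ are linearly independent, so you can construct a bad instance:
\begin{itemize}
\item put differences of total size $\tilde\Theta(\varepsilon)$, with opposite signs, on two mass levels $\tau$ and $2\tau$ just above the threshold;
\item spread them over $\tilde\Theta(n^{3/4})$ elements, so that $d_H^2(p,q)$ is a small multiple of $\varepsilon$;
\item balance total mass on a few heavy elements;
\item tune the sizes so that $\sum_{x\in S}(p_x-q_x)\log(1/r_x)\approx-2\varepsilon$ while the bias is $\approx+2\varepsilon$.
\end{itemize}
Your statistic then concentrates near $0$, and you accept a pair with $|H(p)-H(q)|\geq\varepsilon$.

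The ``balancing'' you propose cannot fix this. Controlling the weight error element by element would need relative accuracy about $\varepsilon/(nr)$ for mass-$r$ elements. That costs about $n^2r/\varepsilon^2$ samples, which is $n^{5/4}/\varepsilon$ at $r\approx\varepsilon n^{-3/4}$.

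The missing ingredient is the paper's bias test (Lemma~\ref{lemma:check_if_Z_has_small_bias}). It exploits the fact that the null hypothesis is exactly $p=q$, so \emph{any} evidence of $p\neq q$ justifies rejection. The test checks whether $\sum_{i\in S}\frac{|p_i-q_i|}{p_i+q_i}\geq\Omega(m\varepsilon)$, using the statistic $\sum_{i\in S}\frac{(X_i-Y_i)^2-(X_i+Y_i)}{X_i+Y_i}$. Because $m(p_i+q_i)\geq1$ on $S$, Cauchy--Schwarz converts a large bias into $\sum_{i\in S}\frac{(p_i-q_i)^2}{p_i+q_i}\geq m\varepsilon^2/n$. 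This gives expectation $\gtrsim sm\varepsilon^2/n\geq\sqrt{n}$ against variance $O(n)$, once $d_H^2(p,q)\leq O(\varepsilon)$. Only after this test passes is the plug-in estimator trusted; extra checks on $|p(S)-q(S)|$ and $\|p-q\|_2$ then control its variance. To close your argument you need this test, or an explicit estimate-and-subtract correction for $\frac{1}{2m'}\sum_x\frac{p_x-q_x}{r_x}$.

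Two smaller points.
\begin{itemize}
\item The step you flag as the main obstacle is routine. We have $d_{\mathrm{KL}}(p\|r)\leq d_{\chi^2}(p,r)=\frac12\sum_i\frac{(p_i-q_i)^2}{p_i+q_i}=\Theta(d_H^2(p,q))$ (Lemma~\ref{lemma:reduce_to_Hellinger_or_entropy}), and an off-the-shelf Hellinger closeness tester over the \emph{whole} domain, not just heavy elements, handles it.
\item The heavy/light threshold should be $\tilde\Theta(\varepsilon n^{-3/4})$ rather than $n^{-3/4}$. The conditional closeness test on light elements costs roughly $\sqrt{n}\,q(\bar S)/\varepsilon^2$ up to logarithmic factors, and fits the $\tilde O(n^{3/4}/\varepsilon)$ budget only when $q(\bar S)\leq\tilde O(\varepsilon n^{1/4})$.
\end{itemize}
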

Combining \cref{theo:eet:intro} and the ``baseline tester'' discussed below (see~\cref{ssec:overview}) which follows from a reduction from entropy difference to total variation distance and has sample complexity $\tilde{O}\left( \max \left( \frac{n^{2 / 3}}{\varepsilon^{4 / 3}},
\frac{\sqrt{n}}{\varepsilon^2} \right) \right)$, one gets, after simplifying the resulting sample complexity, our main result:

\begin{corollary}[Main testing result]\label{theo:eet:full_informal}
    There exists an algorithm for the Entropy Equivalence Testing problem with sample complexity
    \[ \tilde{O} \left( \min \left\{
   \frac{n^{3 / 4}}{\varepsilon}, \frac{n^{2 / 3}}{\varepsilon^{4 / 3}} \right\}\right) + O\left(\frac{\log^2(n/\varepsilon)}{\varepsilon^2} \right)
   \]
   and running time near-linear in the number of samples.
\end{corollary}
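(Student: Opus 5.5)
The plan is to run the algorithm of \cref{theo:eet:intro} or the baseline tester, whichever needs fewer samples; since $n$ and $\varepsilon$ are known, the choice is made before drawing any sample. Both testers solve \EET{} with constant failure probability, so the combined procedure inherits correctness. Its sample complexity is the minimum of the two bounds, and its running time is near-linear in the number of samples used.

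First I will make the baseline tester precise. On a domain of size $n$, entropy is uniformly continuous in total variation: by the Fannes--Audenaert-type bound,
\[
|H(p)-H(q)| \leq \mathrm{TV}(p,q)\log(n-1) + h(\mathrm{TV}(p,q)),
\]
where $h$ is the binary entropy function. Hence $|H(p)-H(q)|\geq\varepsilon$ forces $\mathrm{TV}(p,q)\geq \varepsilon' = c\,\varepsilon/\log(n/\varepsilon)$ for a small constant $c$. To see this, note that if $\mathrm{TV}(p,q)=t<\varepsilon'$, then $t\log n \leq c\varepsilon$ and $h(t)\leq 2t\log(1/t)\lesssim c\varepsilon$. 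Running the optimal TV closeness tester with parameter $\varepsilon'$ therefore solves \EET{} with
\[
O\!\left(\frac{n^{2/3}}{\varepsilon'^{4/3}}+\frac{\sqrt n}{\varepsilon'^2}\right)=\tilde O\!\left(\frac{n^{2/3}}{\varepsilon^{4/3}}+\frac{\sqrt n}{\varepsilon^2}\right)
\]
samples. Known testers such as~\cite{CDVV14} run in time near-linear in the number of samples.

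The main step is simplifying the minimum
\[
\min\left\{\tilde O\!\left(\frac{n^{3/4}}{\varepsilon}\right)+O\!\left(\frac{\log^2(n/\varepsilon)}{\varepsilon^2}\right),\; \tilde O\!\left(\frac{n^{2/3}}{\varepsilon^{4/3}}+\frac{\sqrt n}{\varepsilon^2}\right)\right\}
\]
into the claimed form. The key observation is that the term $\sqrt n/\varepsilon^2$ never matters. If $\varepsilon\geq n^{-1/2}$, then $n^{2/3}\varepsilon^{-4/3}\geq \sqrt n\,\varepsilon^{-2}$, so the baseline bound is $\tilde O(n^{2/3}/\varepsilon^{4/3})$. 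If instead $\varepsilon<n^{-1/2}$, I compare the first bound with the second. In this regime
\[
\frac{n^{3/4}}{\varepsilon}=\frac{n^{3/4}\varepsilon}{\varepsilon^{2}}<\frac{n^{1/4}}{\varepsilon^2}\leq\frac{\sqrt n}{\varepsilon^2},
\]
so the first bound is at most $\tilde O(\sqrt n/\varepsilon^2)$ and the first tester is the one used. Moreover $n^{3/4}/\varepsilon\leq n^{2/3}/\varepsilon^{4/3}$ exactly when $\varepsilon\leq n^{-1/4}$, which holds here. So in this case the first bound is $\tilde O(\min\{n^{3/4}/\varepsilon,\,n^{2/3}/\varepsilon^{4/3}\})+O(\log^2(n/\varepsilon)/\varepsilon^2)$.

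Combining the two cases, the complexity is at most
\[
\min\{A+L,\;B\}\leq \min\{A,B\}+L,
\]
where $A=\tilde O(n^{3/4}/\varepsilon)$, $B=\tilde O(n^{2/3}/\varepsilon^{4/3})$ and $L=O(\log^2(n/\varepsilon)/\varepsilon^2)$. This is the claimed bound. The polylogarithmic factors from $\varepsilon'$ are absorbed into the $\tilde O$.

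The only step needing care is this case analysis, in particular checking that the $\sqrt n/\varepsilon^2$ term of the baseline is dominated in every regime where the baseline would be chosen. The continuity step is standard.
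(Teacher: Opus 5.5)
Your overall route matches the paper's: run either the tester of \cref{theo:eet:intro} or the TV-reduction baseline, whichever needs fewer samples, and simplify the minimum. Your Fannes--Audenaert-type continuity bound plays the same role as the paper's use of \cite[Lemma 2.7]{CK11}. However, the first case of your simplification rests on a false inequality. The claim $n^{2/3}\varepsilon^{-4/3}\geq \sqrt{n}\,\varepsilon^{-2}$ is equivalent to $n^{1/6}\geq\varepsilon^{-2/3}$, i.e.\ to $\varepsilon\geq n^{-1/4}$, not to $\varepsilon\geq n^{-1/2}$. For example, at $\varepsilon=n^{-1/2}$ you get $\sqrt{n}/\varepsilon^2=n^{3/2}$ but $n^{2/3}/\varepsilon^{4/3}=n^{4/3}$. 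So for $n^{-1/2}\leq\varepsilon<n^{-1/4}$ the baseline costs $\tilde{\Theta}(\sqrt{n}/\varepsilon^2)$, which is up to a factor $n^{1/6}$ larger than the $\tilde{O}(n^{2/3}/\varepsilon^{4/3})$ you assert. Your argument as written therefore does not justify the bound in that range.

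The conclusion still holds, and the repair is to split at $\varepsilon=n^{-1/4}$ instead.
\begin{itemize}
\item If $\varepsilon\geq n^{-1/4}$, then $\sqrt{n}/\varepsilon^2\leq n^{2/3}/\varepsilon^{4/3}$ and also $B\leq A$. Hence the baseline costs $\tilde{O}(B)=\tilde{O}(\min\{A,B\})$.
\item If $\varepsilon<n^{-1/4}$, then $A\leq B$, as you yourself observe in your second case. The first tester alone then costs $A+L=\min\{A,B\}+L$, whatever the baseline costs.
\end{itemize}
The two crossovers coincide because
\[
\frac{\sqrt{n}/\varepsilon^{2}}{n^{2/3}/\varepsilon^{4/3}}=\Bigl(\frac{n^{3/4}/\varepsilon}{n^{2/3}/\varepsilon^{4/3}}\Bigr)^{-2}.
\]
This identity is the precise sense in which the $\sqrt{n}/\varepsilon^2$ term ``never matters.''

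A minor point on the continuity step: $2t\log(1/t)$ at $t=c\varepsilon/\log(n/\varepsilon)$ is of order $c\log(1/c)\,\varepsilon$, not $c\varepsilon$. Since this still tends to $0$ as $c\to 0$, the reduction to TV at distance $\Theta(\varepsilon/\log(n/\varepsilon))$ is unaffected.
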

We complement this positive result with an information-theoretic lower bound, showing that our algorithm has nearly optimal sample complexity:
\begin{restatable}[Lower bound]{theorem}{lowerbound}
\label{thm:eet-lb:intro}
Any algorithm for the Entropy Equivalence Testing problem must draw
\begin{align*}
 \widetilde\Omega\!\left(\min\!\left\{\frac{n^{3/4}}{\varepsilon},\,
    \frac{n^{2/3}}{\varepsilon^{4/3}}\right\}\right)
  + \Omega\!\left(\frac{\log^2 n}{\varepsilon^2}\right)
\end{align*}
total samples from $p$ and $q$.
\end{restatable}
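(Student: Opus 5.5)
The lower bound has three terms, and I will prove each separately, using the fact that the sample complexity is at least the maximum of the individual lower bounds. The $\Omega(\log^2 n/\varepsilon^2)$ term comes from a simple two-point (Le Cam) argument. The $\widetilde\Omega(\min\{n^{3/4}/\varepsilon, n^{2/3}/\varepsilon^{4/3}\})$ term comes from a moment-matching / Poissonization construction in the spirit of the closeness-testing lower bounds of~\cite{CDVV14} and the entropy-estimation lower bounds of Valiant--Valiant and Wu--Yang.

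\textbf{Step 1: the $\log^2 n/\varepsilon^2$ term.} Take $q=p$ in the null case. In the alternative, move a tiny amount of mass $\eta$ between a heavy element and a uniform spread over the other $n-1$ elements. Concretely, let $p_\theta = (1-\theta)\delta_1 + \theta\,U_{[2..n]}$. Then $\partial_\theta H(p_\theta)\approx \log(n/\theta)$, which is of order $\log n$ for $\theta$ a constant. So changing $\theta$ by $\eta\approx\varepsilon/\log n$ changes the entropy by $\varepsilon$. Meanwhile the KL divergence (equivalently the Hellinger distance squared) between $p_\theta$ and $p_{\theta+\eta}$ is $\Theta(\eta^2)$ for constant $\theta$. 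The instances are: $p=q=p_\theta$ versus $p=p_\theta$, $q=p_{\theta+\eta}$. Distinguishing them requires $\Omega(1/\eta^2)=\Omega(\log^2 n/\varepsilon^2)$ samples from $q$.

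\textbf{Step 2: the regime $n^{2/3}/\varepsilon^{4/3}$, relevant when $\varepsilon\gtrsim n^{-1/4}$ up to logs.} I will adapt the Valiant-style closeness lower bound. In this construction $p$ and $q$ share about $n$ light elements of mass roughly $1/n$. They also share a set of heavy elements of mass roughly $1/m$, where $m$ is the sample size. In the alternative, these heavy elements are assigned to $p$ or $q$ asymmetrically, in a way that is invisible to low-order collision statistics when $m\ll n^{2/3}/\varepsilon^{4/3}$. The extra requirement here is that the perturbation must shift the entropy by $\varepsilon$, not merely the total variation distance. To achieve this, I will place mass $\varepsilon$ on the perturbed part and arrange that its entropy contribution differs by $\Theta(\varepsilon)$ up to log factors. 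One way is to make $q$'s perturbed mass concentrated on fewer atoms than $p$'s, while matching the first few Poissonized moments. Indistinguishability then follows from the standard Wishart / mutual-information bound on Poissonized fingerprints.

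\textbf{Step 3: the regime $n^{3/4}/\varepsilon$, the main obstacle.} Here I need two priors over pairs $(p,q)$: one with $p=q$ and one with $|H(p)-H(q)|\ge\varepsilon$. I will build these from two distributions $\mu,\nu$ on $[0,1/n']$ whose first $L\approx\log n$ moments agree but whose expectations of $x\log(1/x)$ differ by $\varepsilon/n$, the same device as in entropy-estimation lower bounds. Setting $p\sim\mu^{\otimes n}$ and $q\sim\nu^{\otimes n}$ (suitably normalized) gives an entropy gap of $\varepsilon$. Matching $L$ moments makes the joint Poissonized fingerprint nearly indistinguishable from the null (where both $p$ and $q$ are drawn from $\mu$) unless $m$ is large. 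Because the tester sees two independent sample sets, I will analyze the fingerprint pairs and bound the $\chi^2$/TV distance via the moment-matching tail. The constraint that the entropy difference be only $\varepsilon$ (rather than a constant) should yield the $n^{3/4}/\varepsilon$ scaling. The delicate part is choosing the support scale and $L$ so that the $x\log x$ gap is $\varepsilon/n$ while the fingerprint divergence stays $o(1)$ for $m\ll n^{3/4}/(\varepsilon\,\mathrm{polylog})$. I would first try support scale $1/n'$ with $n'=n$, and tune the mass of the perturbed part. Finally, the min of the two bounds is obtained by using whichever construction is valid for the given $\varepsilon$.
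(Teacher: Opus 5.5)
\textbf{Verdict: genuine gap, mainly in Step~3.} Your Step~1 is fine. It plays the role of \cite[Lemma~2.7]{CY24}, which the paper invokes for the $\Omega(\log^2 n/\varepsilon^2)$ term: the samples from $p$ are uninformative, and $d_{\mathrm{KL}}(p_\theta\|p_{\theta+\eta})=O(\eta^2)$ with $\eta\asymp\varepsilon/\log n$.

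Step~3 fails, and tuning parameters will not fix it. Exact moment matching of marginal priors is the device behind entropy \emph{estimation} lower bounds. It does not make a two-sample problem hard when the null is $p=q$ \emph{exactly}. (As written, if $p$ and $q$ are drawn independently from $\mu$ in the null, the promise $p=q$ is not even satisfied.)

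The tester observes the pairs $(X_i,Y_i)$, whose law is governed by the \emph{joint} law of $(p_i,q_i)$. In the null this law is supported on the diagonal. In your alternative it is a coupling of $\mu\neq\nu$, so not even the degree-two mixed moment can match: $\mathbb{E}[p_iq_i]=\mathbb{E}_\mu[x^2]=\mathbb{E}_\nu[y^2]$ would force $\mathbb{E}[(p_i-q_i)^2]=0$, i.e., $\mu=\nu$.

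Concretely, take the $\ell_2$-closeness statistic $\sum_i[(X_i-Y_i)^2-X_i-Y_i]$ of \cite{CDVV14}. Under Poissonization its expectation is exactly $m^2\|p-q\|_2^2$ and its null standard deviation is $\sqrt{8}\,m\|p\|_2$, however many moments of $\mu,\nu$ agree. Consider the natural instantiation of your plan: a $\tilde\Theta(\varepsilon)$ fraction of atoms at scale $\tilde\Theta(1/n)$, drawn from a moment-matched pair as in \cite{WY16}. There, first-moment matching plus the modulus of continuity of $x\log(1/x)$ force $\|p-q\|_2^2\gtrsim\varepsilon/(n\,\mathrm{polylog}(n))$, while $\|p\|_2\lesssim\mathrm{polylog}(n)/\sqrt{n}$. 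So this statistic already detects the alternative with $\tilde O(\sqrt{n}/\varepsilon)$ samples, far below $n^{3/4}/\varepsilon$. The claim that matching $L$ moments makes the joint Poissonized fingerprint nearly indistinguishable is false here, and Step~3 as planned cannot give the $n^{3/4}/\varepsilon$ term.

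\textbf{What is missing.} Indistinguishability has to come from masking noise, not from moment matching: elements shared identically by $p$ and $q$, with $\Theta(1)$ expected counts, that the tester cannot tell apart from the perturbed ones.
\begin{itemize}
\item Step~2 gestures at this but puts the asymmetry on the heavy elements. In the closeness lower bounds you are adapting, the heavy elements are shared and serve only as masking. The perturbation must sit on the light elements, e.g., $q$ spreads the light mass over half as many atoms as $p$, shifting the entropy by $\Theta(\varepsilon)$. The indistinguishability then still has to be proved, not asserted.
\item For the $n^{3/4}/\varepsilon$ branch, the missing idea is dilution by a large shared atom. Set $p=(1-\alpha)\delta_0+\alpha p'$ and $q=(1-\alpha)\delta_0+\alpha q'$ for a fresh symbol $0$. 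Then $H(p)-H(q)=\alpha\bigl(H(p')-H(q')\bigr)$, and only an $\alpha$ fraction of samples is informative. Running the previous construction at gap $\varepsilon/\alpha=n^{-1/4}$, where its $n^{2/3}(\varepsilon/\alpha)^{-4/3}$ bound equals $n$, with $\alpha=\varepsilon n^{1/4}$, gives $n/\alpha=n^{3/4}/\varepsilon$.
\end{itemize}

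\textbf{The paper's route.} The paper avoids building any construction by a black-box reduction. It sets $P=P_{AC}$ and $Q=P_A\otimes P_C$, assembling each $Q$-sample from two fresh $P_{AC}$-samples, and observes $|H(P)-H(Q)|=I(A:C)$. It then imports the mutual-information testing lower bound of \cite[Corollary~6.2]{SST25} with $k_A=n/2$ and $k_C=2$, which gives both branches of the minimum at once.
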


Finally, we show how to leverage our algorithm for \EET~to obtain a closeness testing algorithm for bounded in-degree Bayes nets: we emphasize that, to the best of our knowledge, this is the \emph{first} testing algorithm for this problem, besides the ``na\"ive'' (and computationally inefficient) testing-by-learning approach that fully learns the two unknown distributions with $\tilde{O}(2^dn/\varepsilon^2)$ samples, in time $2^{\Omega(n)}$:
\begin{restatable}{theorem}{bayesnet}
    \label{theo:bayesnet:closeness:intro}
    There exists an algorithm for closeness testing (in total variation distance) of in-degree $d$ Bayesian networks with sample complexity
\[
s = \tilde{O} \left( \min \left\{ \frac{2^{3 d / 4} n}{\varepsilon^2},
   \frac{2^{2 d / 3} n^{4 / 3}}{\varepsilon^{8 / 3}} \right\} +
   \frac{n^2}{\varepsilon^4} \right) , 
\]
and running time $\tilde{O}(n^{d+1} s)$.
\end{restatable}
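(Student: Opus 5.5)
We follow the template of \cite{CY24}, which turned entropy identity testing into identity testing for Bayes nets, but now use \cref{theo:eet:intro} and \cref{theo:eet:full_informal} for the entropy step. Let $P,Q$ be the two unknown in-degree-$d$ Bayes nets over $\{0,1\}^n$. The key structural fact is the standard decomposition (as in \cite{DaskalakisDK18,CDKS20}): if $d_{\mathrm{TV}}(P,Q)\ge\varepsilon$, then $\mathrm{KL}$-type quantities are large, namely $\mathrm{KL}(P\|Q)\gtrsim\varepsilon^2$ and likewise for the mixture $M=\frac12(P+Q)$ ($\mathrm{JS}(P,Q)\gtrsim \varepsilon^2$). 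The plan is to express such a divergence as a signed combination of entropies of small marginals, each on $O(d)$ coordinates and hence on a domain of size $2^{O(d)}$. Each of these entropies can then be compared across distributions with \EET.

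The first step is to handle the unknown structures. Neither $P$ nor $Q$ need share a graph, so we cannot simply use ``the'' parent sets. Instead, for every coordinate $i$ and every candidate parent set $S\subseteq[n]$ with $|S|\le d$, we consider the entropies $H(P_{S\cup\{i\}})$, $H(P_S)$ and the analogous quantities for $Q$. Note that a sample from the mixture $M$ is obtained by flipping a coin between $P$ and $Q$. The tester runs \EET{} between $P_{T}$ and $Q_{T}$, and between $P_T$ and $M_T$, for all sets $T$ of size at most $d+1$. There are $n^{d+1}$ such tests, which accounts for the running time $\tilde O(n^{d+1}s)$. The samples are reused across tests, and a union bound over the tests costs only logarithmic factors.

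The second step is completeness: if $P=Q$, then all marginals coincide, so every test accepts. For soundness, let $G$ be the structure of $P$. Using the chain rule we write
\[ H(P)=\sum_i \big(H(P_{\Pi_i\cup\{i\}})-H(P_{\Pi_i})\big). \]
We then bound $\mathrm{KL}(P\|M)+\mathrm{KL}(Q\|M)$ by sums over $i$ of local conditional divergences along the parents of $P$ and of $Q$. Since $M$ itself is not a degree-$d$ net, we use the standard inequality that KL to any distribution is at least KL to its $G$-projection, minimised at the local conditionals. This shows that if every local entropy is within $\varepsilon'$, then the total divergence is at most $O(n\varepsilon')$. Setting $n\varepsilon'\approx\varepsilon^2$, i.e.\ $\varepsilon'=\varepsilon^2/n$, a far pair must exhibit some marginal $T$ whose entropies differ by $\ge\varepsilon'$. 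That difference is detected with \cref{theo:eet:full_informal} on domain size $2^{d+1}$.

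Plugging in gives the bound:
\[ \min\Big\{\frac{2^{3d/4}}{\varepsilon'},\frac{2^{2d/3}}{\varepsilon'^{4/3}}\Big\}+\frac{1}{\varepsilon'^2}
=\min\Big\{\frac{2^{3d/4}n}{\varepsilon^2},\frac{2^{2d/3}n^{4/3}}{\varepsilon^{8/3}}\Big\}+\frac{n^2}{\varepsilon^4}, \]
up to log factors, which matches the statement. The main obstacle is the soundness inequality. The difficulty is converting $d_{\mathrm{TV}}\ge\varepsilon$ into a linear combination of $O(n)$ local entropy differences, each involving sets of size $\le d+1$, when $P$ and $Q$ have different unknown graphs and the natural divergence involves the non-Bayes-net mixture. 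I would first try the Jensen--Shannon route: $\mathrm{JS}=H(M)-\frac12H(P)-\frac12H(Q)$, and upper-bound $H(M)$ by the $G_P$-factored entropy of $M$, namely $\sum_i H(M_{i\mid\Pi_i^P})$. Each of these terms is compared with $P$'s corresponding local entropy via tests on $P_T$ versus $M_T$ marginals, and symmetrically for $Q$.
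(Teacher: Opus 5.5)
Your Jensen--Shannon route is correct and closes the soundness step you flagged as the main obstacle. It is a genuinely different, and shorter, argument than the paper's. Spelled out:
\begin{itemize}
\item Pinsker applied to $d_{\mathrm{TV}}(P,M)=d_{\mathrm{TV}}(P,Q)/2$ gives $\mathrm{JS}(P,Q)=\tfrac12\mathrm{KL}(P\|M)+\tfrac12\mathrm{KL}(Q\|M)\ge \varepsilon^2/2$.
\item Along a topological order of $G_P$, the chain rule plus ``conditioning reduces entropy'' gives $H(M)\le\sum_i H_M(X_i\mid\Pi_i^P)$ for the arbitrary distribution $M$. Meanwhile $H(P)=\sum_i H_P(X_i\mid\Pi_i^P)$ holds exactly.
\item Hence $H(M)-H(P)\le 2n\gamma_P$ with $\gamma_P:=\max_{|T|\le d+1}|H(P_T)-H(M_T)|$. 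Symmetrically, $H(M)-H(Q)\le 2n\gamma_Q$ using $G_Q$.
\end{itemize}
Thus $\mathrm{JS}(P,Q)\le n(\gamma_P+\gamma_Q)$. Since $\gamma_Q\le\gamma_P+\max_T|H(P_T)-H(Q_T)|$, your two families of \EET{} tests at threshold $\varepsilon'=\varepsilon^2/(8n)$ cannot all face gaps below $\varepsilon'$ when $d_{\mathrm{TV}}(P,Q)\ge\varepsilon$. Completeness is immediate because $P=Q$ forces $M=P$.

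You should drop the earlier sentence about KL being ``at least KL to its $G$-projection.'' Read as $\mathrm{KL}(P\|M)\ge\mathrm{KL}(P\|M_G)$, it is false: take $n=2$, $G$ empty, $P$ the point mass at $00$, and $M$ uniform on $\{00,11\}$. In any case, an inequality of that shape lower-bounds the divergence, whereas you need an upper bound. The entropy computation above is all that is required.

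The paper proceeds differently. It runs \EET{} only between $\tilde p_T$ and $\tilde q_T$, both smoothed by mixing with the uniform distribution. It invokes \cite[Lemma~3.4]{CY24} to get that $q$ is $O(\varepsilon^2)$-close in KL to its projection onto $p$'s DAG $G$, and transfers this to the smoothed distributions (\cref{lemma:mixture_preserve_approx_DAG}). It then bounds $\mathrm{KL}(\tilde p\|\tilde q_G)-\mathrm{KL}(\tilde p\|\tilde p_G)=\Omega(\varepsilon^2)$ by a sum of $n$ local KL divergences, so one of them is $\Omega(\varepsilon^2/n)$. That local KL is converted to squared Hellinger via \cite[Lemma~3.1]{SST25}, which is the only place the probability floor from smoothing is needed, at a logarithmic loss. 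Finally, a second family of local Hellinger closeness testers catches it.

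Your argument removes the structural lemma, the smoothing, the KL-to-Hellinger conversion and the Hellinger testers, at the cost of roughly doubling the number of \EET{} calls. The \EET{} cost at threshold $\Theta(\varepsilon^2/n)$ on domain $2^{d+1}$ is exactly what dominates the paper's bound, so you obtain the same sample complexity and the same $\tilde O(n^{d+1}s)$ running time. What the paper's route buys is direct reuse of the \cite{CY24} framework: an entropy ``structure check'' is kept separate from a local ``distance check,'' and the paper ports the latter to identity testing by swapping in a Hellinger identity tester.
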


Interestingly, the techniques underlying~\cref{theo:bayesnet:closeness:intro} can also be easily adapted to obtain an algorithm for identity testing of Bayes nets, improving upon the work of~\cite{CY24}: see~\cref{cor:improve:upon:cy24}.

%% file: equivalence.tex
\subsection{Technical overview}
We now provide an overview of the techniques and main ideas underlying our results.
\label{ssec:overview}
\subsubsection{Upper bound for \EET}
A bound of $\tilde{O} \left( \max \left\{ \frac{n^{2 / 3}}{\varepsilon^{4 / 3}},
\frac{\sqrt{n}}{\varepsilon^2} \right\} \right)$ readily follows from a standard inequality, allowing one to reduce the entropy difference problem to closeness testing in TV: namely, when $d_{\tmop{TV}} (p, q) \leqslant \frac{1}{2}$, then $| H (p) - H (q) |
\leqslant d_{\tmop{TV}} (p, q) \log \frac{n}{d_{\tmop{TV}} (p, q)}$ \cite[Lemma 2.7]{CK11}. Another straightforward upper bound is obtained by estimating the entropy, a harder task, which takes $\Theta \left( \frac{n}{\varepsilon \cdot \log n} + \frac{\log^2
n}{\varepsilon^2} \right)$ samples \cite{WY16}. This hints at the possibility of getting a better bound of $O(n^c/\varepsilon + \log^2 n/\varepsilon^2)$ for some absolute constant $c<1$.
Surprisingly, the final upper bound of $\tilde{O} \left( \min \left\{ \frac{n^{2 / 3}}{\varepsilon^{4 / 3}},
\frac{n^{3 / 4}}{\varepsilon} \right\}\! +\! \frac{1}{\varepsilon^2} \right)$ is related to the squared Hellinger distance (and not the total variation). This is a result of considering two testers that can solve \EET~independently, and choosing the one with lower sample complexity. The main contribution of our work is providing an alternative tester to the TV reduction tester above.

\paragraph*{An $\tilde{O}(\frac{n^{3/4}}{\varepsilon}+\frac{1}{\varepsilon^2})$ tester.} Our analysis starts with a reduction to testing both farness (vs.~equality) in Hellinger distance and a cross-entropy difference term (with respect to the uniform mixture) via some algebraic manipulations (see \Cref{lemma:reduce_to_Hellinger_or_entropy}). The problem thus becomes testing whether Hellinger distance is large (which we know costs $O(\min\{\frac{n^{3/4}}{\varepsilon},\frac{n^{2/3}}{\varepsilon^{4/3}}\}) \leqslant O(\frac{n^{3/4}}{\varepsilon})$) %
and testing whether the cross-entropy term is large (since one of them will be large if the entropy difference is large). A natural idea is of course to employ a plug-in estimator for the cross entropy difference (see~\eqref{eq:cross_entropy_expression}). Unfortunately, this empirical estimator has an inherent\footnote{In the works of \citet{WY16}, their plug-in estimator also exhibits similar bias in the analysis, and they resort to machinery from polynomial approximation to reduce the bias (which will increase the variance the higher the degree of the polynomial).} bias of the form $\sum_{i \in [n]} \frac{| p_i - q_i |}{m (p_i + q_i)}$, where $m$ is the (equal) number of samples we take from $p$ and $q$ to compute the estimate.
Our argument to deal with the bias is as follows: If the expectation of the plug-in estimator has \emph{low bias} (at most $O(\varepsilon)$), then we can rely on its estimate alone. On the other hand, if it has \emph{high bias} (at least $\Omega(\varepsilon)$), we hope to catch it and reject (i.e., declare that $|H(p)-H(q)| > \epsilon$) early.

\paragraph*{Low mass regime: Reduction to TV.} One main challenge to designing this test is that the bias is impossible to test without a lower bound on $p_i+q_i$ (the variance of estimating the bias could be huge). The key to passing this hurdle is an idea inspired by a different algorithm from \citet{DK16}, whose analysis implies that for the elements $i\in[n]$ where $q_i$ (or $p_i$) is small, we can perform tests more efficiently (in TV) via access to its conditional distribution (as efficiently as $\tilde{O}({n^{3/4}}/{\varepsilon})$), with the intuition being: The TV distance between the conditional distributions is actually larger (inverse of its mass size $\tilde{\Omega}({\varepsilon}/{q(T)})$, where $T$ is the set of small-mass elements). Via a reduction to TV from cross-entropy difference, we can also test farness on these elements with $\tilde{O}({n^{3/4}}/{\varepsilon})$ samples.

\paragraph*{Detecting bias in the high mass regime.} Capitalizing on this, we will consider exclusively elements with large mass: $p_i \geqslant \tilde{\Omega}( {\varepsilon}/{n^{3/4}})$. With a bound on $m (p_i+q_i) \geqslant \Omega(1)$ (since we take $m=\tilde{O}(\frac{n^{3/4}}{\varepsilon})$), this reduces to testing farness in TV distance, which can be performed with $O(\sqrt{n}/\varepsilon^2)$ samples (to bound the bias term by $O(\varepsilon)$), rather than the $O(n^c/\varepsilon)$ for some $c<1$ that we set out for. Luckily our problem can be easier (depending on relation between $n \text{ and }\varepsilon$) than testing in TV (see \Cref{lemma:check_if_Z_has_small_bias}). Intuitively, this is related to testing in the instance-optimal setting \cite{ValiantV14,DK16}~--~one can test with sample complexity as a function of $p, q$ (for our case $p,q$ has exclusively large probability masses), rather than as a function of the domain size (the worst case over all possible $p, q$). Interestingly, to detect the bias, we are using the same test statistic (with different threshold) first used in \citet{CDVV14} to test in TV distance and subsequently in \citet{DaskalakisKW18} to test in Hellinger distance.

Had we not made a `reject' decision after the bias check in the high mass regime (nor rejected due to its low mass cross-entropy difference), it would suggest that the bias on the high mass regime is at most $O(\varepsilon)$ and therefore, the estimator for $\sum_i (p_i - q_i) \log((p_i+q_i)/2)$ is reliable: A standard analysis of expectation and variance followed by an application of Chebyshev's inequality then yields \Cref{theo:eet:intro}.

\subsubsection{Lower bound for \EET}
The lower bound is a combination of $\tilde{\Omega}(\min\{\frac{n^{2/3}}{\varepsilon^{4/3}},\frac{n^{3/4}}{\varepsilon}\})$ and $\Omega(\frac{\log^2 n}{\varepsilon^2})$. The former is derived via a reduction from testing mutual information to entropy difference. The task of testing mutual information (\MI) is as follows: if $I(X:Y)=0$, then accept with probability $0.99$; If $I(X:Y) \geqslant \varepsilon$, then reject with probability $0.99$. Given the close relation between mutual information and entropy: $I(X:Y) = H(X) + H(Y) - H(X, Y)$, one can reduce the problem of \MI~testing to \EET~by treating $p$ as the joint distribution of $X,Y$ and $q$ as the product of $X$ and $Y$ (we can simulate sample access to the product of $X$ and $Y$ with two samples from the joint distribution). Therefore, the existing lower bound from \cite[Corollary 6.2]{SST25} holds for our problem. The latter lower bound of $\Omega(\frac{\log^2 n}{\varepsilon^2})$ comes from \cite[Lemma 2.7]{CY24}, given the reduction from identity testing to equivalence testing.

\subsubsection{Application to equivalence testing of Bayes nets}
As laid out by \citet{CY24}, one can utilize testing entropy difference as a subroutine to testing Bayes nets. The key of the analysis is derived via the celebrated Chow-Liu decomposition~\cite{ChowL68}: If two Bayes nets $p$ and $q$ have similar local entropies, then they must share approximately similar structures (their corresponding DAGs are close in some sense; see \cite[Lemma 3.4]{CY24}).
\ifnum\randomcr=1
Once we confirm that they share a similar structure, the problem boils down to checking local (every subset of random variables of size at most $d+1$) KL-divergences.
\else
Once we confirm that they share a similar structure, the problem boils down to checking local (every subset of random variables of size at most $d+1$) KL-divergences (see \eqref{eq:local_KL_divergence_large}).
\fi
While KL-divergences are known to be untestable \cite[Theorem 7]{DaskalakisKW18} in the extreme cases when $q(x)=0$, there is hope if one can obtain a lower bound on $q(x)$. In fact, we can reduce testing in KL to testing in Hellinger distance with an $O \left( \underset{x}{\max} \log \frac{p(x)}{q(x)} \right)$ factor overhead using \cite[Lemma 3.1]{SST25}. To get a lower bound on $q(x)$, we use a standard technique of testing the mixture of $q$ and the uniform distribution.

\subsection{Related works}
\paragraph*{Estimating entropy.} Entropy estimation is a well-studied problem in both the information theory \cite{WuY15,WY16,JiaoVHW15} and property testing \cite{ValiantV11,BatuDKR05} communities, with the near-optimal sample complexity $\Theta(\frac{n}{\varepsilon \log n} + \frac{\log^2 n}{\varepsilon^2})$ being eventually established by \citet{WY16,JiaoVHW15} (for discrete distributions). As we are considering the testing problem, our upper bound is less costly compared to its learning (estimation) variant. Moreover, other variants of entropy have been studied, such as Rényi entropy \cite{AcharyaOST15}. The more general problem of estimating functionals of distributions was also considered by \citet{JiaoVHW15}.

\paragraph*{Identity testing and closeness testing.}
The standard formulation of identity testing asks the following problem: Given a full description of a hypothesis $q$, how many i.i.d.~samples are needed from $p$ to distinguish between $p=q$ and $d_{\tmop{TV}}(p,q)\geqslant \varepsilon$, where $p$ and $q$ are both distributions over support of size $[n]$? This problem is well-known to have a sample complexity of $\Theta\left({\sqrt{n}}/{\varepsilon^2}\right)$, a much lower cost compared to fully learning every parameter of $p$, which takes $\Theta({n}/{\varepsilon^2})$ samples. A similar problem can be asked if one is instead only provided sample access to $p$ and $q$, both unknown. This is the closeness testing problem (in TV) and it is known that a $\Theta \left(\max\{\frac{\sqrt{n}}{\varepsilon^2},\frac{n^{2/3}}{\varepsilon^{4/3}}\} \right)$ sample complexity is necessary and sufficient, making this a slightly harder problem than identity testing while still saving on samples compared to fully learning $p$ and $q$. Our problem differs from the standard formulation by testing in the entropy difference rather than in TV distance.
Most relevant to our work is the work of \citet{CY24}, which solves the problem of entropy identity testing (up to polylogarithmic factors) with a sample complexity of $\tilde{\Theta}(\frac{\sqrt{n}}{\varepsilon} + \frac{1}{\varepsilon^2})$. Another relevant work is that of \citet{DaskalakisKW18}, which asks the question of testing in different distances, e.g., distinguishing between $d_{\tmop{KL}}(p \| q) \leqslant \varepsilon^2/100$ and $d_{\tmop{H}}(p,q) \geqslant \varepsilon$ for identity testing (sample access to $p$ and the full description of $q$). Interestingly, as shown in the same work, equivalence testing in squared Hellinger distance, i.e., distinguishing between $p=q$ and $d^2_H(p,q) \geqslant \varepsilon$ has a similar cost of $\Theta\left(\min\{\frac{n^{2/3}}{\varepsilon^{4/3}}, \frac{n^{3/4}}{\varepsilon}\}\right)$ (though perhaps not so surprising, as we are reducing to testing in squared Hellinger square in one of the cases we handle).
Our work adds to their results for equivalence testing in entropy difference.

\paragraph*{Testing mutual information.} Testing mutual information (\MI) asks the following question: Given sample access to a (discrete) distribution $p$ over two random variables $X, Y$ with a domain of size $|\Sigma_X|$ and $|\Sigma_Y|$ respectively, how many samples do we need from $p$ to distinguish between $I(X:Y)=0$ and $I(X:Y)\geqslant\varepsilon$?
The study of this problem was initiated by \citet{BhattacharyyaGPTV23} to solve the problem of learning a special variant of Bayes net (trees). While their \MI~tester is sufficient for learning trees-structured distributions near-optimally, their upper bound is loose in terms of dependence on $|\Sigma_X|$ and $|\Sigma_Y|$: $O(\frac{|\Sigma_X|\cdot|\Sigma_Y|}{\varepsilon})$. More recently, \citet{SST25} gave a near-optimal bound of $\tilde{\Theta}(\min \{\frac{|\Sigma_X|^{3/4}|\Sigma_Y|^{1/4}}{\varepsilon},\frac{|\Sigma_X|^{2/3}|\Sigma_Y|^{1/3}}{\varepsilon^{4/3}}\})$ (assuming $|\Sigma_X|\geqslant|\Sigma_Y|$). We note that this bound bears similarity (e.g., setting $|\Sigma_Y|=O(1)$) to ours due to the close relation between \MI~and entropy: $I(X:Y) = H(X) + H(Y) - H(X, Y)$. Indeed, \MI~testing can be reduced to \EET~naturally (as outlined above).
Before these works, \citet{BatuFFKRW01,CanonneDKS18} studied the problem of (conditional) independence testing in the TV distance.

\paragraph*{Testing Bayesian networks.} A Bayesian network (Bayes net) is a high-dimensional distribution over $n$ random variables that factorizes into $n$ conditional probabilities: $p(X_1,\ldots,X_n)=\prod_{i=1}^n p(X_i \mid \Pi_i)$, where $\Pi_i \in [n]$ is called the parent set of $X_i$. This factorization is associated with a directed acyclic graph (DAG) $G=(V,E)$ ($V=\{X_1,\ldots,X_n\}$), and if $X_j \in \Pi_i$, then the edge $(X_j, X_i) \in E$ (with $X_j$ being one of $X_i$'s parents). A distribution $p$ is called a Bayes net if its joint probability density function factorizes according to $G$ and $p$ is called Markov with respect to $G$.
Learning Bayesian networks from samples is a fundamental problem in learning theory, and there is a plethora of evidence that this problem is hard computationally \cite{ChickeringHM04,Dasgupta99,0001CGM25}.  On the other hand, time-efficient algorithms (polynomial in $n$, the number of random variables) do exist in some special cases \cite{ChooY0C24,BhattacharyyaGGSV25,BhattacharyyaGPTV23}.

Identity testing and closeness testing of Bayes nets were systematically studied by \citet{CDKS20}, though their stronger upper bounds only consider relaxed version of Bayes nets where both $p$ and $q$ are required be \emph{balanced} and \emph{non-degenerate}. While they  also provide a more general upper bound (without assuming balancedness and non-degeneracy for Bayes nets), they still require that the node ordering be the same between $p$ and $q$, a non-trivial assumption. Our upper bound does not make any assumption other than the bounded in-degree $d$, which is a natural and necessary parameter to consider, since testing unbounded degree Bayes nets is exponentially hard: $\Omega(\frac{2^{n/2}}{\varepsilon^2})$ (e.g., see \cite[Theorem B.1]{BCY22}). The strongest upper bound that we are aware of is by \citet{DaskalakisP17}, which can solve the identity testing\footnote{Note that their setting of identity testing actually corresponds to equivalence testing in our setting.} problem in $\tilde{O}(\frac{2^{d/2} n}{\varepsilon^2})$ samples,\footnote{Implicitly, using the Hellinger testing result of \cite[Theorem 1]{DaskalakisKW18}.} under the assumption that $p$ and $q$ have the same structure (Markov with respect to the same DAG). For equivalence testing, under the same assumption, they give a bound of $\tilde{O}(\min\{\frac{2^{3d/4} n}{\varepsilon^2}, \frac{2^{2d/3} n^{4/3}}{\varepsilon^{8/3}} \})$.\footnote{Implicitly, using the optimal Hellinger equivalence test \cite{DaskalakisKW18}.} Our results match their bounds if $d\gg\log (n/\varepsilon)$, and our algorithm will work without the extra assumption that $p$ and $q$ share the same structure. 
A lower bound of $\Omega(\frac{2^{d/2}n}{\varepsilon^2})$ when $d \leqslant \log n$~\cite[Theorem 4.1]{BCY22} holds for identity testing of Bayes nets and thus also for its closeness testing variant (our setting).

\subsection{Preliminaries}
Define the factorial moment by $(x)_m \assign \frac{x!}{(x - m) !}$, which
gives an unbiased estimator for the monomials of the Poisson mean: namely, if $X\sim\operatorname{Poi}(\lambda)$, then $\mathbb{E}
[(X)_m] = \lambda^m$ for all $m \geq 0$. More generally, we have the following fact, whose proof we provide in Appendix~\ref{app:deferred} for completeness:

\begin{restatable}{fact}{factorialmoment}\label{fact:factorial_moment_functional_shifting}
Let $X$ be a Poisson random variable with parameter $\lambda$, and $f\colon\mathbb{R}\to\mathbb{R}$ be any function, then, we have that $\mathbb{E}
[(X)_m \cdot f (X)] = \lambda^m \cdot \mathbb{E} [f (X + m)]$. In particular, choosing $f\equiv 1$, we have that $\mathbb{E} [(X)_m] = \lambda^m$ for all $m\geq 0$.
\end{restatable}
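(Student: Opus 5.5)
The plan is to compute the expectation directly from the Poisson probability mass function and then shift the summation index. Throughout we assume the expectations involved are well defined, i.e., the series converges absolutely. This holds for instance if $f$ is bounded or of polynomial growth, since Poisson tails decay superexponentially. We use the convention that $(x)_m = x(x-1)\cdots(x-m+1)$, which vanishes for integers $0 \le x < m$.

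First, write
\[
\mathbb{E}[(X)_m f(X)] = \sum_{k=0}^{\infty} e^{-\lambda}\frac{\lambda^k}{k!}\,(k)_m\, f(k).
\]
The terms with $k<m$ vanish because $(k)_m=0$ there. For $k\ge m$ we have $(k)_m/k! = 1/(k-m)!$, so the sum becomes
\[
\sum_{k=m}^{\infty} e^{-\lambda}\frac{\lambda^k}{(k-m)!} f(k).
\]

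Next, substitute $j=k-m$ and factor out $\lambda^m$. This gives
\[
\lambda^m \sum_{j=0}^\infty e^{-\lambda}\frac{\lambda^j}{j!} f(j+m) = \lambda^m\,\mathbb{E}[f(X+m)].
\]
Absolute convergence of the original series, which is assumed, justifies the reindexing. The special case $f\equiv 1$ then gives $\mathbb{E}[(X)_m]=\lambda^m$ immediately.

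There is no real obstacle in this argument. The only points needing care are that the terms with $k<m$ vanish and that the integrability assumption is stated explicitly, so that the manipulations of the series are legitimate.
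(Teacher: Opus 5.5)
Your proof is correct and is essentially the paper's argument: the paper does the same PMF computation and index shift for $m=1$, then gets general $m$ by induction via $(X)_m = X\,(X-1)_{m-1}$, whereas you handle all $m$ at once using $(k)_m/k! = 1/(k-m)!$, which is slightly more direct. Your explicit integrability caveat is worth keeping, since the statement's ``any function $f$'' only makes sense when the expectations exist (the index shift itself needs no such hypothesis).
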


\begin{restatable}[Folklore]{fact}{distinguish_small_vs_big}
\label{fact:distinguish_small_vs_big}
For any $0 < \alpha \leqslant 1/2$ and $\varepsilon \in (0,1]$, testing whether the bias of a coin is at most $\alpha$ or at least $\alpha (1 + \varepsilon)$, with probability at least $1-\delta$, can be done with $n=O\left(\frac{1}{\alpha \varepsilon^2}\log \frac{1}{\delta} \right)$ samples.
\end{restatable}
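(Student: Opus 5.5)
The statement is the folklore coin-bias test: distinguish bias $\le\alpha$ from bias $\ge\alpha(1+\varepsilon)$ with $O\bigl(\frac{1}{\alpha\varepsilon^2}\log\frac1\delta\bigr)$ samples. I would prove it with the natural threshold tester plus multiplicative Chernoff bounds. Draw $N$ independent tosses and let $S$ be the number of heads. Set the threshold $\tau=\alpha(1+\varepsilon/2)N$, and accept ``bias at most $\alpha$'' if and only if $S<\tau$.

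\textbf{Step 1: monotonicity.} By a standard coupling, the law of $S\sim\mathrm{Bin}(N,b)$ is stochastically increasing in the bias $b$. So it suffices to bound the error in the two extreme cases $b=\alpha$ and $b=\alpha(1+\varepsilon)$.

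\textbf{Step 2: the case $b=\alpha$.} Here $\mathbb{E}[S]=\alpha N=:\mu$. The upper-tail multiplicative Chernoff bound gives
\[
\Pr[S\ge(1+\varepsilon/2)\mu]\le \exp\!\left(-\frac{(\varepsilon/2)^2\mu}{2+\varepsilon/2}\right)\le \exp\!\left(-\frac{\varepsilon^2\alpha N}{10}\right),
\]
using $\varepsilon\le1$.

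\textbf{Step 3: the case $b=\alpha(1+\varepsilon)$.} Now $\mu'=\alpha(1+\varepsilon)N$. The threshold is $(1-\gamma)\mu'$ with
\[
\gamma=\frac{\varepsilon/2}{1+\varepsilon}\ge\frac{\varepsilon}{4}.
\]
The lower-tail Chernoff bound then gives
\[
\Pr[S<\tau]\le\exp(-\gamma^2\mu'/2)\le\exp(-\varepsilon^2\alpha N/32).
\]

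\textbf{Step 4: choice of $N$.} Taking $N=\left\lceil\frac{32}{\alpha\varepsilon^2}\ln\frac1\delta\right\rceil$ makes both error probabilities at most $\delta$, which gives the claimed sample complexity.

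The hypothesis $\alpha\le1/2$ is only needed to keep $\alpha(1+\varepsilon)\le1$ meaningful; the Chernoff bounds do not otherwise use it. The one step needing care is using the \emph{multiplicative} form of Chernoff, with the exponent proportional to the mean $\alpha N$. This is what yields the $1/\alpha$ dependence. An additive Hoeffding bound would give only the weaker $1/(\alpha^2\varepsilon^2)$ rate.
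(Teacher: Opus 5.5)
Your proof is correct. The threshold $\tau=\alpha(1+\varepsilon/2)N$ works, and stochastic monotonicity reduces the analysis to the biases $\alpha$ and $\alpha(1+\varepsilon)$. Both multiplicative Chernoff bounds check out: you get exponents $\varepsilon^2\alpha N/10$ and $\varepsilon^2\alpha N/32$, using $\gamma=\frac{\varepsilon/2}{1+\varepsilon}\ge\varepsilon/4$ for $\varepsilon\le 1$, which gives $N=O\bigl(\frac{1}{\alpha\varepsilon^2}\log\frac{1}{\delta}\bigr)$.

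The paper states this fact as folklore and gives no proof. Your threshold test with multiplicative Chernoff bounds is the standard argument it relies on. The paper applies it in Lemma~\ref{lemma:filtering_small_elements} with a constant-factor gap and $\delta=1/(100n)$.
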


\paragraph*{Distances and norms.} We will use a number of standard distance measures for two distributions $p, q$ over $[n]$, and various known inequalities between them, e.g., see \cite{SasonV16}.  
The \emph{squared Hellinger distance} is
\begin{align*}
   d_H^2(p, q) := \frac{1}{2}\sum_{i \in [n]}\!\left(\sqrt{p_i} - \sqrt{q_i}\right)^2
\end{align*}
The \emph{total variation distance} is
$d_{\tmop{TV}}(p, q) := \frac{1}{2}\|p - q\|_1 = \frac{1}{2}\sum_{i \in [n]}|p_i - q_i|$.
The \emph{$\ell_2$ distance} is
$\|p - q\|_2 := \bigl(\sum_{i \in [n]}(p_i - q_i)^2\bigr)^{1/2}$,
and the \emph{$\chi^2$-divergence} of $p$ from $q$ is
$d_{\chi^2}(p, q) := \sum_{i \in [n]}(p_i - q_i)^2/q_i$.
These satisfy the chain of inequalities
\begin{align*}
   d_{\tmop{TV}}(p, q) \leq \sqrt{2}\, d_H(p, q)
   \qquad \text{and} \qquad
   d_H^2(p, q) \leq d_{\tmop{TV}}(p, q)
   \leq \sqrt{d_{\chi^2}(p, q)},
\end{align*}
and the equivalence $\frac{1}{2}\sum_{i}(p_i - q_i)^2/(p_i + q_i) = \Theta(d_H^2(p,q))$.
The \emph{Kullback--Leibler (KL) divergence} is $d_{\tmop{KL}}(p \| q) := \sum_{i}p_i \log(p_i/q_i)$,
which satisfies $d_{\tmop{KL}}(p\|q) \leq d_{\chi^2}(p, q)$ and
Pinsker's inequality $d_{\tmop{TV}}(p,q) \leq \sqrt{d_{\tmop{KL}}(p\|q)/2}$.

\paragraph*{Notation.} In what follows, for a given integer $n\geq 1$, we denote by $[n]$ the set $[n] = \{1,2,\dots,n\}$, and by $U_n$ the uniform distribution on $[n]$. When clear from context, we will omit the subscript, and write $U$ instead of $U_n$. Given a probability distribution $p$ over $[n]$ and a set $S\subseteq[n]$ such that $p(S) > 0$ we write $p(\cdot \mid S)$ for the conditional distribution of $p$ on $S$. For the high dimensional (Bayes net) section, we use $p_{T}$ to denote the marginalization of $p$ over $T$, where $T$ is a subset of $\{X_1, \ldots, X_n\}$, e.g., $p_{X_1,X_2}$ denotes the marginal distribution of $(X_1, X_2)$. We also use $\Pi_i$ to denote the parent set of the random variable $X_i$ in a Bayes net.

Throughout the paper, we use $\log$ for the natural logarithm, and adopt the standard asymptotic notation $O(\cdot),\Omega(\cdot),\Theta(\cdot)$, as well as the closely related $\tilde{O}(\cdot),\tilde{\Omega}(\cdot),\tilde{\Theta}(\cdot)$, which hide polylogarithmic factors in their argument.

\input{upperbound}
\input{lowerbound}

%% file: upperbound.tex
\section{An \texorpdfstring{
    $\tilde{O}(\frac{n^{3/4}}{\varepsilon} + \frac{\log^2(n/\varepsilon)}{\varepsilon^2})$}{
    O~(n\string^(3/4)/eps + log\string^2(n/eps)/eps\string^2)} Upper Bound for \EET}
In this section, we establish our main theorem,~\cref{theo:eet:intro}. We begin with a simple algorithmic ``trick'' that allows us to reduce the task to one where both $p,q$ put at least some minimum probability $\eta > 0$ on every element of the domain:
\begin{lemma}\label{lemma:mass_floor}
Let $p, q$ be distributions over $[n]$, and set $\eta := \varepsilon / (4 \log(n/\varepsilon))$.
Let $\tilde p := (1-\eta) p + \eta U$ and likewise for $\tilde q$. Then
$\tilde p_i, \tilde q_i \geq \varepsilon / (4 n \log(n/\varepsilon))$
for every $i \in [n]$, and
\begin{align*}
\left | |H(\tilde p) - H(\tilde q)| - |H(p) - H(q)| \right | \leq \varepsilon / 2.
\end{align*}
Moreover, one sample from $\tilde{p}$ (resp.~$\tilde{q}$) can be simulated from one sample from $p$ (resp. $q$).
\end{lemma}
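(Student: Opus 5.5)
The statement has three parts: a mass floor, an entropy perturbation bound, and sample simulation. The first and third are immediate. For the floor, observe that $\tilde p_i \ge \eta/n = \varepsilon/(4n\log(n/\varepsilon))$, and likewise for $\tilde q_i$. For the simulation, given a sample $x\sim p$, we flip an independent coin with bias $\eta$. On heads we output a fresh uniform element of $[n]$; otherwise we output $x$. This produces exactly one sample from $\tilde p$, and the same procedure works for $q$.

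The substance is the entropy bound. By the reverse triangle inequality,
\[
\bigl||H(\tilde p)-H(\tilde q)|-|H(p)-H(q)|\bigr| \le |H(\tilde p)-H(p)| + |H(\tilde q)-H(q)|,
\]
so it suffices to prove $|H(\tilde p)-H(p)|\le \varepsilon/4$ for an arbitrary distribution $p$.

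\textbf{Lower bound on $H(\tilde p)$.} By concavity of entropy,
\[
H(\tilde p) \ge (1-\eta)H(p)+\eta\log n \ge H(p) - \eta H(p) + \eta\log n \ge H(p),
\]
where the last step uses $H(p)\le\log n$. Hence $H(\tilde p)-H(p)\ge 0$.

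\textbf{Upper bound on $H(\tilde p)$.} Here we use the standard bound for entropy of a mixture, $H(\tilde p)\le (1-\eta)H(p)+\eta H(U)+h(\eta)$, where $h$ is the binary entropy. This follows because $\tilde p$ is the law of $X$ in a two-stage experiment with a Bernoulli$(\eta)$ selector $B$, and $H(X)\le H(X,B)=h(\eta)+H(X\mid B)$. It gives
\[
H(\tilde p)-H(p)\le \eta(\log n - H(p)) + h(\eta) \le \eta\log n + \eta\log(1/\eta) + \eta,
\]
using $h(\eta)\le \eta\log(1/\eta)+\eta$.

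\textbf{Checking the constants.} With $L=\log(n/\varepsilon)$ and $\eta=\varepsilon/(4L)$, we have $\log(1/\eta)=\log(4L/\varepsilon)$. Then
\[
\eta\bigl(\log n+\log(1/\eta)+1\bigr) = \frac{\varepsilon}{4L}\bigl(\log(n/\varepsilon) + \log(4eL)\bigr) = \frac{\varepsilon}{4}\Bigl(1+\frac{\log(4eL)}{L}\Bigr).
\]
This slightly exceeds $\varepsilon/4$, so the constants need care. This constant-chasing is the only real obstacle.

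The fix I would try first is to avoid the crude $\log n$ term. Instead of bounding $\eta(\log n - H(p))$ by $\eta\log n$, I would bound the deviation directly as $\sum_i \bigl(\phi(\tilde p_i)-\phi(p_i)\bigr)$ with $\phi(x)=x\log(1/x)$. Since $|\tilde p_i-p_i|=\eta|p_i-1/n|$, one can use the standard continuity bound $|H(\tilde p)-H(p)|\le \tau\log(n/\tau)$ for $\tau=d_{\mathrm{TV}}(\tilde p,p)\le\eta$ (the \cite[Lemma 2.7]{CK11} inequality quoted in the overview). This gives
\[
|H(\tilde p)-H(p)|\le \eta\log(n/\eta) = \frac{\varepsilon}{4L}\bigl(L+\log(4L)\bigr).
\]
This is still $\tfrac{\varepsilon}{4}(1+o(1))$, so a small slack remains.

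Two ways to close the gap are available. One is to exploit one-sidedness: we showed $H(\tilde p)-H(p)\ge0$ and likewise for $q$, so the two deviations share a sign. Their difference therefore satisfies $|\Delta_p-\Delta_q|\le\max(\Delta_p,\Delta_q)$ rather than the sum. This yields a bound of $\tfrac{\varepsilon}{4}(1+\log(4eL)/L)\le\varepsilon/2$, since $\log(4eL)\le L$ whenever $n/\varepsilon$ is moderately large, with small cases absorbed into constants. The other is to adjust $\eta$ by a constant factor. I would present the one-sidedness argument, since it keeps $\eta$ exactly as stated.
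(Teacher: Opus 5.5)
Your argument is correct, with the same small-parameter caveat the paper has, and it differs from the paper's at the decisive step.

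\textbf{The paper's route.} It bounds each perturbation separately with the continuity bound of \cite[Lemma 2.7]{CK11}. It asserts $d_{\rm TV}(p,\tilde p)\le\eta/2$ to obtain $|H(p)-H(\tilde p)|\le\varepsilon/4$, and then adds the two perturbations via the triangle inequality.

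\textbf{Your route.} You prove $0\le H(\tilde p)-H(p)\le\eta\log(en/\eta)$ directly: concavity gives the lower bound, and the grouping bound $H(X)\le H(B)+H(X\mid B)$ gives the upper bound. You then use the common sign to replace $\Delta_p+\Delta_q$ by $\max(\Delta_p,\Delta_q)$.

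\textbf{Why the difference matters.} Your worry about the constant is justified, and the one-sidedness step is necessary rather than cosmetic. We have $d_{\rm TV}(p,\tilde p)=\eta\, d_{\rm TV}(p,U)$, which equals $\eta(1-1/n)$ for a point mass, so the paper's $\eta/2$ does not hold in general. Moreover, for a point mass $p$ one computes $H(\tilde p)-H(p)=h(t)+t\log(n-1)$ with $t=\eta(1-1/n)$. This matches your upper bound to leading order and exceeds $\varepsilon/4$; for instance $n=10^6$, $\varepsilon=1/2$ gives about $0.169>0.125$. So no argument that bounds each perturbation on its own and sums can reach $\varepsilon/2$ with this $\eta$. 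Your observation that both perturbations are nonnegative is exactly what the stated constant needs, and your proof is self-contained where the paper's relies on a misapplied off-the-shelf inequality.

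\textbf{Drop the TV-form bound.} Do not use $\tau\log(n/\tau)$ with $\tau=d_{\rm TV}$, even as a side remark. As a total-variation statement it is false for small $\tau$: take $n=2$, $p=(1,0)$, $q=(1-\tau,\tau)$, where $h(\tau)>\tau\log(2/\tau)$. That inequality holds for the $\ell_1$ distance. The sharp TV version is $\tau\log(n-1)+h(\tau)$, which is essentially your grouping bound.

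\textbf{Remaining caveat.} Your final step needs $\log(4eL)\le L$, i.e.\ roughly $n/\varepsilon\geq 40$. Saying small cases are ``absorbed into constants'' does not literally prove the explicit constant $\varepsilon/2$ in that range. However, the paper's own computation needs the analogous condition $\log(8L)\le L$. Downstream, the lemma is only used to conclude $|H(\tilde p)-H(\tilde q)|\ge\Omega(\varepsilon)$, so this is harmless.
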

\begin{proof}
    Since $d_{\tmop{TV}} (p, \tilde{p}) \leq \frac{\eta}{2} =
    \frac{\varepsilon}{8 \log (n / \varepsilon)}$, by invoking the bound on entropy difference with respect to TV distance in \cite[Lemma 2.7]{CK11}, 
    we get 
    \[
    | H (p) - H (\tilde{p}) | 
    \leq
    \frac{\varepsilon}{8 \log (n / \varepsilon)} \log \left( \frac{n \cdot 8
    \log (n / \varepsilon)}{\varepsilon} \right) = \varepsilon \cdot \frac{\log (n
    / \varepsilon) + \log (8 \log (n / \varepsilon))}{8 \log (n /
    \varepsilon)} \leq \varepsilon/4.
    \]
    Similarly, we have $| H (q) - H
    (\tilde{q}) | \leq \varepsilon/4$. By the triangle inequality, we then have
    \[ | H (\tilde{p}) - H (\tilde{q}) | \leq | H (p) - H (\tilde{p}) | + | H
       (q) - H (\tilde{q}) | + | H (p) - H (q) | \leq \varepsilon / 2 + | H (p) -
       H (q) | ,
    \]
    and
    \[ | H (p) - H (q) | \leq | H (\tilde{p}) - H (\tilde{q}) | + | H (p) - H
       (\tilde{p}) | + | H (q) - H (\tilde{q}) | \leq \varepsilon / 2 + | H
       (\tilde{p}) - H (\tilde{q}) |, \]
    establishing the lemma.
\end{proof}
Of course, if $p=q$, then $\tilde{p}=\tilde{q}$; \cref{lemma:mass_floor} guarantees that if $|H(p) - H(q)| \geq \varepsilon$, then $|H(\tilde p) - H(\tilde q)| \geq \Omega(\varepsilon)$. That is, $\tilde p, \tilde q$ satisfy the promise with the
same $\varepsilon$ (up to constants) whenever $p,q$ do, so we can henceforth assume that $\min_{i\in[n]} p_i, \min_{i\in[n]} q_i \geq
\frac{\varepsilon}{4 n \log (n / \varepsilon)}$.
With this in hand, we are ready to establish our main theorem, restated, as follows:
\begin{theorem}[\cref{theo:eet:intro}, restated]\label{theo:eet}
    There exists an algorithm (\cref{alg:eet}) which, given sample access to two unknown probability distributions $p,q$ over a known domain $[n]$, as well as parameter $\varepsilon \in (0,1/2]$, takes
    \[
        O \left( \frac{n^{3 / 4}\log(n/\varepsilon) \cdot \log n}{\varepsilon} + \frac{\log^2 (n /
   \varepsilon)}{\varepsilon^2} \right)
    \]
    samples from both $p$ and $q$, and has the following guarantees:
    \begin{itemize}
        \item If $p=q$, then the algorithm outputs $\textsf{accept}$ with probability at least $9/10$;
        \item If $|H(p)-H(q)| \geq \varepsilon$, then the algorithm outputs $\textsf{reject}$ with probability at least $9/10$.
    \end{itemize}
    Further, the algorithm runs in time $O(N \log N)$, where $N$ is the total number of samples.
\end{theorem}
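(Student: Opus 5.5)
Using \cref{lemma:mass_floor}, we assume $p_i,q_i\geq \eta/n$ with $\eta=\varepsilon/(4\log(n/\varepsilon))$, at the cost of halving $\varepsilon$. With $r:=(p+q)/2$, the key identity is
\begin{align*}
H(p)-H(q) = \Big(d_{\tmop{KL}}(q\|r)-d_{\tmop{KL}}(p\|r)\Big) + \sum_{i\in[n]}(q_i-p_i)\log r_i .
\end{align*}
The KL terms are each bounded by $d_{\chi^2}$-type quantities $\sum_i (p_i-q_i)^2/(p_i+q_i)=\Theta(d_H^2(p,q))$, since $r_i\geq p_i/2$ keeps the $\chi^2$ bounded. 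So if $|H(p)-H(q)|\geq\varepsilon$, then either $d_H^2(p,q)\gtrsim\varepsilon$, or the cross-entropy term $D:=\sum_i (q_i-p_i)\log(1/r_i)$ satisfies $|D|\geq\varepsilon/2$. This is the content of \Cref{lemma:reduce_to_Hellinger_or_entropy}. The algorithm runs three tests and rejects if any of them fires:
\begin{itemize}
\item[(i)] the \cite{DaskalakisKW18} Hellinger closeness tester at squared distance $\Theta(\varepsilon)$, costing $O(n^{3/4}/\varepsilon)$;
\item[(ii)] a low-mass test;
\item[(iii)] a bias-checked plug-in estimate of $D$.
\end{itemize}
Under $p=q$, each test accepts with probability $1-1/30$, which gives completeness.

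For (ii), fix the threshold $\tau=\tilde\Theta(\varepsilon/n^{3/4})$ and let $T=\{i: r_i<\tau\}$. We learn $T$ approximately from one batch of $m=\tilde O(n^{3/4}/\varepsilon)$ samples: elements seen fewer than about $\log n$ times go into $T$, with misclassification only within constant factors, by \cref{fact:distinguish_small_vs_big} and Poisson tails. On $T$, $\log(1/r_i)\le\log(4n\log(n/\varepsilon)/\varepsilon)=O(\log(n/\varepsilon))$, so $|D_T|\le 2\,d_{\tmop{TV}}$-mass$\cdot O(\log(n/\varepsilon))$. If $|D_T|\ge\varepsilon/4$, then either $|p(T)-q(T)|$ is large, which we test directly with a coin test, or the conditionals $p(\cdot|T),q(\cdot|T)$ are $\tilde\Omega(\varepsilon/r(T))$-far in TV. We then follow \cite{DK16}: rejection-sample the conditionals and run closeness testing on $|T|\le n$ elements. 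Because $r(T)\le n\tau$, the cost is $\tilde O(n^{2/3}r(T)^{4/3}/\varepsilon^{4/3}+\sqrt n\, r(T)^2/\varepsilon^2)/r(T)$ raw samples. With $r(T)\le n^{1/4}\varepsilon$ up to logs, this is $\tilde O(n^{3/4}/\varepsilon)$.

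For (iii), on the heavy set $B=[n]\setminus T$ we use Poissonized counts $X_i,Y_i\sim\mathrm{Poi}(mp_i),\mathrm{Poi}(mq_i)$ and the plug-in statistic
\begin{align*}
Z=\sum_{i\in B}\frac{Y_i-X_i}{m}\log\frac{2m}{X_i'+Y_i'},
\end{align*}
with an independent sample $X',Y'$ used for the logarithm. Using \cref{fact:factorial_moment_functional_shifting} and a Taylor expansion of $\log$ around $m(p_i+q_i)\gtrsim\log n$, we get $|\mathbb E Z-D_B|\lesssim\sum_{i\in B}\frac{|p_i-q_i|}{m(p_i+q_i)}$ plus lower-order terms. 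The variance is $O(\log^2(n/\varepsilon)/m)$ plus terms controlled by $\sum (p_i-q_i)^2/(p_i+q_i)$, which is small given that (i) passed; Chebyshev then needs $m\gtrsim\log^2(n/\varepsilon)/\varepsilon^2$. Before trusting $Z$, we run the bias check of \Cref{lemma:check_if_Z_has_small_bias}. This uses the \cite{CDVV14} statistic $\sum_i\frac{(X_i-Y_i)^2-X_i-Y_i}{X_i+Y_i}$ restricted to $B$, with threshold calibrated so that a bias $\geq\varepsilon/8$ triggers rejection. By Cauchy--Schwarz, the bias is at most $\frac1m\sqrt{|B|\sum_B (p_i-q_i)^2/(p_i+q_i)^2}$. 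The heavy-element floor $p_i+q_i\ge\tau$ makes the per-element terms $\le 1/(m\tau)$, so the statistic's variance is bounded in the instance-optimal style, and detection is possible with $m=\tilde O(n^{3/4}/\varepsilon)$.

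The main obstacle is this last step: calibrating the bias check so that (a) under $p=q$ it accepts (the statistic has mean $0$, and its variance is $O(|B|)$ up to $\log$ factors since the counts are $\Omega(\log n)$), and (b) whenever it accepts, the bias is genuinely $\le\varepsilon/8$, using only the mass floor $\tau$ and not a TV-testing budget of $\sqrt n/\varepsilon^2$. I would first try to show that bias $\ge\varepsilon/8$ forces $\sum_B(p_i-q_i)^2/(p_i+q_i)\ge m\varepsilon^2/|B|$ via the Cauchy--Schwarz bound above, and then compare this with the statistic's standard deviation $\sqrt{|B|}$. The trade-off between $m$ and $\tau$ then pins down the $n^{3/4}/\varepsilon$ scaling.

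Finally, for the running time, each statistic is a sum over observed elements after bucketing the samples by sorting, which gives $O(N\log N)$.
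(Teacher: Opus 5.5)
Your proposal is correct and follows the paper's proof essentially step for step. It uses the same mass-floor reduction, the same KL/$\chi^2$ decomposition into a Hellinger term plus the cross-entropy term, and the same \cite{DK16}-style conditional TV test on the light elements. It also uses the same bias check on the heavy elements via the \cite{CDVV14} statistic, where Cauchy--Schwarz turns a bias of order $\varepsilon$ into $\sum_{i\in B}(p_i-q_i)^2/(p_i+q_i)=\Omega(m\varepsilon^2/|B|)$, which is then compared against a standard deviation of order $\sqrt{|B|}$.

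The one real difference is your estimator. You evaluate the logarithm as $L_i=\log\frac{2m}{X_i'+Y_i'}$ on an independent sample. This makes $Z$ target $\sum_{i\in B}(q_i-p_i)\log(1/r_i)$ directly, and it gives the exact split $\mathrm{Var}[Z_i]=\frac{p_i+q_i}{m}\mathbb{E}[L_i^2]+(p_i-q_i)^2\mathrm{Var}[L_i]$ with $\mathrm{Var}[L_i]=O\bigl(1/(m(p_i+q_i))\bigr)$. As a result, you can dispense with the paper's auxiliary checks on $|p(S)-q(S)|$ and $\|p-q\|_2^2$.
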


\begin{algorithm}[!ht]
\caption{Entropy equivalence testing (\EET)}
\label{alg:eet}
\begin{algorithmic}[1]
\Require Sample access to $p$ and $q$, both over $[n]$; accuracy parameter $\varepsilon \in (0,1]$.
\State\label{1} Run a Hellinger closeness tester on $p,q$ with $O\!\left(\tfrac{n^{3/4}}{\varepsilon}\right)$ samples.
\If{$d_H^2(p,q) \geq \Omega(\varepsilon)$}
 {\Return \textsf{reject} }
\EndIf
\State Set $m_1 := \tilde O(n^{3/4}/\varepsilon)$. Take $m_1$ samples each from $p,q$; let $X_i, Y_i$ be the empirical counts.
\State $S \gets \bigl\{\, i \in [n] : X_i + Y_i \geq \Omega (m_1 \cdot \varepsilon / (n^{3/4} \log(n/\varepsilon))) \,\bigr\}$  \Comment{heavy elements}
\State\label{5} Set $m_2 := \tilde O(1/\varepsilon)$. Using $m_2$ samples, test whether $p(\bar S) \geq \tilde\Omega(\varepsilon)$ and $q(\bar S) \geq \tilde\Omega(\varepsilon)$.

\If{only one of the preceding inequalities hold} \Comment{$p$ and $q$ differ on $\bar{S}$.}
    \State \Return \textsf{reject}
\EndIf

\If{both preceding inequalities hold} \Comment{low-mass regime on $\bar S$}
    \State Set $m_3 := O(\log^2(n/\varepsilon)/\varepsilon^2)$. Using $m_3$ samples, estimate $|p(\bar S) - q(\bar S)|$.
    \If{$|p(\bar S) - q(\bar S)| \geq \tilde\Omega(\varepsilon)$}{ \Return \textsf{reject}}
    \EndIf
    \State Via rejection sampling on $\bar S$, run a TV closeness tester on $p_{\bar S}, q_{\bar S}$ at threshold $\tilde\Omega(\varepsilon/q(\bar{S}))$ using $\tilde O(n^{3/4}/\varepsilon)$ samples.
    \If{the conditional tester rejects}
        \State \Return \textsf{reject}
    \EndIf
\EndIf
\State Set $m_4 := \tilde O\!\bigl(n^{3/4}/\varepsilon + \log^2(n/\varepsilon)/\varepsilon^2\bigr)$ and $s := \Omega(n^{3/4} \log(n/\varepsilon)/\varepsilon)$.
\State Using $s$ samples, compute $T \gets \displaystyle\sum_{i \in S}\tfrac{(X_i - Y_i)^2 - (X_i + Y_i)}{X_i + Y_i}$. \Comment{bias check}
\If{$T \geq \Omega(\sqrt{n})$}
    \State \Return \textsf{reject}
\EndIf
\State Using $O(\log^2(n/\varepsilon)/\varepsilon^2)$ samples, test both $|p(S) - q(S)| \geq \Omega\!\bigl(\tfrac{\varepsilon}{\log(n/\varepsilon)}\bigr)$ and $\|p - q\|_2^2 \geq \Omega\!\bigl(\tfrac{\varepsilon^2}{\log^2(n/\varepsilon)}\bigr)$.
\If{either preceding inequality holds}
    \State \Return \textsf{reject}
\EndIf
\State Take $m_4$ samples; let $X_i, Y_i$ be the empirical counts on $S$. Compute $$Z \gets \displaystyle\sum_{i \in S}\frac{X_i - Y_i}{m_4}\log \tfrac{1}{X_i + Y_i}.$$ 
\If{$|Z| \geq \Omega(\varepsilon)$}{
    \Return \textsf{reject}}
\Else{
    \Return \textsf{accept}}
\EndIf
\end{algorithmic}
\end{algorithm}

The remainder of this section is dedicated to proving this theorem, which we do via a series of lemmas.  
%
%

The first lemma relates a difference in entropy to a difference in (squared) Hellinger distance and a cross entropy term with respect to a mixture:
\begin{lemma}
{\label{lemma:reduce_to_Hellinger_or_entropy}
For any two distributions $p,q$ over $[n]$, we have
\begin{align*} | H (p) - H (q) | \leq \Theta (d_H^2 (p, q)) + \left| \sum_{i=1}^n (p_i -
   q_i) \log \frac{1}{(p_i + q_i) / 2} \right| . \end{align*}}
\end{lemma}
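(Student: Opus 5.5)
Let $m_i := (p_i+q_i)/2$, so that $m$ is the uniform mixture of $p$ and $q$. The plan is to split each entropy into a cross-entropy against $m$ plus a KL term. For any distribution $r$, $H(r) = \sum_i r_i \log\frac{1}{m_i} - d_{\tmop{KL}}(r\|m)$, since $\sum_i r_i\log\frac{1}{r_i} = \sum_i r_i\log\frac{1}{m_i} - \sum_i r_i\log\frac{r_i}{m_i}$. This is valid whenever the support of $r$ lies inside that of $m$, which holds for $r\in\{p,q\}$. Subtracting the identities for $p$ and $q$ gives the exact decomposition
\begin{align*}
H(p)-H(q) = \sum_{i=1}^n (p_i-q_i)\log\frac{1}{(p_i+q_i)/2} - d_{\tmop{KL}}(p\|m) + d_{\tmop{KL}}(q\|m).
\end{align*}
The triangle inequality then yields
\begin{align*}
|H(p)-H(q)| \leq \left|\sum_{i=1}^n (p_i-q_i)\log\frac{1}{(p_i+q_i)/2}\right| + d_{\tmop{KL}}(p\|m) + d_{\tmop{KL}}(q\|m).
\end{align*}
(One could also bound the difference $|d_{\tmop{KL}}(p\|m)-d_{\tmop{KL}}(q\|m)|$ directly, but the sum suffices.)

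It remains to show that $d_{\tmop{KL}}(p\|m)+d_{\tmop{KL}}(q\|m) = O(d_H^2(p,q))$; note this sum is exactly twice the Jensen--Shannon divergence. Use the stated inequality $d_{\tmop{KL}}\leq d_{\chi^2}$ and compute each term:
\begin{align*}
d_{\chi^2}(p,m) = \sum_i \frac{(p_i-m_i)^2}{m_i} = \sum_i \frac{(p_i-q_i)^2/4}{(p_i+q_i)/2} = \frac{1}{2}\sum_i\frac{(p_i-q_i)^2}{p_i+q_i}.
\end{align*}
By symmetry the same value holds for $d_{\chi^2}(q,m)$. The preliminaries give $\frac12\sum_i (p_i-q_i)^2/(p_i+q_i) = \Theta(d_H^2(p,q))$, so the KL sum is $O(d_H^2(p,q))$. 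If one wants this step self-contained, write $(p_i-q_i)^2/(p_i+q_i) = (\sqrt{p_i}-\sqrt{q_i})^2\cdot(\sqrt{p_i}+\sqrt{q_i})^2/(p_i+q_i)$. The last factor lies in $[1,2]$, which gives the explicit bound $\sum_i (p_i-q_i)^2/(p_i+q_i)\leq 4 d_H^2(p,q)$ and hence $d_{\tmop{KL}}(p\|m)+d_{\tmop{KL}}(q\|m)\leq 4d_H^2(p,q)$.

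I expect no serious obstacle; the step needing the most care is the support and zero-mass bookkeeping. Terms with $p_i=q_i=0$ can simply be dropped, and $m_i>0$ whenever $p_i>0$ or $q_i>0$, so every logarithm and ratio above is well defined under the convention $0\log 0=0$. The $\Theta(\cdot)$ in the statement should be read as an $O(\cdot)$ upper bound with an absolute constant, for instance $4$.
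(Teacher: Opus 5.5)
Your proof is correct and takes essentially the same route as the paper. Both arguments use the exact decomposition $H(p)-H(q)=\sum_i (p_i-q_i)\log\frac{1}{(p_i+q_i)/2}+d_{\tmop{KL}}(q\|m)-d_{\tmop{KL}}(p\|m)$, then $d_{\tmop{KL}}\leq d_{\chi^2}$ with the computation $d_{\chi^2}(p,m)=\frac12\sum_i (p_i-q_i)^2/(p_i+q_i)$, and finally the triangle-distance/Hellinger equivalence. Your elementary bound $(\sqrt{p_i}+\sqrt{q_i})^2/(p_i+q_i)\in[1,2]$ makes that last step self-contained with the explicit constant $4$, where the paper instead cites it.
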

\begin{proof}
Letting $r \coloneqq \frac{p+q}{2}$, we can write
  \begin{align*}
    &  H (p) - H (q)\\
    & =  \sum_i \left( p_i \log \frac{1}{p_i} - q_i \log \frac{1}{q_i} \right)\\
    & =  \sum_i \left( - p_i \log \frac{p_i}{(p_i + q_i) / 2} + p_i \log
    \frac{1}{(p_i + q_i) / 2} + q_i \log \frac{q_i}{(p_i + q_i) / 2} - q_i
    \log \frac{1}{(p_i + q_i) / 2} \right)\\
    & =  d_{\tmop{KL}} \left( q\,\|\, r \right) - d_{\tmop{KL}} \left( p\|
    r \right) + \sum_i \left( p_i \log \frac{1}{(p_i + q_i) / 2} - q_i
    \log \frac{1}{(p_i + q_i) / 2} \right)\\
    & =  d_{\tmop{KL}} \left( q\,\|\, r \right) - d_{\tmop{KL}} \left( p\| r \right) + \sum_i (p_i - q_i) \log \frac{1}{(p_i + q_i) /
    2} .
  \end{align*}
  We then observe that
  \begin{align*} d_{\tmop{KL}} \left( q\,\|\, r \right) \leq d_{\chi^2} \left(
     q, r \right) = \sum_i \frac{\left( q_i - \frac{p_i +
     q_i}{2} \right)^2}{\frac{p_i + q_i}{2}} = \sum_i \frac{1}{2}  \frac{(p_i
     - q_i)^2}{p_i + q_i} = \Theta (d_H^2 (p, q)), \end{align*}
  where the last equality comes from the equivalence of between Hellinger distance and Triangle distance (e.g., see \cite[Proposition 3]{DaskalakisKW18}).\footnote{Note that the induced KL terms have been studied in information theory (called Jensen-Shannon divergence; see e.g., \cite[Chapter 7]{Polyanskiy_Wu_2025} and more systematically in \cite{Topsoe00}, where they actually show that the Jensen-Shannon divergence and the Triangle distance are equivalent up to constant factors).} Similarly, we have that $d_{\tmop{KL}} \left( p\,\|\, r \right) \leq d_{\chi^2}
  \left( p, r \right) = \Theta (d_H^2 (p, q))$, and we conclude
  that
  \begin{align*}
    | H (p) - H (q) | & \leq  \left| d_{\tmop{KL}} \left( q\,\|\, r \right) \right| + \left| d_{\tmop{KL}} \left( p\,\|\, r
    \right) \right| + \left| \sum_i (p_i - q_i) \log \frac{1}{(p_i + q_i) / 2}
    \right|\\
    & \leq  \Theta (d_H^2 (p, q)) + \left| \sum_i (p_i - q_i) \log
    \frac{1}{(p_i + q_i) / 2} \right|\,,
\end{align*}
as claimed.
\end{proof}
Lemma~\ref{lemma:reduce_to_Hellinger_or_entropy} allows us to split the entropy difference 
$|H(p) - H(q)|$ into a Hellinger part and a term
$|\sum_i (p_i - q_i) \log(1/((p_i+q_i)/2))|$, so that an entropy difference $\varepsilon$ implies at least one of these two terms must be $\varepsilon/2$: thus, it suffices to test both, and reject if, and only if, at least one of them is detected to be $\Omega(\varepsilon)$. (Clearly, if $p=q$, both the Hellinger and KL-like terms are zero.)\smallskip

The Hellinger term can be
tested with $\widetilde{O}(n^{3/4}/\varepsilon)$
samples using a Hellinger closeness tester (e.g., \cite[Proposition 2.15]{DK16}): if
$d_H^2(p,q) \geq \Omega(\varepsilon)$, this tester will allow us to reject with high probability. For the rest of the subsection we therefore can assume $d_H^2(p,q) \leq O(\varepsilon)$,
and the task reduces to testing the second, ``KL-like'' term,
\begin{equation}
\Lambda(p,q) \coloneqq \left| \sum_i (p_i - q_i) \log
    \frac{1}{(p_i + q_i) / 2} \right|\,, \label{eq:cross_entropy_expression}    
\end{equation}
and specifically to distinguish $\Lambda(p,q) = 0$ from $\Lambda(p,q) \geq \frac{\varepsilon}{2}$.

To do so, the idea is to handle separately the set $S^\ast$ of ``heavy'' elements, on which either $p$ or $q$ puts relatively high probability (specifically, at least some threshold $\tau \coloneqq\tilde{\Omega}(\varepsilon/n^{3/4})$), and the remaining set $\bar{S^\ast} = [n]\setminus S^\ast$ of ``light'' elements (which have low probability under both $p$ and $q$). 
Of course, we do not know this set $S^\ast$, as both $p,q$ are unknown; however, by taking $\tilde{O}(1/\tau)=\tilde{O}(n^{3/4}/\varepsilon)$ samples (which is inconsequential, as we will pay this sample complexity for the Hellinger test anyway), we can obtain a set $S$ which is a good proxy for $S^\ast$. To make this formal, define the following two sets:
\begin{align*} S_1 &\assign \left\{ i \in [n] \mid p_i + q_i \geq C_1 \cdot
   \frac{\varepsilon}{n^{3 / 4} \log (n / \varepsilon)} \right\} . \\
   S_2 &\assign \left\{ i \in [n] \mid p_i + q_i \geq C_2 \cdot
   \frac{\varepsilon}{n^{3 / 4} \log (n / \varepsilon)} \right\} , \end{align*}
where $C_2 \geq C_1 > 0$ are two absolute constants. Our focus is to identify a set $S\subseteq [n]$ such that $S_2 \subseteq S \subseteq S_1$.
\begin{lemma}\label{lemma:filtering_small_elements}
  Using $O \left( \frac{n^{3 / 4} \log(n/\varepsilon) \log n}{\varepsilon} \right)$ samples from $p$ and $q$, one can efficiently return a set $S$ satisfying $S_2
  \subseteq S \subseteq S_1$ with probability at least $\frac{99}{100}$.
\end{lemma}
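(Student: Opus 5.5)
We draw $m \coloneqq C\cdot \frac{n^{3/4}\log(n/\varepsilon)\log n}{\varepsilon}$ samples from each of $p$ and $q$, for a suitably large constant $C$. As in the standard analysis, we use Poissonization: we take $\operatorname{Poi}(m)$ samples from each distribution, which costs at most $2m$ samples with overwhelming probability. Let $X_i\sim\operatorname{Poi}(mp_i)$ and $Y_i\sim\operatorname{Poi}(mq_i)$ be the counts. Then $W_i \coloneqq X_i+Y_i\sim \operatorname{Poi}(m(p_i+q_i))$, and the $W_i$ are independent across $i$. Set $\tau_j \coloneqq C_j\,\varepsilon/(n^{3/4}\log(n/\varepsilon))$ for $j=1,2$, and output
\[
S\coloneqq\{i : W_i \geq t\},\qquad t\coloneqq \frac{m(\tau_1+\tau_2)}{2}.
\]
The threshold $t$ lies midway between $m\tau_1$ and $m\tau_2$, and we pick $C_2 \geq 4C_1$ so that it is well separated from both. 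This procedure clearly runs in time linear in the number of samples.

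The correctness argument is a union bound over the $n$ elements, using multiplicative Chernoff bounds for Poisson variables. Note that $m\tau_1 = C C_1\log n$. First, take $i\in S_2$, so $\mathbb{E}[W_i]=m(p_i+q_i)\geq m\tau_2$. Since $t\leq \tfrac{5}{8}m\tau_2$, the lower-tail Chernoff bound gives
\[
\Pr[W_i < t] \leq \exp(-\Omega(m\tau_2)) = n^{-\Omega(CC_2)} \leq \frac{1}{200n}.
\]
Second, take $i\notin S_1$, so $\mathbb{E}[W_i] < m\tau_1 \leq t/2$. By monotonicity in the Poisson parameter, we may assume the mean is exactly $m\tau_1$. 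The upper-tail bound then gives
\[
\Pr[W_i\geq t] \leq \exp(-\Omega(m\tau_1)) = n^{-\Omega(CC_1)} \leq \frac{1}{200n}.
\]
A union bound over all $i\in[n]$ shows that, with probability at least $99/100$, every $i\in S_2$ lies in $S$ and no $i\notin S_1$ does. This is exactly $S_2\subseteq S\subseteq S_1$.

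The main point needing care is the upper tail for elements with tiny mass, where $\mathbb{E}[W_i]$ may be far below $t$. This is handled by the monotonicity coupling: a Poisson variable with smaller mean is stochastically dominated by one with mean $m\tau_1$. It is also where the $\log n$ factor in $m$ is essential, since it makes $m\tau_1=\Theta(\log n)$ and hence the per-element failure probability $1/\mathrm{poly}(n)$. Everything else is routine bookkeeping of constants, together with the standard de-Poissonization remark.
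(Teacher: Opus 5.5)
Your proposal is correct and is essentially the paper's argument. The paper thresholds the empirical count of each $i$ under the mixture $(p+q)/2$ using one shared sample set. It invokes the folklore coin-bias test (Fact~\ref{fact:distinguish_small_vs_big}) with per-element failure probability $1/(100n)$, which is exactly your Chernoff bound with $\alpha=\Theta(\varepsilon/(n^{3/4}\log(n/\varepsilon)))$ and $\log(1/\delta)=\Theta(\log n)$, and then union bounds over $[n]$. Your Poissonization and stochastic-domination step for tiny-mass elements simply make explicit what the paper leaves inside that fact.
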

\begin{proof}
By \cref{fact:distinguish_small_vs_big}, choosing $2C_1 \leq C_2$, as we have sample access to the uniform mixture $\frac{p+q}{2}$, we can distinguish $\frac{p_i+q_i}{2} \geq C_2 \cdot \frac{\varepsilon}{2n^{3 / 4} \log (n /
\varepsilon)}$ and $\frac{p_i+q_i}{2} \leq C_1 \cdot \frac{\varepsilon}{2n^{3 / 4}
\log (n / \varepsilon)}$ with probability $1 - \frac{1}{100 n}$ costing $O \left(
\frac{n^{3 / 4} \log (n / \varepsilon) \cdot \log n}{\varepsilon} \right)$
samples from $p$ and $q$. Therefore, running the test over the support $[n]$ (using the same set of samples to run this test for all $i$'s) and adding each $i\in[n]$ to the set $S$ if we detect $p_i + q_i \geqslant C_2 \cdot \frac{\varepsilon}{n^{3 / 4} \log (n / \varepsilon)}$, by a union bound, the algorithm will be simultaneously correct on all $i\in[n]$  with probability at least $\frac{99}{100}$.
\end{proof}
To summarize, once we have identified such a set $S$, given that
\[
\underbrace{\Lambda(p,q)}_{0\text{ or } \geq \frac{\varepsilon}{2}}
\leq \underbrace{\left| \sum_{i \in \overline{S}} (p_i - q_i) \cdot \log \frac{1}{p_i + q_i} \right|}_{\eqcolon \Lambda_{\overline{S}}(p,q)} + \underbrace{\left| \sum_{i \in S} (p_i - q_i) \cdot \log \frac{1}{p_i + q_i} \right|}_{\eqcolon \Lambda_S(p,q)}
\]
it suffices to separately test (1)  $\Lambda_{\overline{S}}(p,q)=0$ vs.~$\Lambda_{\overline{S}}(p,q) \geq \varepsilon/4$, and (2) $\Lambda_S(p,q) = 0$ vs.~$\Lambda_S(p,q) \geq \varepsilon/4$; we do so below, assuming that our set $S$ satisfies $S_2 \subseteq S$.\footnote{For simplicity of exposition, we henceforth ignore the factor $4$, and consider $\varepsilon$ instead of $\varepsilon/4$; a simple rescaling argument gives the original guarantees.}

\subsection*{Test 1: $\Lambda_{\overline{S}}(p,q)=0$ vs.~$\Lambda_{\overline{S}}(p,q) \geq \Omega(\varepsilon)$}

Inspired by the ideas of \citet{DK16}, we can then check if the distributions are far \emph{in TV distance} in the low-probability regime (i.e., elements in $\overline{S}$), which will suffice by a simple reduction from entropy difference to
TV (using the lower bound $\min_i p_i, \min_i q_i \geq \tilde{\Omega}(\varepsilon/n)$). In more detail, we have
\begin{align*} 
\bigg| \sum_{i \in \overline{S}} (p_i - q_i) \cdot \log \frac{1}{p_i + q_i} \bigg| \leq
\sum_{i \in \overline{S}} | p_i - q_i | \cdot \log \frac{1}{p_i + q_i} \leq
O(\log (n / \varepsilon)) \cdot \sum_{i \in \overline{S}} | p_i - q_i | . 
\end{align*}

While this is a task that would in general require a large number of samples, here we can leverage the additional fact that $\overline{S}$ does not contain any heavy elements.

\begin{lemma}
  Let $T\subseteq [n]$ be a set such that $\max_{i\in T} (p_i + q_i) \leq C_2 \cdot
  \frac{\varepsilon}{n^{3 / 4} \log (n / \varepsilon)}$. Then, testing $p_T =
  q_T$ versus $\sum_{i \in T} | p_i - q_i | \geq \frac{\varepsilon}{\log (n /
  \varepsilon)}$ (with probability of success $9/10$) can be done with $O \!\left( {n^{3 / 4} \cdot \log(n/\varepsilon)}/{\varepsilon}
  \right)$ samples from $p,q$.
\end{lemma}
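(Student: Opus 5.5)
We plan to reduce to a standard closeness tester run on the conditional distributions $p(\cdot\mid T)$, $q(\cdot\mid T)$, in the spirit of \citet{DK16}. Write $\alpha := p(T)$ and $\beta := q(T)$. First we handle the case where the masses differ. If $|\alpha-\beta| \geq \frac{\varepsilon}{2\log(n/\varepsilon)}$, this can be detected with $O(\log^2(n/\varepsilon)/\varepsilon^2)$ samples, or more cheaply with a multiplicative test via \cref{fact:distinguish_small_vs_big} when $\alpha,\beta$ are small. We also need to handle the degenerate case where both $\alpha$ and $\beta$ are below $\frac{\varepsilon}{2\log(n/\varepsilon)}$. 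In that case $\sum_{i\in T}|p_i-q_i| \le \alpha+\beta$ cannot reach the threshold unless one of $\alpha,\beta$ is large, so we may assume both masses are $\Theta(\alpha)$ with $\alpha \gtrsim \varepsilon/\log(n/\varepsilon)$.

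In the remaining case, the triangle inequality gives
\[
\sum_{i\in T}|p_i-q_i| \le \alpha\, \bigl\|p(\cdot\mid T)-q(\cdot\mid T)\bigr\|_1 + |\alpha-\beta|.
\]
So the far case forces $d_{\tmop{TV}}(p(\cdot\mid T),q(\cdot\mid T)) \geq \varepsilon' := \frac{\varepsilon}{4\alpha\log(n/\varepsilon)}$. In the equal case the conditionals coincide. We then run a TV closeness tester on the conditionals at distance $\varepsilon'$. To make it cheap, we exploit that every conditional element has mass at most $b := C_2\varepsilon/(\alpha n^{3/4}\log(n/\varepsilon))$. We use the $\ell_2$-based tester of \citet{CDVV14}, whose sample complexity is $O(\sqrt{n}\, \max\{\|p'\|_2,\|q'\|_2\}/\varepsilon'^2)$ for TV distance $\varepsilon'$ on support size $n$. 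Since $\|p'\|_2 \le \sqrt{b}$, this costs $O(\sqrt{n b}/\varepsilon'^2)$ conditional samples. We expect this would need plugging in the precise form of the CDVV bound, or the flattening-free variant in \cite{DK16}, and we would simply invoke it.

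Next, each conditional sample costs $O(1/\alpha)$ raw samples by rejection sampling, with a Markov/Chernoff bound on the total. The raw sample count is therefore
\[
\frac{1}{\alpha}\cdot \sqrt{n}\cdot\sqrt{\frac{\varepsilon}{\alpha n^{3/4}\log(n/\varepsilon)}}\cdot \frac{\alpha^2\log^2(n/\varepsilon)}{\varepsilon^2}
= \sqrt{\alpha}\,\frac{n^{1/8}\log^{3/2}(n/\varepsilon)}{\varepsilon^{3/2}}.
\]
Using $\alpha\le 1$ alone does not give $n^{3/4}/\varepsilon$. We therefore also use the trivial bound $\sqrt{b}\le$ the $\ell_2$-norm bound combined with the regime in which the CDVV bound is dominated by its $n^{2/3}/\varepsilon'^{4/3}$ term. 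A cleaner route is to take the minimum over the two standard regimes $\max\{\sqrt{n b}/\varepsilon'^2,\ 1/(b\,\cdot)\}$. Concretely, the number of conditional samples $k$ must also be at least about $1/\sqrt{b}$ times something for collision statistics to be meaningful. Balancing gives $k \approx \sqrt n\sqrt b/\varepsilon'^2$ plus $\sqrt{n}/\varepsilon'$-type terms. Each is bounded by $\alpha\, n^{3/4}\log(n/\varepsilon)/\varepsilon$ after substituting $\varepsilon' \gtrsim \varepsilon/(\alpha\log)$ and $b\le C_2\varepsilon/(\alpha n^{3/4}\log)$, using $\alpha\ge \varepsilon/\log$ to absorb extra factors. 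Multiplying by the $1/\alpha$ rejection-sampling overhead yields $O(n^{3/4}\log(n/\varepsilon)/\varepsilon)$.

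The main obstacle is this last bookkeeping step: choosing the right $\ell_2$-norm-sensitive closeness tester and checking that the $\ell_\infty$ bound on the conditionals exactly offsets both the blow-up of $\varepsilon'$ (which scales as $1/\alpha$) and the $1/\alpha$ rejection overhead. We would first try the statement ``closeness in $\ell_1$ with $\|p\|_2\le\sqrt b$ takes $O(\sqrt n + n\sqrt b/\varepsilon'^2)$ samples'' \cite{DK16}. Plugging in gives $\frac1\alpha\bigl(\sqrt n + n\sqrt{b}\,\alpha^2\log^2/\varepsilon^2\bigr)$. The first term is fine whenever $\alpha\gtrsim\varepsilon/n^{1/4}$, and otherwise the far case is impossible since $\sum_T|p_i-q_i|\le 2\alpha$ is tiny. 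The second term is $\sqrt{\alpha}\, n^{5/8}\log^{3/2}/\varepsilon^{3/2}$, which still needs $\alpha$ bounded in terms of $\varepsilon$. This suggests that the threshold $\tau$ (or the use of $|T|\le n$ combined with the $\ell_\infty$ bound giving $\alpha\le n b$) must be exploited, and verifying this exponent balance is where we expect most of the effort. Success probability $9/10$ follows by a union bound over the mass test and the conditional tester.
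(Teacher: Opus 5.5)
Your reduction is the same as the paper's. You test the masses $\alpha=p(T)$ and $\beta=q(T)$, accepting if both are small and rejecting if they differ by $\Omega(\varepsilon/\log(n/\varepsilon))$. You then use $\sum_{i\in T}|p_i-q_i|\le|\alpha-\beta|+\alpha\,\|p(\cdot\mid T)-q(\cdot\mid T)\|_1$ to get conditional TV distance $\varepsilon'=\Omega(\varepsilon/(\alpha\log(n/\varepsilon)))$, and run a closeness tester on the conditionals via rejection sampling at $O(1/\alpha)$ raw samples each.

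\textbf{The gap.} The only real content of the lemma is the final sample count, and you do not establish it. You stop at a raw cost that grows with $\alpha$, and you explicitly leave the ``exponent balance'' unverified. The intermediate attempts also contain errors:
\begin{itemize}
\item Your first display uses $\sqrt n\,\|p'\|_2/\varepsilon'^2$. The $\ell_2$-based tester does not give this: converting $\ell_1$ to $\ell_2$ via $\|\cdot\|_2\ge\|\cdot\|_1/\sqrt n$ yields $n\|p'\|_2/\varepsilon'^2$, which is the form of your later formula.
\item ``Using $\alpha\ge\varepsilon/\log$ to absorb extra factors'' goes the wrong way. The leftover term $\sqrt{\alpha}\,n^{5/8}\log^{3/2}(n/\varepsilon)/\varepsilon^{3/2}$ is increasing in $\alpha$, so it needs an \emph{upper} bound on $\alpha$.
\item ``$\alpha\le nb$'' conflates your conditional bound $b$ with the unconditional one. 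With your $b$, the fact $|T|\le n$ only gives the vacuous $1\le nb$.
\end{itemize}

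\textbf{The missing step.} It is a one-line cap on the total mass of $T$, taken from the \emph{unconditional} $\ell_\infty$ hypothesis:
\[
\alpha\le|T|\max_{i\in T}p_i\le C_2\,\varepsilon n^{1/4}/\log(n/\varepsilon),
\]
and likewise for $\beta$. Plugging this into your $\ell_2$-sensitive count gives $\sqrt{\alpha}\,n^{5/8}\log^{3/2}(n/\varepsilon)/\varepsilon^{3/2}=O(n^{3/4}\log(n/\varepsilon)/\varepsilon)$, so your route does close.

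The $\ell_\infty$ bound on the conditionals is not actually needed; the paper uses only this cap. It runs the plain closeness tester, which needs $\max\{n^{2/3}/\varepsilon'^{4/3},\sqrt n/\varepsilon'^2\}$ conditional samples. After the $1/\alpha$ rejection-sampling overhead, this becomes
\[
\max\{n^{2/3}\alpha^{1/3}\log^{4/3}(n/\varepsilon)/\varepsilon^{4/3},\ \sqrt n\,\alpha\log^2(n/\varepsilon)/\varepsilon^2\}.
\]
Both terms are $O(n^{3/4}\log(n/\varepsilon)/\varepsilon)$ once $\alpha\le C_2\varepsilon n^{1/4}/\log(n/\varepsilon)$. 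So the hypothesis of the lemma enters only through this cap on $p(T)$ and $q(T)$.
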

\begin{proof}
By~\cite{DK16}, we know that $\tilde{O} \left(
\frac{n^{3 / 4}}{\varepsilon} \right)$ samples suffice to distinguish $p_T=q_T$ from $d_{\tmop{TV}}(p_{T}, q_{T}) \geq \varepsilon$ with high constant probability. For the sake of completeness, we briefly explain how to adapt their argument to our setting. First, it takes $O(\frac{\log(n/\varepsilon)}{\varepsilon})$ samples to check whether the mass of $p$ on $T$ is larger than $\Omega(\varepsilon/\log(n/\varepsilon))$, and similarly for $q$: if both have mass smaller that this, we accept, as $\sum_{i \in \overline{S}} | p_i - q_i | \leq p(T)+q(T) < \frac{\varepsilon}{\log (n /
  \varepsilon)}$. If only one of the two puts large enough mass on $T$, we reject, as this shows that $\sum_{i \in T} | p_i - q_i | \neq 0$. Otherwise, both $p,q$ put enough mass on $T$, and we continue.\smallskip

With $O
\left( \frac{\log^2 (n / \varepsilon)}{\varepsilon^2} \right)$ samples, we then confirm that
their probability masses on $T$ approximately match, i.e., that $| p (T) - q (T) | \leq O \left(
\frac{\varepsilon}{\log (n / \varepsilon)} \right)$. %
If they do not, we can output reject. Otherwise, we have
\begin{align*}
  \sum_{i \in T} | p_i - q_i |  &= \sum_{i \in T} | p(i | i \in T) p
  (T) - q (i | i \in T) q (T) |\\
  &\leq \sum_{i \in T} ( | p(i | i \in T) p (T) - p(i | i \in T) q
  (T) | + | p(i | i \in T) q (T) - q(i | i \in T) q (T) | )\\
  &= ~ | p (T) - q (T) | + q (T) \cdot \| p(\cdot | T) - q(\cdot | T) \|_1
  .
\end{align*}
Thus, if $\| p_{T} - q_{T} \| \geq \Omega \left(
\frac{\varepsilon}{\log (n / \varepsilon)} \right)$ and $| p (T) - q (T) |
\leq O \left( \frac{\varepsilon}{\log (n / \varepsilon)} \right)$ (for suitable choice of constants in the $O(\cdot)$, $\Omega(\cdot)$), then
we have that
\begin{align*} 
d_{\rm TV}(p_T,q_T) = \frac{1}{2} \| p(\cdot|T)-q(\cdot|T) \|_1 \geq \Omega\left(\frac{\varepsilon}{q
   (T) \cdot \log (n / \varepsilon)} \right).
\end{align*}
Recalling that $q (T),p (T) \geq \frac{\varepsilon}{\log (n / \varepsilon)}$ and $q (T) = \Theta \left( p (T) \right)$, we can sample from $p_T,q_T$ via rejection sampling at the cost of an expected $O(1/q(T))$, $O(1/p(T))$ factor in the sample complexity. Thus, we can use an off-the-shelf TV equivalence tester (e.g., from \citep[Theorem 1]{CDVV14}) to test $d_{\rm TV}(p(\cdot|T),q(\cdot|T)) \geq \Omega\left(\frac{\varepsilon}{q
   (T) \cdot \log (n / \varepsilon)} \right)$ vs.~$p(\cdot|T)=q(\cdot|T)$ at the cost of 
\begin{align*} \max \left\{ \frac{n^{2 / 3} q^{4 / 3} (T) \log^{4 / 3} (n /
   \varepsilon)}{\varepsilon^{4 / 3}}, \frac{n^{1 / 2} q^2 (T) \log^2 (n /
   \varepsilon)}{\varepsilon^2} \right\}
\end{align*}
samples from $p(\cdot|T),q(\cdot|T)$; which, by the above discussion, requires an expected
\begin{align*} \max \left\{ \frac{n^{2 / 3} q^{1 / 3} (T) \log^{4 / 3} (n /
   \varepsilon)}{\varepsilon^{4 / 3}}, \frac{n^{1 / 2} q (T) \log^2 (n /
   \varepsilon)}{\varepsilon^2} \right\} 
\end{align*}
samples from $p,q$ (which can be made high-probability via Markov's inequality, paying a constant factor in the sample complexity and a small additional probability of failure).
Since $q (T) \leq \frac{\varepsilon}{n^{3 / 4} \log(n/\varepsilon)} \cdot n =
\frac{\varepsilon \cdot n^{1 / 4}}{\log(n/\varepsilon)}$, we finally conclude that taking $O \left( \frac{n^{3
/ 4} \cdot \log(n/\varepsilon)}{\varepsilon} \right)$ samples suffices.
\end{proof}
\subsection*{Test 2: $\Lambda_S(p,q) = 0$ vs.~$\Lambda_S(p,q) \geq \Omega(\varepsilon)$}

We now look at the second~--~and more involved~--~case, namely, detecting when the entropy difference is far in the high mass regime.  That is, we are testing $\Lambda_S(p,q) = 0$ vs.~$\Lambda_S(p,q) \geq \varepsilon$.

We will define a suitable test statistic, $Z$, and analyze its expectation (with the goal of bounding its bias and relate it to $\Lambda_S(p,q)$), before bounding its variance to conclude with an application of Chebyshev's inequality. At a high level, our strategy is as
follows: either the empirical entropy difference $Z$ has \emph{low bias}, which means we
can simply rely on its output for the test, or it has \emph{high bias}, in which case this can be detected by performing additional checks to detect this bias.\smallskip

Let $m = \tilde{O} \left( \frac{n^{3 /
4}}{\varepsilon} + \frac{\log (n /
\varepsilon)}{\varepsilon^2} \right)$ be the number of samples taken from $p,q$, and assume the Poissonized setting: that is, instead of exactly $m$ samples, we take a random $\tmop{Poi}(m)$ number of samples.\footnote{This standard technique is valid as it only affects the sample complexity by constant terms and an arbitrarily small error probability adjustment, since a Poisson random variable tightly concentrates around its mean.  Its advantage is to provide additional independence between the empirical counts of each element, which is useful in the analysis.} Recall that by construction of the set $S\subseteq S_1$, we can assume $p_i+ q_i \geq \tilde{\Omega} \left(
\frac{\varepsilon}{n^{3 / 4}} \right)$ for all $i\in S$. Focusing on $S$, we run two further checks, each at cost
$O(\log^2 m/\varepsilon^2)$:
\begin{enumerate}
\item $|p(S) - q(S)| \leq O(\varepsilon/\log m)$, which implies
$|\sum_{i \in S}(p_i - q_i)| \log m \leq O(\varepsilon)$.
\item $\|p - q\|_2^2 \leq O(\varepsilon^2/\log^2 m)$, which is checked via testing at distance
$\varepsilon/\log m$ using an $\ell_2$-closeness tester (e.g., via learning $p,q$ in $\ell_2$ distance, see for instance~\cite{Canonne2020shortnotelearningdiscrete}).
\end{enumerate}
If either check rejects, we reject; otherwise we proceed assuming both guarantees
hold.

Letting $X_i$ (resp.~$Y_i$) be the number of occurrences of element $i\in S$ among the samples from $p$ (resp.~from $q$), we define our entropy difference estimator to be
\begin{align*}
Z := \sum_{i \in S} \frac{X_i - Y_i}{m}\,\log\frac{1}{X_i + Y_i},
\end{align*}
and we write $Z_i := \tfrac{X_i - Y_i}{m}\log\tfrac{1}{X_i + Y_i}$ for the
$i$-th summand and let $Z_i=0$ when $X_i+Y_i=0$. By the above Poissonization procedure, we have $X_i \sim \tmop{Poi}(m \cdot p_i)$ and $Y_i \sim \tmop{Poi}(m \cdot q_i)$ for all $i$, and the $(X_i)_i$ (resp.  $(Y_i)_i$) are independent.
We first bound the bias of this estimator $Z$:
\begin{lemma}
\label{lemma:upper_bound_of_expected_error_Z}
For any set $S \subseteq [n]$,
\begin{align*}
\left|\sum_{i \in S} (p_i - q_i)\log\tfrac{1}{m(p_i + q_i)}
      - \mathbb{E}[Z]\right|
\leq \sum_{i \in S}\frac{|p_i - q_i|}{m(p_i + q_i)}.
\end{align*}
\end{lemma}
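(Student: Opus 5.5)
The plan is to compute $\mathbb{E}[Z_i]$ exactly for each $i\in S$, using the Poissonized structure and \cref{fact:factorial_moment_functional_shifting}. This reduces the claim to a one-dimensional inequality about $\mathbb{E}[\log(W+1)]$ for a Poisson variable $W$. The lemma then follows by the triangle inequality over $i\in S$.

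\emph{Step 1 (conditioning on the total count).} Fix $i$ and write $s_i := p_i+q_i$ and $W_i := X_i+Y_i \sim \operatorname{Poi}(m s_i)$. Since $X_i,Y_i$ are independent Poissons, conditionally on $W_i=w$ we have $X_i\sim\operatorname{Bin}(w,p_i/s_i)$. Hence $\mathbb{E}[X_i-Y_i\mid W_i] = W_i\,\frac{p_i-q_i}{s_i}$. Because the factor $\log\frac{1}{W_i}$ is $W_i$-measurable (and $Z_i=0$ when $W_i=0$, consistent with $W_i\log\frac{1}{W_i}=0$ there), this gives
\[
\mathbb{E}[Z_i] = \frac{p_i-q_i}{m s_i}\,\mathbb{E}\!\left[W_i \log\tfrac{1}{W_i}\right].
\]

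\emph{Step 2 (shifting).} Apply \cref{fact:factorial_moment_functional_shifting} with factorial-moment order $1$ and $f(w)=\log\frac1w$ (the $w=0$ term vanishes anyway). This gives $\mathbb{E}[W_i\log\frac1{W_i}] = m s_i\,\mathbb{E}[\log\frac{1}{W_i+1}]$. Therefore
\[
\mathbb{E}[Z_i] = -(p_i-q_i)\,\mathbb{E}[\log(W_i+1)],
\]
and the $i$-th error term becomes
\[
(p_i-q_i)\bigl(\mathbb{E}[\log(W_i+1)] - \log(m s_i)\bigr).
\]

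\emph{Step 3 (the key inequality).} It remains to show that for $W\sim\operatorname{Poi}(\lambda)$,
\[
0\le \mathbb{E}[\log(W+1)]-\log\lambda \le 1/\lambda .
\]
For the upper bound, use Jensen (concavity of $\log$): $\mathbb{E}\log(W+1)\le\log(\lambda+1)\le \log\lambda+1/\lambda$.

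For the lower bound, use $\log x\ge 1-1/x$ with $x=(W+1)/\lambda$. This yields $\mathbb{E}\log\frac{W+1}{\lambda}\ge 1-\lambda\,\mathbb{E}[\frac1{W+1}]$. The explicit Poisson identity $\mathbb{E}[\frac{1}{W+1}]=\frac{1-e^{-\lambda}}{\lambda}$ then gives $\mathbb{E}\log\frac{W+1}{\lambda}\ge e^{-\lambda}\ge0$.

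With $\lambda=m s_i$, each summand's error is at most $\frac{|p_i-q_i|}{m(p_i+q_i)}$, and summing over $S$ gives the claim.

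I expect the main obstacle to be the lower bound in Step 3, since the naive Jensen direction goes the wrong way. If the $\log x\ge 1-1/x$ trick were not available, I would fall back on the explicit formula for $\mathbb{E}[1/(W+1)]$ in some other way. Everything else is bookkeeping about the convention at $W_i=0$.
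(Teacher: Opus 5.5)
Your proposal is correct and follows the paper's proof essentially step for step. Like the paper, you condition on $X_i+Y_i$ to obtain the binomial conditional mean, apply the Poisson shift identity with $f(w)=\log(1/w)$, and then bound $\mathbb{E}[\log((W+1)/\lambda)]$ above by Jensen and below by $\log x \geq 1-1/x$ together with $\mathbb{E}[1/(W+1)]=(1-e^{-\lambda})/\lambda$.
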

\begin{proof}
Let $J_i := X_i + Y_i \sim \mathrm{Poi}(\lambda_i)$ with
$\lambda_i := m(p_i + q_i)$. Conditional on $J_i = j$, the count $X_i$ is
$\mathrm{Bin}(j, p_i/(p_i+q_i))$-distributed, so
$\mathbb{E}[X_i - Y_i \mid J_i] = J_i \cdot (p_i - q_i)/(p_i + q_i)$.
By the law of total expectation, pulling the factor $\log(1/J_i)$ outside
the inner expectation,
\begin{align*}
\mathbb{E}[Z_i]
= \tfrac{1}{m}\,\mathbb{E}\left[(X_i - Y_i)\log\tfrac{1}{J_i}\right]
= \tfrac{1}{m}\,\mathbb{E}\left[\log\tfrac{1}{J_i}\cdot\mathbb{E}[X_i - Y_i \mid J_i]\right]
= \tfrac{p_i - q_i}{m(p_i+q_i)}\,\mathbb{E}[J_i \log(1/J_i)].
\end{align*}
We then apply the Poisson identity (Fact \ref{fact:factorial_moment_functional_shifting}) $\mathbb{E}[J_i f(J_i)] = \lambda_i\,\mathbb{E}[f(J_i+1)]$, taking $f(j) = \log(1/j)$:
\begin{align*}
\mathbb{E}[Z_i]
= \tfrac{p_i - q_i}{m(p_i+q_i)}\cdot \lambda_i\,\mathbb{E}[\log(1/(J_i+1))]
= -(p_i - q_i)\,\mathbb{E}[\log(J_i + 1)].
\end{align*}
Subtracting the target $(p_i - q_i)\log(1/\lambda_i)$,
\begin{align*}
(p_i - q_i)\log\tfrac{1}{\lambda_i} - \mathbb{E}[Z_i]
= (p_i - q_i)\,\mathbb{E}\left[\log\tfrac{J_i + 1}{\lambda_i}\right].
\end{align*}
We proceed to bound $\mathbb{E}[\log((J_i+1)/\lambda_i)]$ from above and below. By
Jensen's inequality applied to the concave function $\log$,
\begin{align*}
\mathbb{E}\!\left[\log\tfrac{J_i+1}{\lambda_i}\right]
\leq \log\!\left(1 + \tfrac{1}{\lambda_i}\right)
\leq \tfrac{1}{\lambda_i}.
\end{align*}
For the lower bound, the inequality $\log x \geq 1 - 1/x$ applied to
$x = (J_i+1)/\lambda_i$, together with
$\mathbb{E}[1/(J_i+1)] = (1 - e^{-\lambda_i})/\lambda_i$ for
$J_i \sim \mathrm{Poi}(\lambda_i)$, gives
\begin{align*}
\mathbb{E}\!\left[\log\tfrac{J_i+1}{\lambda_i}\right]
\geq 1 - \lambda_i\,\mathbb{E}\!\left[\tfrac{1}{J_i+1}\right]
= e^{-\lambda_i}
\geq 0.
\end{align*}
Combining these findings gives $|\mathbb{E}[\log((J_i+1)/\lambda_i)]| \leq 1/\lambda_i$
(using $e^{-\lambda_i} \leq 1/\lambda_i$ for all $\lambda_i > 0$), so
\begin{align*}
\bigl|(p_i - q_i)\log\tfrac{1}{\lambda_i} - \mathbb{E}[Z_i]\bigr|
\leq \tfrac{|p_i - q_i|}{\lambda_i}.
\end{align*}
Summing over $i \in S$ yields the claim.
\end{proof}

The bias of our estimator is determined by the RHS of \cref{lemma:upper_bound_of_expected_error_Z}; note that if $p
= q$, the RHS simplifies to $0$. As a result, a natural idea would be to use this RHS as a proxy, and to check if the RHS is noticeably large using another test: if it is larger than $\geq \Omega
(\varepsilon)$, we can immediately reject, as this is clear evidence that $p\neq q$. We do so in the following lemma:
\begin{lemma}
\label{lemma:check_if_Z_has_small_bias}
Suppose that $d_H^2(p,q) \leq O(\varepsilon)$. Then, with $s=O(n^{3/4}\log(n/\varepsilon)/\varepsilon)$
samples from $p,q$, one can distinguish between the following two cases with high constant probability:
\begin{enumerate}
    \item  $p = q$ (which implies $\sum_{i \in S} \frac{| p_i - q_i |}{m (p_i + q_i)} =
  0$).\label{case:a}
  \item $\sum_{i \in S} \frac{| p_i - q_i |}{(p_i + q_i)} \geq \Omega (m
  \varepsilon)$.\label{case:b}
\end{enumerate}
\end{lemma}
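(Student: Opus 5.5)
Let $s$ be the Poissonized sample size, and let $X_i\sim\mathrm{Poi}(sp_i)$, $Y_i\sim\mathrm{Poi}(sq_i)$ be the counts. The plan is to use the statistic already in \cref{alg:eet}, the Chan--Diakonikolas--Valiant--Valiant statistic restricted to $S$:
\[
T=\sum_{i\in S}\frac{(X_i-Y_i)^2-(X_i+Y_i)}{X_i+Y_i},
\]
with the convention that terms with $X_i+Y_i=0$ are zero. We accept if $T$ is below a threshold $\tau$ and reject otherwise.

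\textbf{Expectation.} By \cite{CDVV14}, each term has mean
\[
\mathbb{E}[T_i]=\frac{(p_i-q_i)^2}{p_i+q_i}\Bigl(1-\frac{1-e^{-s(p_i+q_i)}}{s(p_i+q_i)}\Bigr)\ge 0,
\]
which vanishes when $p=q$. On $S\subseteq S_1$ we have $p_i+q_i\ge C_1\varepsilon/(n^{3/4}\log(n/\varepsilon))$. Choosing $s$ a large constant times $n^{3/4}\log(n/\varepsilon)/\varepsilon$ gives $\lambda_i=s(p_i+q_i)\ge 2$, so $\mathbb{E}[T_i]\ge \tfrac12 s\,(p_i-q_i)^2/(p_i+q_i)$.

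\textbf{Variance.} Also from \cite{CDVV14}, in case~\ref{case:a} we have $\mathrm{Var}[T]=O(|S|)=O(n)$, and in general
\[
\mathrm{Var}[T]\le O(n)+O\Bigl(\sum_{i\in S}\frac{s(p_i-q_i)^2}{p_i+q_i}\Bigr)=O(n)+O(\mathbb{E}[T]).
\]

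\textbf{Converting case~\ref{case:b} into a lower bound on $\mathbb{E}[T]$.} This is the main obstacle, since the hypothesis concerns the first-moment ratio $\sum_{i\in S}|p_i-q_i|/(p_i+q_i)$ while $\mathbb{E}[T]$ is quadratic. Apply Cauchy--Schwarz:
\[
\sum_{i\in S}\frac{|p_i-q_i|}{p_i+q_i}\le \Bigl(\sum_{i\in S}\frac{(p_i-q_i)^2}{p_i+q_i}\Bigr)^{1/2}\Bigl(\sum_{i\in S}\frac{1}{p_i+q_i}\Bigr)^{1/2}.
\]
Since $|S|\le n$ and $p_i+q_i\ge \tau_0:=C_1\varepsilon/(n^{3/4}\log(n/\varepsilon))$ on $S$, the second factor is at most $\sqrt{n/\tau_0}$. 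Hence in case~\ref{case:b},
\[
\sum_{i\in S}\frac{(p_i-q_i)^2}{p_i+q_i}\ \ge\ \frac{m^2\varepsilon^2\tau_0}{n},
\qquad\text{so}\qquad
\mathbb{E}[T]\ \gtrsim\ \frac{s\,m^2\varepsilon^2\tau_0}{n}.
\]
With $m=\Omega(n^{3/4}/\varepsilon)$, $s\tau_0=\Theta(1)$ and $s\approx m$, this becomes
\[
\mathbb{E}[T]\ \gtrsim\ \frac{m^2\varepsilon^2}{n}\ \gtrsim\ \sqrt n,
\]
with a constant that grows with the constant chosen in $m$.

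The assumption $d_H^2(p,q)\le O(\varepsilon)$ serves to keep the $s$-scaling consistent. It bounds $\sum(p_i-q_i)^2/(p_i+q_i)=O(\varepsilon)$, which together with the quantity above tightens the parameter relations. If the Cauchy--Schwarz step loses polylog factors, I would absorb them into the logs in $m$ and $s$.

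\textbf{Chebyshev.} Set the threshold $\tau=C\sqrt n$ for a large constant $C$. In case~\ref{case:a}, $\mathbb{E}[T]=0$ and $\mathrm{Var}[T]=O(n)$, so $\Pr[T\ge\tau]\le 1/10$. In case~\ref{case:b}, $\mathbb{E}[T]\ge 2\tau$, so
\[
\Pr[T<\tau]\le\frac{\mathrm{Var}[T]}{(\mathbb{E}[T]/2)^2}\le \frac{O(n)+O(\mathbb{E}[T])}{\mathbb{E}[T]^2/4}\le \frac{1}{10}.
\]
The remaining care point is that the constants in $m$ and $s$ are chosen consistently. If the plain Cauchy--Schwarz bound proves too lossy, I would instead bucket the indices of $S$ by the magnitude of $p_i+q_i$ and apply Cauchy--Schwarz within each bucket.
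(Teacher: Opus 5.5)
Your proposal is correct and follows essentially the same route as the paper: the CDVV statistic $T$ restricted to $S$, the CDVV expectation and variance formulas with $\lambda_i \ge 2$ on $S$, and a Cauchy--Schwarz step turning the first-moment hypothesis into $\mathbb{E}[T] \gtrsim \sqrt{n}$. Your weighting by $1/\sqrt{p_i+q_i}$ together with the floor $\tau_0$ plays the same role as the paper's use of $m(p_i+q_i)\ge 1$, and you then threshold at $\Theta(\sqrt{n})$ via Chebyshev. The one real difference is that you bound $\mathrm{Var}[T] \le O(n) + O(\mathbb{E}[T])$ instead of invoking $d_H^2(p,q) \le O(\varepsilon)$, so your argument never uses the Hellinger hypothesis and your vague paragraph about it can simply be dropped; also, your first displayed formula for $\mathbb{E}[T_i]$ is missing a factor of $s$, which your next line correctly includes.
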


\begin{proof}
We use the test statistic
\begin{align*}
T \coloneqq \sum_{i \in S}\frac{(X_i - Y_i)^2 - (X_i + Y_i)}{X_i + Y_i},
\end{align*}
where $X_i \sim \tmop{Poi}(s \cdot p_i)$ and $Y_i \sim \tmop{Poi}(s \cdot q_i)$.
We show:
\begin{itemize}
\item In the first case, $\mathbb{E}[T] =0$, while in the second case $\mathbb{E}[T] \geq \Omega(\sqrt{n})$.
\item $\mathrm{Var}[T] \leq O(n)$ in both cases.
\end{itemize}
Chebyshev's inequality then will enable us to separate the two cases by thresholding $T$ at a value scaling as $\Theta(\sqrt{n})$.

\paragraph*{Analyzing $\mathbb{E}[T]$.} The first part is similar to the analysis from~\cite[Lemma~4]{CDVV14}, which relies on the same statistic $T$ (for a different purpose):  write $\delta_i := (p_i - q_i)/(p_i + q_i)$ and $J_i := X_i + Y_i$.
Conditional on $J_i = j$, we have $Y_i = j - X_i$ so $X_i - Y_i = 2X_i - j$, and
$X_i \sim \mathrm{Bin}(j, p_i/(p_i+q_i))$. Hence,
\begin{align*}
\mathbb{E}[X_i - Y_i \mid J_i = j] = j\delta_i,
\end{align*}
and the binomial variance $\mathrm{Var}[X_i \mid J_i = j] =
j\cdot\tfrac{p_i q_i}{(p_i+q_i)^2}$ gives
\begin{align*}
\mathrm{Var}[X_i - Y_i \mid J_i = j] = 4j\cdot\tfrac{p_i q_i}{(p_i+q_i)^2} = j(1 - \delta_i^2).
\end{align*}
Combining, $\mathbb{E}[(X_i - Y_i)^2 \mid J_i = j] = \mathrm{Var}[X_i - Y_i \mid J_i = j] + \mathbb{E}[X_i - Y_i \mid J_i = j]^2 = j(1 - \delta_i^2) + j^2\delta_i^2$. The $i$-th summand of $T$ is $\frac{(X_i - Y_i)^2 - J_i}{J_i}$, which we set to $0$ when $J_i = 0$ (the $0/0$ case). For $J_i = j \geq 1$,
\begin{align*}
\mathbb{E}\!\left[\frac{(X_i - Y_i)^2 - J_i}{J_i} \,\Big|\, J_i = j\right]
= \frac{(j^2 - j)\delta_i^2}{j} = (j - 1)\delta_i^2.
\end{align*}
Taking the outer expectation, with the indicator $\mathbf{1}\{J_i \geq 1\}$ enforcing the $0/0$ convention,
\begin{align*}
\mathbb{E}[T]
= \sum_{i \in S} \delta_i^2 \cdot \mathbb{E}[(J_i - 1)\mathbf{1}\{J_i \geq 1\}]
= \sum_{i \in S} \delta_i^2 \bigl(\lambda_i - 1 + e^{-\lambda_i}\bigr),
\end{align*}
where $\mathbb{E}[(J_i - 1)\mathbf{1}\{J_i \geq 1\}] = \mathbb{E}[J_i] - \Pr[J_i \geq 1] = \lambda_i - (1 - e^{-\lambda_i})$.

In particular, in the first case (when $p=q$), $\delta_i=0$ for every $i\in S$, and so $\mathbb{E}[T]=0$. We therefore now focus on the second case, for which we depart from~\cite{CDVV14}. By the construction of $S$ (\cref{lemma:filtering_small_elements}), $p_i + q_i \geq \Omega(\frac{\varepsilon}{n^{3/4} \log(n/\varepsilon)})$ for every $i \in S$, so choosing $s = \Omega(n^{3/4}\cdot\log(n/\varepsilon)/\varepsilon)$ with a sufficiently large constant gives $\lambda_i = s(p_i+q_i) \geq 2$ for all $i \in S$.  The bound $\lambda - 1 + e^{-\lambda} \geq \lambda - 1 \geq \lambda/2$ for $\lambda \geq 2$ then yields
\begin{align*}
\mathbb{E}[T]
\geq \tfrac{1}{2}\sum_{i \in S}\lambda_i\delta_i^2
= \tfrac{s}{2}\sum_{i \in S}\frac{(p_i - q_i)^2}{p_i + q_i}.
\end{align*}
By Cauchy--Schwarz,
\begin{align*}
\sum_{i \in S}\frac{(p_i - q_i)^2}{p_i + q_i}
\geq \frac{1}{n}\left(\sum_{i \in S}\frac{|p_i - q_i|}{\sqrt{p_i + q_i}}\right)^{\!2}.
\end{align*}
Since $m(p_i+q_i) \geq 1$ (in view of $i \in S$ and $m=\tilde{\Omega}(\frac{n^{3/4}}{\varepsilon})$), we have 
$\frac{1}{\sqrt{p_i+q_i}} \geq \frac{1}{\sqrt{m}(p_i+q_i)}$.
Applying this termwise and using the definition of the second case,
\begin{align*}
\sum_{i \in S}\frac{|p_i - q_i|}{\sqrt{p_i + q_i}}
\geq \frac{1}{\sqrt{m}}\sum_{i \in S}\frac{|p_i - q_i|}{p_i+q_i}
\geq \Omega(\sqrt{m}\varepsilon).
\end{align*}
Combining,
\begin{align*}
\mathbb{E}[T]
\geq \frac{s}{2n}\cdot\Omega(m\varepsilon^2)
= \Omega\!\left(\frac{s m\varepsilon^2}{n}\right),
\end{align*}
which is $\Omega(\sqrt{n}+\frac{\log^2 (n/\varepsilon)}{n^{1/4} \varepsilon})\geq\Omega(\sqrt{n})$ given the settings of $s = \tilde{\Theta}(n^{3/4}/\varepsilon)$ and $m = \tilde{\Theta}(n^{3/4}/\varepsilon + \frac{\log^2(n/\varepsilon)}{\varepsilon^2})$.
\paragraph*{Upper bound on $\mathrm{Var}[T]$.}
The variance bound from \cite[Lemma 5]{CDVV14} gives
\begin{align*}
\mathrm{Var}[T]
\leq 2\min\{n, s\}
+ 5s\sum_{i \in S}\frac{(p_i - q_i)^2}{p_i + q_i}.
\end{align*}
Under the lemma's assumption on the Hellinger distance,
$\sum_{i \in S}(p_i - q_i)^2/(p_i + q_i) = \Theta(d_H^2(p, q)) \leq O(\varepsilon)$,
so the second term is $\widetilde{O}(s \cdot \varepsilon)$, and
$\mathrm{Var}[T] \leq O(n + s \cdot \varepsilon)$, where $s = \tilde{O} \left( n^{3 / 4}/\varepsilon \right)$.  This scaling of $s$ leads to the simplified bound $\mathrm{Var}[T] \leq O(n)$. 

\paragraph*{Concluding via Chebyshev.}
In the first case, $\mathbb{E}[T] = 0$ and $\mathrm{Var}[T] \leq O(n)$, so
$|T| \leq O(\sqrt{n})$ with probability $\geq 9/10$. In the second case,
$\mathbb{E}[T] \geq C\sqrt{n}$ and
$\mathrm{Var}[T] \leq O(n) \leq \mathbb{E}[T]^2/C'$, so
$T \geq C''\sqrt{n}$ with probability $\geq 9/10$. Thresholding
at $\Theta(\sqrt{n})$ thus distinguishes the two cases.
\end{proof}

Having used the RHS of~\cref{lemma:upper_bound_of_expected_error_Z} as a proxy for the bias, and checked that it is small (and rejected otherwise), we are not yet done: we could still have both small bias for $Z$ (so that $\mathbb{E}[Z] \approx \sum_{i \in S} (p_i - q_i)\log\tfrac{1}{m(p_i + q_i)}$) \emph{and} large $\Lambda_S(p,q) = \left|\sum_{i \in S} (p_i - q_i)\log\tfrac{1}{m(p_i + q_i)}\right|$. However, now, since we know that $|\mathbb{E}[Z]| \approx \Lambda_S(p,q)$, to conclude it suffices to run one last check: namely, that $Z$ is small enough. 

Indeed, we know now that $Z$ has small bias: if we can also show, via a variance argument, that $Z$ concentrates around its expectation, then we will have $|Z|\approx_{\rm w.h.p.} |\mathbb{E}[Z]| \approx \Lambda_S(p,q)$, and checking that $|Z|$ is small will ensure (with high probability) that we can guarantee that $\Lambda_S(p,q)$ is small. Thus, our last missing piece is an upper bound on $\tmop{Var} [Z]$, which will let us show that $Z$ concentrates around its expectation.

\begin{lemma}\label{lem:varbound}
  The variance of the entropy difference estimator $Z$ satisfies
  \begin{align*} \tmop{Var} [Z] \leq O (\log^2 m)  \| p - q \|_2^2 + \frac{\log^2 m}{m} \leq O (\varepsilon^2)
     . \end{align*}
\end{lemma}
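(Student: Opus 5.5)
Under Poissonization the summands $Z_i$ are independent across $i\in S$, so $\operatorname{Var}[Z]=\sum_{i\in S}\operatorname{Var}[Z_i]$. I will bound each $\operatorname{Var}[Z_i]\le\mathbb{E}[Z_i^2]$ after a suitable centering, by conditioning on $J_i=X_i+Y_i\sim\mathrm{Poi}(\lambda_i)$, $\lambda_i=m(p_i+q_i)$, exactly as in the proof of \cref{lemma:upper_bound_of_expected_error_Z}. Given $J_i=j$, the difference $X_i-Y_i$ has conditional mean $j\delta_i$ and conditional variance $j(1-\delta_i^2)\le j$, where $\delta_i=(p_i-q_i)/(p_i+q_i)$. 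The law of total variance then gives
\begin{align*}
m^2\operatorname{Var}[Z_i]=\mathbb{E}\bigl[\log^2(J_i)\,J_i(1-\delta_i^2)\bigr]+\delta_i^2\operatorname{Var}\bigl[J_i\log J_i\bigr].
\end{align*}
I will bound the two terms separately.

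For the first term, \cref{fact:factorial_moment_functional_shifting} gives $\mathbb{E}[J_i\log^2 J_i]=\lambda_i\mathbb{E}[\log^2(J_i+1)]$. I will argue this is $O(\lambda_i\log^2 m)$ by truncation: on the event $J_i\le m^{O(1)}$, which fails with negligible probability since $\lambda_i\le 2m$, we have $\log(J_i+1)\le O(\log m)$. Summing over $i$ then yields $\sum_i \lambda_i\log^2 m/m^2\le 2\log^2 m/m$, which is the additive term in the statement.

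For the second term, I need $\operatorname{Var}[J\log J]=O(\lambda\log^2\lambda+\lambda)$ for $J\sim\mathrm{Poi}(\lambda)$. I will get this from the Lipschitz property of $g(j)=j\log j$ on the relevant range: $|g(j+1)-g(j)|\le O(\log(j+1)+1)$. A Poisson Efron--Stein or Chen-type inequality, $\operatorname{Var}[g(J)]\le\lambda\,\mathbb{E}[(g(J+1)-g(J))^2]$, then gives $\operatorname{Var}[g(J)]\le O(\lambda\log^2 m)$. Multiplying by $\delta_i^2/m^2$ gives
\begin{align*}
\frac{(p_i-q_i)^2}{(p_i+q_i)^2}\cdot\frac{m(p_i+q_i)\log^2 m}{m^2}=\frac{(p_i-q_i)^2\log^2 m}{m(p_i+q_i)}.
\end{align*}
Since $i\in S$ forces $m(p_i+q_i)\ge1$, this is at most $(p_i-q_i)^2\log^2 m$, and summing over $i$ gives the $O(\log^2 m)\|p-q\|_2^2$ term. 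The final inequality $\le O(\varepsilon^2)$ then follows from the $\ell_2$ check, which ensures $\|p-q\|_2^2\le O(\varepsilon^2/\log^2 m)$, together with $m\ge\log^2(n/\varepsilon)/\varepsilon^2$, which ensures $\log^2 m/m\le O(\varepsilon^2)$ since $\log m=O(\log(n/\varepsilon))$.

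The main obstacle is making the second-term bound rigorous: controlling $\operatorname{Var}[J\log J]$ cleanly, including the convention $Z_i=0$ at $J=0$ and the tails where $\log J$ is large. The first thing I will try is the Poisson Poincaré inequality $\operatorname{Var}[g(J)]\le\lambda\mathbb{E}[(Dg)(J)^2]$, combined with \cref{fact:factorial_moment_functional_shifting}-style shifting and the crude bound $\log(J+1)\le\log(2m)+O(1)$, which holds outside an event of probability $e^{-\Omega(m)}$. If needed, I will absorb that exceptional event using the trivial bound $|Z_i|\le O(\text{poly}(m))$.
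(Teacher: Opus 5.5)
Your proof is correct, but it takes a different route from the paper. The paper never centers. It bounds $\mathrm{Var}[Z_i]\le\mathbb{E}[Z_i^2]$ and restricts to the event $\{J_i\le 3m\}$, on which $|\log J_i|\le\log(3m)$, giving $\mathbb{E}[Z_i^2\mathbf{1}_{\mathcal{E}_i}]\le\frac{\log^2(3m)}{m^2}\bigl(m(p_i+q_i)+m^2(p_i-q_i)^2\bigr)$. It handles the complement with Cauchy--Schwarz, $\mathbb{E}[J_i^8]=O(m^8)$, and a Poisson Chernoff bound. That argument is fully elementary and self-contained. However, its $\|p-q\|_2^2$ term is just the squared mean $(\mathbb{E}Z_i)^2$, which an uncentered second-moment bound cannot remove. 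This is why the paper needs the separate $\ell_2$ check to reach $O(\varepsilon^2)$.

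You instead use the law of total variance, with binomial thinning given $J_i$, together with the Poisson Poincar\'e inequality $\mathrm{Var}[g(J)]\le\lambda\,\mathbb{E}[(g(J+1)-g(J))^2]$. That inequality is an outside tool you should cite (e.g., Chen, Klaassen, or Bobkov--Ledoux), but your route actually proves more. Since $\delta_i^2\le1$, both of your terms are at most $O\bigl(\lambda_i\log^2(\lambda_i+e)\bigr)/m^2$. Summing with $\lambda_i\le 2m$ gives $\mathrm{Var}[Z]=O(\log^2 m/m)$ outright, without $m(p_i+q_i)\ge1$ or any $\ell_2$ assumption. Your detour through $(p_i-q_i)^2\log^2 m$ only weakens this to the stated form. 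The point-mass example $p=\delta_1$, $q=\delta_2$ shows the centered bound is the right order: there $\mathrm{Var}[Z]\approx(\log m+1)^2/m$, while the paper's bound is about $2\log^2 m$.

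The obstacle you flag is also easier than you expect. Both $x\mapsto\log^2(x+e)$ and $x\mapsto(\log(x+e)+1)^2$ are concave on $[0,\infty)$. So Jensen gives $\mathbb{E}[\log^2(J+1)]\le\log^2(\lambda+e)$ and $\mathbb{E}[(g(J+1)-g(J))^2]\le(\log(\lambda+e)+1)^2$ with no truncation at all. This matters because your fallback ``$|Z_i|\le O(\mathrm{poly}(m))$'' is not true deterministically, since $J_i$ is unbounded; it holds only in expectation or with high probability.
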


\begin{proof}
Under Poissonization, the counts $\{X_i\}_{i \in S}$ and $\{Y_i\}_{i \in S}$ are
mutually independent across coordinates.  As a result, the $Z_i$'s are also independent, and
\begin{align*}
\mathrm{Var}[Z] = \sum_{i \in S}\mathrm{Var}[Z_i]
               \leq \sum_{i \in S}\mathbb{E}[Z_i^2].
\end{align*}

Fix any $i\in S$, letting $J_i := X_i + Y_i \sim \mathrm{Poi}(m(p_i + q_i))$, and defining the event
$\mathcal{E}_i := \{J_i \leq 3m\}$. Since $m(p_i+q_i) \leq m$, the Chernoff bound for Poisson random variables gives $\Pr[\mathcal{E}_i^c] \leq e^{-\Omega(m)}$.  Then:
\begin{itemize}
    \item On $\mathcal{E}_i$: $|\log(1/J_i)| \leq \log(3m)$ for $J_i \ne 0$, and noting from before that $Z_i=0$ when $J_i=0$, we have
$|Z_i|\,\mathbf{1}_{\mathcal{E}_i} \leq \tfrac{\log(3m)}{m}|X_i - Y_i|$.
Squaring and taking expectations,
\begin{align*}
\mathbb{E}[Z_i^2\,\mathbf{1}_{\mathcal{E}_i}]
\leq \tfrac{\log^2(3m)}{m^2}\,\mathbb{E}[(X_i - Y_i)^2]
= \tfrac{\log^2(3m)}{m^2}\bigl(m(p_i+q_i) + m^2(p_i-q_i)^2\bigr),
\end{align*}
using $\mathrm{Var}[X_i-Y_i] = m(p_i+q_i)$ and
$\mathbb{E}[X_i-Y_i] = m(p_i-q_i)$ (moments of Poisson r.v.'s).

\item On $\mathcal{E}_i^c$: from $|X_i - Y_i| \leq J_i$ and
$|\log(1/J_i)| = \log J_i \leq J_i$ (valid since $J_i > 3m \geq 1$),
we get $|Z_i|\,\mathbf{1}_{\mathcal{E}_i^c} \leq J_i^2/m$. Combining
Cauchy--Schwarz with the moment bound $\mathbb{E}[J_i^8] = O(m^8)$ (since Poisson
moments are polynomials in the mean, by~\cref{fact:factorial_moment_functional_shifting}) and the tail bound
$\Pr[\mathcal{E}_i^c] \leq e^{-\Omega(m)}$,
\begin{align*}
\mathbb{E}[Z_i^2\,\mathbf{1}_{\mathcal{E}_i^c}]
\leq \tfrac{1}{m^2}\,\mathbb{E}[J_i^4\,\mathbf{1}_{\mathcal{E}_i^c}]
\leq \tfrac{1}{m^2}\sqrt{\mathbb{E}[J_i^8]\,\Pr[\mathcal{E}_i^c]}
\leq e^{-\Omega(m)}.
\end{align*}

\end{itemize}

Summing over $i \in S$, using $\log^2(3m) \leq 4\log^2 m$ (for
$m \geq 3$) and $\sum_i(p_i+q_i) \leq 2$,
\begin{align*}
\mathrm{Var}[Z]
\leq O(\log^2 m)\,\|p-q\|_2^2
+ \tfrac{O(\log^2 m)}{m}
+ n \cdot e^{-\Omega(m)}.
\end{align*}
The last term is absorbed into $O(\tfrac{\log^2 m}{m})$ since
$m \geq \Omega(\log n)$ makes $n \cdot e^{-\Omega(m)} \leq 1/m$.
\end{proof}

\noindent
Combining Lemma~\ref{lemma:upper_bound_of_expected_error_Z} and
Lemma~\ref{lemma:check_if_Z_has_small_bias}, and conditioning on all tests passing, the expectation of $Z$ has small bias:
\begin{align*}
\left|\mathbb{E}[Z] - \sum_i (p_i - q_i)\log\tfrac{1}{(p_i + q_i)}\right|
\leq O(\varepsilon).
\end{align*}
By Lemma~\ref{lem:varbound}, $\mathrm{Var}[Z] \leq O(\varepsilon^2)$,
so Chebyshev gives $|Z - \mathbb{E}[Z]| \leq O(\varepsilon)$ with
probability at least $9/10$. Combining via the triangle inequality,
\begin{align*}
\Pr\left[\,\left|Z - \sum_i (p_i - q_i)\log\tfrac{1}{(p_i + q_i)}\right|
\leq O(\varepsilon)\,\right]
\geq \tfrac{9}{10}.
\end{align*}
\paragraph*{To conclude this subcase:} We choose a threshold for $|Z|$ at a suitable constant multiple of $\varepsilon$ that distinguishes
the two cases. If $\Lambda(p,q)=\left|\sum_{i\in S} (p_i - q_i)\log\tfrac{1}{(p_i+q_i)}\right| = 0$, then with probability at least $9/10$ the test on $T$ from Lemma~\ref{lemma:check_if_Z_has_small_bias} will not reject and $|Z| \leq O(\varepsilon)$, and so this overall test accepts. If $d_H^2(p,q) \leq O(\varepsilon)$ and $\bigl|\sum_{i\in S} (p_i - q_i)\log\tfrac{1}{(p_i+q_i)}\bigr| \geq \Omega(\varepsilon)$, then with probability at least $9/10$ either the test on $T$ from Lemma~\ref{lemma:check_if_Z_has_small_bias} rejects, or it does not and then $|Z| \geq \Omega(\varepsilon)$, and this overall test rejects.

\subsection*{Wrapping up}

Putting the above ingredients together:
\begin{itemize}
    \item If $p=q$, then with high constant probability, (1) the Hellinger test on $d_H(p,q)$ accepts, and (2) (a) the set $S$ is correctly identified, (b) the test on $\Lambda_{\overline{S}}(p,q)$ accepts, (c) (i) the test on $T$ (checking for low bias for $Z$) accepts, and (ii) the test on $Z$ (test on $\Lambda_{S}(p,q)$ assuming low bias for $Z$) accepts. Overall, the algorithm accepts with probability at least $9/10$ by a union bound.
    \item If $|H(p)-H(q)| \geq \varepsilon$, then with high constant probability (1) either the Hellinger test on $d_H(p,q)$ rejects; or it accepts because $d_H(p,q)$ is small, in which case $\Lambda(p,q)$ must be large; then, with high constant probability (2) (a) the set $S$ is correctly identified, (b) the test on $\Lambda_{\overline{S}}(p,q)$ rejects, or it accepts because $\Lambda_{\overline{S}}(p,q)$ is small, which means $\Lambda_{S}(p,q)$ must be large: in which case (c) (i) the test on $T$ (checking for low bias for $Z$ assuming $d_H(p,q)$ is small) rejects, or it accepts because the bias is small, in which case (ii) the test on $Z$ (test on $\Lambda_{S}(p,q)$ assuming low bias for $Z$) rejects. Overall, the algorithm rejects with probability at least $9/10$ by a union bound.
\end{itemize}
Overall, each subtest in \cref{alg:eet} uses no more than $O\left(\frac{n^{3/4} \log(n/\varepsilon)\log n}{\varepsilon}\right)$ or $O\left(\frac{\log^2(n/\varepsilon)}{\varepsilon^2}\right)$ samples, and hence the total sample complexity of our algorithm is $$O \left(\frac{n^{3/4} \log(n/\varepsilon)\log n}{\varepsilon} + \frac{\log^2(n/\varepsilon)}{\varepsilon^2}\right).$$
All our test statistics can be computed in $O(n)$ time once all $X_i, Y_i$ are collected.\footnote{The statistics for testing Hellinger square is the same as $T$ (albeit with different threshold; see \cite[Theorem 5]{DaskalakisKW18}). $T$ can be computed in $O(n)$ time once all $X_i$, $Y_i$ are known. Computing $S \gets \bigl\{\, i \in [n] : X_i + Y_i \geq \Omega (m_1 \cdot \varepsilon / (n^{3/4} \log(n/\varepsilon))) \,\bigr\}$ can be done in $O(n)$ time and calculating the empirical of $S$ can be implemented in $O(m_2)$ time. As for the conditional TV tester, the rejection sampling will cost as many samples as we take in total, and for TV closeness testing we use the same statistic $T$ (see, e.g., \cite[Theorem 4]{DaskalakisKW18}). For $\ell_2$ tester, the statistic is also computable in $O(n)$ time \cite[Section 3]{CDVV14}. Assuming that we can compute $\log(x)$ in constant time, one can compute the statistic $Z$ in $O(n)$ time.}
To compute $X_i, Y_i$, one can first sort over all samples which takes $O(s \log s)$ time, where $s$ is the number of samples in one of the subtests, and then sum them up. Thus, the time complexity of our algorithm is $O(N \log N)$, where $N$ is the total number of samples.
This concludes the proof of~\cref{theo:eet}.

\section{Bayesian Network Equivalence Testing}
In this section, we show how to leverage our Entropy Equivalence Testing (\EET) algorithm (\cref{alg:bayesnet}) to obtain an \emph{equivalence testing} (closeness testing) algorithm for bounded-degree Bayesian networks (Bayes nets). Specifically, the problem we consider is the following:
\begin{framed}
\noindent\textbf{Bayes-Net Equivalence Testing}:
Given sample access to distributions $p,q$ over $\{0,1\}^n$, each Markov with respect to a DAG of in-degree at most $d$ (the DAG is unknown, and possibly distinct between $p$ and $q$), and parameters $\varepsilon\in(0,1/2]$ and $\delta\in(0,1]$, distinguish, with probability at least $1-\delta$, between the following two cases: (1)~$p=q$, and (2)~$d_{\rm TV}(p,q) \ge \varepsilon$.
\end{framed}

\begin{definition}\label{definition:projection_to_DAG}
  A \emph{projection of a Bayes net} $p$ on $\{ 0, 1 \}^n$ onto a DAG $G$ is denoted
  by $p_G$, and is defined by its probability mass function (PMF) as follows:
  \[ p_G (X_1, \ldots, X_n) = \prod_{i = 1}^n p (X_i \mid \Pi_i^G), \]
  where $\Pi_i^G$ is the set of parents of $X_i$ in $G$. 
  Moreover, with a slight abuse of notation, we write $p_{X_i, \Pi_i}$ or $p_{X_i, \Pi_i}(x_i, \pi_i)$ for the marginal distribution of $p$ on the subset $\{ X_i, \Pi_i \}.$
\end{definition}

The baseline here is the testing-by-learning approach, whereby one learns the two unknown in-degree $d$ Bayes nets to TV distance $\frac{\varepsilon}{3}$ and checks the distance between the two learned hypotheses. It is known that learning an in-degree $d$ Bayes net in total variation distance can be done in $\tilde{O}(2^d n / \varepsilon^2)$ samples (see~\cite[Appendix A]{CDKS20}): to the best of our knowledge, the time complexity of any learning algorithm achieving this bound is exponential, namely, $2^{\tilde{\Omega}(n)}$. In contrast, we obtain a testing algorithm with drastically better running time (for all $d=o(n)$), and significantly better sample complexity whenever $d \gg \log(n/\varepsilon^2)$.

\begin{theorem}[\cref{theo:bayesnet:closeness:intro}, restated]\label{theo:bayesnet:closeness}
There exists an algorithm (\cref{alg:bayesnet}) that, given sample access to two Bayesian networks $p$ and $q$ over $\{0, 1\}^{n}$ with in-degree at most $d$, as well as a parameter $\varepsilon\in(0,1]$, takes
\[
s = \tilde{O} \left( \min \left\{ \frac{2^{3 d / 4} n}{\varepsilon^2},
   \frac{2^{2 d / 3} n^{4 / 3}}{\varepsilon^{8 / 3}} \right\} +
   \frac{n^2}{\varepsilon^4} \right) , 
\]
samples from $p,q$, and distinguishes with probability at least $2/3$ between (1) $p=q$ and (2) $d_{\rm TV}(p,q) \geq \varepsilon$. Moreover, the algorithm runs in time $\tilde{O}(n^{d+1} s)$.
\end{theorem}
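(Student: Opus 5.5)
The plan follows the blueprint of \citet{CY24}, with our \EET\ tester (\cref{theo:eet}) replacing their entropy identity tester. First, apply the mixing trick of \cref{lemma:mass_floor} coordinatewise. Replace $p,q$ by $\tilde p = (1-\eta)p + \eta U$ and $\tilde q$ defined analogously, with $\eta = \Theta(\varepsilon)$. Then $d_{\rm TV}(\tilde p,\tilde q) \geq (1-\eta)\varepsilon$, so farness is preserved up to constants. To keep the Bayes net structure, I would instead mix each conditional $p(X_i\mid \Pi_i)$ with uniform at rate $\Theta(\varepsilon/n)$. This is simulable by rerandomizing coordinate $i$ independently with probability $\eta/n$. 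It ensures every local marginal $p_{X_i,\Pi_i}$ has all conditional probabilities at least $\Omega(\varepsilon/n)$, while moving $p$ by at most $\varepsilon/10$ in TV. This floor is what makes KL and Hellinger comparable later.

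Second, the structural step. For each $i$ and each candidate parent set $\Pi\subseteq[n]\setminus\{i\}$ of size at most $d$ (there are $n^{d}$ of them), run the \EET\ tester on the marginals of $p$ and $q$ over $\{X_i\}\cup\Pi$. The domain has size $2^{d+1}$, the accuracy is $\varepsilon' = \Theta(\varepsilon^2/n)$, and the failure probability is $\delta = 1/(10 n^{d+1})$, handled by amplification at $O(d\log n)$ cost. By \cref{theo:eet:full_informal} this costs $\tilde O(\min\{2^{3d/4}n/\varepsilon^2, 2^{2d/3}n^{4/3}/\varepsilon^{8/3}\} + n^2/\varepsilon^4)$ samples, which matches the claimed bound exactly. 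The samples can be reused across all tests, since each test only looks at a marginal. Reject if any test rejects. Otherwise, all local entropies $H(p_{X_i,\Pi})$ and $H(p_{\Pi})$ agree with those of $q$ up to $\varepsilon'$.

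Third, conclude from this. Let $G_q$ be the DAG of $q$. The Chow--Liu identity gives $d_{\rm KL}(p\|p_{G}) = \sum_i H(p_{X_i}\mid p_{\Pi_i^G}) - H(p)$, and likewise for $q$. Following \cite[Lemma 3.4]{CY24}, matching local entropies at precision $\varepsilon^2/n$ would imply $d_{\rm KL}(p\| p_{G_q}) \lesssim \varepsilon^2$ (and symmetrically). The idea is to compare $\sum_i H(p_{X_i}\mid \Pi_i^{G_q})$ with the corresponding sum for $q$, which equals $H(q)$. What is missing is a handle on $H(p)-H(q)$, and this is the step I expect to be the main obstacle, since $H(p)$ itself is not local. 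I would use $G_p$ as well. $H(p)=\sum_i H(p_{X_i}\mid\Pi_i^{G_p})$ is close to $\sum_i H(q_{X_i}\mid\Pi_i^{G_p}) \ge H(q)$, and symmetrically $H(q)\gtrsim H(p)$. Sandwiching then gives $|H(p)-H(q)|\le 2n\varepsilon'$ and both KL bounds $\lesssim n\varepsilon' = \varepsilon^2$.

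Finally, by Pinsker, $p\approx p_{G_q}$ and $q = q_{G_q}$ in TV up to $O(\varepsilon)$. It therefore suffices to test $p_{G_q}$ against $q_{G_q}$. Subadditivity of Hellinger over a common DAG gives $d_H^2(p_{G},q_{G})\le\sum_i d_H^2(p_{X_i,\Pi_i^G}, q_{X_i,\Pi_i^G})$, so some local marginal must be $\Omega(\varepsilon^2/n)$-far in squared Hellinger. Rather than learning $G_q$, I would simply run Hellinger closeness testers \cite{DaskalakisKW18} on every $(d+1)$-subset at threshold $\varepsilon^2/n$, again with union-bound amplification. This costs $\tilde O(\min\{2^{3d/4}n/\varepsilon^2,2^{2d/3}n^{4/3}/\varepsilon^{8/3}\})$ samples. \cite[Lemma 3.1]{SST25} together with the mass floor from the first step can also be used to pass between KL and Hellinger where needed. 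The running time is $n^{d+1}$ subsets, each handled in time near-linear in $s$ by \cref{theo:eet}, giving $\tilde O(n^{d+1}s)$.
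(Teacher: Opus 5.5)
Your Steps 2--3 are sound. The sandwich $H(p)\approx\sum_i H(q_{X_i}\mid \Pi_i^{G_p})\geq H(q)$, together with its symmetric counterpart, is a correct self-contained proof of \cite[Lemma~3.4]{CY24}, which the paper simply cites.

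\textbf{The gap is in the last step.} The inequality $d_H^2(p_G,q_G)\leq\sum_i d_H^2(p_{X_i,\Pi_i^G},q_{X_i,\Pi_i^G})$, written with the local marginals of $p$ itself, fails in general when $p$ is not Markov w.r.t.\ $G$. Hellinger subadditivity concerns the marginals of $p_G$, and in general $(p_G)_{\Pi_i}\neq p_{\Pi_i}$. For instance, take $G\colon X_1\to X_3\leftarrow X_2$. Let $X_1,X_2$ be uniform with $\Pr[X_1=X_2]=1-\eta$ under both $p$ and $q$. Let the conditionals of $X_3$ agree when $x_1=x_2$ but have disjoint supports when $x_1\neq x_2$. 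Then the right-hand side is $\eta$, while the left-hand side is $1/2$.

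Your guarantee $d_{\rm KL}(p\|p_{G_q})=O(\varepsilon^2)$ is global. By data processing it only yields $d_H^2((p_{G_q})_{X_i,\Pi_i},p_{X_i,\Pi_i})=O(\varepsilon^2)$ per node. That is far too weak to transfer a per-node gap of order $\varepsilon^2/n$ to the marginals your testers actually see.

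The paper instead uses the exact identity
\[
d_{\rm KL}(\tilde p\|\tilde q_G)-d_{\rm KL}(\tilde p\|\tilde p_G)=\sum_{i=1}^n\bigl(d_{\rm KL}(\tilde p_{X_i,\Pi_i}\|\tilde q_{X_i,\Pi_i})-d_{\rm KL}(\tilde p_{\Pi_i}\|\tilde q_{\Pi_i})\bigr),
\]
which holds for arbitrary distributions and involves only true local marginals. Pinsker, the triangle inequality, and approximate Markovness of both mixtures make the left side $\Omega(\varepsilon^2)$. Hence some local KL is $\Omega(\varepsilon^2/n)$, and only then is KL converted to squared Hellinger via \cite[Lemma~3.1]{SST25}. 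That conversion, which needs a mass floor, is the crux of the argument rather than an optional detail. It is also why the Hellinger threshold is $\varepsilon^2/(dn\log(dn/\varepsilon))$ and not $\varepsilon^2/n$.

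\textbf{The floor creates your second problem.} The identity must be applied to floored distributions, which are no longer exactly Markov, and your structure-preserving mixture is not simulable. Rerandomizing each coordinate independently after sampling passes $p$ through a product noise channel. But the children were generated from the unperturbed parents, so the result is generally not Markov w.r.t.\ $G$. For the chain $X_1\to X_2\to X_3$ with $X_1$ uniform and $X_3=X_2=X_1$, the noisy copies satisfy $\tilde X_1\not\perp\tilde X_3\mid\tilde X_2$. Genuinely mixing each conditional would require knowing $G$ and being able to sample conditionally.

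The simulable option is the global mixture. Its weight must be $\Theta(\varepsilon^2/(dn\log(n/\varepsilon)))$ rather than $\Theta(\varepsilon)$, so that local entropies move by only $O(\varepsilon^2/n)$ and $H$ by $O(\varepsilon^2)$. The missing ingredient is then \cref{lemma:mixture_preserve_approx_DAG}. Via the Chow--Liu decomposition it gives $|d_{\rm KL}(\tilde p\|\tilde p_G)-d_{\rm KL}(p\|p_G)|=O(\varepsilon^2)$, so both mixtures stay approximately Markov w.r.t.\ $G$ and the identity above goes through.
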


\begin{algorithm}[ht]
\caption{Closeness testing for Bayesian networks}
\label{alg:bayesnet}
\begin{algorithmic}[1]
\Require Sample access to $p$ and $q$, both in-degree $d$ Bayesian networks over $\{0,1\}^n$; accuracy parameter $\varepsilon \in (0,1]$.
    \State Set $\varepsilon_1 \gets \Theta\left(\frac{\varepsilon^2}{n}\right)$, $\varepsilon_2 \gets \Theta\left( \frac{\varepsilon^2}{d \cdot n \cdot \log (d n / \varepsilon)}
  \right)$, $\delta \gets \frac{1}{20\binom{n}{d+1}}$
  \State Set \[
  m \gets \tilde{O} \left( \min \left\{ \frac{2^{3 d / 4} \cdot n}{\varepsilon^2},
       \frac{2^{2 d / 3} \cdot n^{4 / 3}}{\varepsilon^{8 / 3}} \right\} \right)  +  O \left( \frac{d^3
       n^2 \cdot \log^2 (n / \varepsilon)}{\varepsilon^4} \right)
       \]
  \State Take a multiset set $S_p$ (resp. $S_q$) of $m$ samples from $\tilde{p}$ (resp. $\tilde{q}$), defined by
  \[
    \tilde{p} \coloneqq \left( 1 - \frac{C_3 \cdot \varepsilon^2}{d n \log(n/\varepsilon)} \right) p + \frac{C_3 \cdot \varepsilon^2}{d n \log(n/\varepsilon)} \cdot U
\quad \text{and} \quad\tilde{q} \coloneqq \left( 1 - \frac{C_3 \cdot \varepsilon^2}{d n \log(n/\varepsilon)} \right) q + \frac{C_3 \cdot \varepsilon^2}{d n \log(n/\varepsilon)} \cdot U
\]
  \ForAll{subset $T\subseteq [n]$ with $|T|=d+1$} \Comment{$\binom{n}{d + 1}$ possible subsets}
    \State Let $\mathcal{T} \gets \{0,1\}^T$
    \State Test $\tilde{p}_{\mathcal{T}} = \tilde{q}_{\mathcal{T}}$ vs. $| H
  (\tilde{p}_{\mathcal{T}}) - H (\tilde{q}_{\mathcal{T}}) | \geq \varepsilon_1$ with failure probability $\delta$, using $S_p,S_q$ \\ \Comment{By~\cref{theo:eet:full_informal}}
    \State Test $\tilde{p}_{\mathcal{T}} = \tilde{q}_{\mathcal{T}}$ vs.
  $d^2_H (\tilde{p}_{\mathcal{T}}, \tilde{q}_{\mathcal{T}}) \geq \varepsilon_2$ with failure probability $\delta$, using $S_p,S_q$ \Comment{By~\cite{DK16}}
  \If{either test rejects}
    \Return \textsf{reject}
  \EndIf
  \EndFor
  
  \State \Return \textsf{accept}
\end{algorithmic}
\end{algorithm}
\begin{proof}  The pseudocode is provided in~\cref{alg:bayesnet}. 
\ifnum\randomcr=1
Due to space constraints, we defer the proof of correctness to the full version~\cite{CPSY26arxiv}.
\else
Before arguing the correctness of the algorithm, we analyze its sample and time complexities.
    \paragraph*{Sample complexity.} The sample complexity comes from running the two local tests, the entropy difference test and Hellinger tests, once for each possible subset $[n]$ of $d+1$ coordinates, using the same set of samples for all tests. Each test is thus run over a discrete domain of size at most $2^{d+1}$. 
    
    For the entropy test, we set the accuracy to be $\varepsilon' \coloneqq \varepsilon^2/n$ and the error probability $\delta \coloneqq 1/(20n^{d+1})$ (so that via a union bound we can conclude that none of the tests are incorrect, with probability at least $19/20$), which leads to the following number of samples via \cref{theo:eet:full_informal}: 
    \begin{align*}
      \log (n^{d + 1})  \left( \tilde{O}  \left( \min \left\{ \frac{(2^{d + 1})^{3
      / 4}}{\varepsilon^2 / n}, \frac{(2^{d + 1})^{2 / 3}}{(\varepsilon^2 / n)^{4
      / 3}} \right\} \right) + O \left( \frac{\log^2 (2^{d + 1} / (\varepsilon^2 /
      n))}{(\varepsilon^2 / n)^2} \right) \right),
    \end{align*}
   which in turn simplifies to
    \begin{align*} \tilde{O} \left( \min \left\{ \frac{2^{3 d / 4} \cdot n}{\varepsilon^2},
       \frac{2^{2 d / 3} \cdot n^{4 / 3}}{\varepsilon^{8 / 3}} \right\} \right) + O \left( \frac{d^3
       n^2 \cdot \log^2 (n / \varepsilon) \cdot \log(n)}{\varepsilon^4} \right) . \end{align*}
    For the Hellinger test, we set the threshold to be $\varepsilon'' = O \left( \sqrt{\frac{\varepsilon^2}{d \cdot n \cdot \log (d n / \varepsilon)}}
    \right)$ and the same failure probability $\delta$, for a sample complexity of 
    \begin{align*}\tilde{O} \left( \min \left\{ \frac{2^{3 d / 4} n}{\varepsilon^2}, \frac{2^{2
    d / 3} n^{4 / 3}}{\varepsilon^{8 / 3}} \right\} \right) . \end{align*}
    The total sample complexity is then
    \begin{align*} \tilde{O} \left( \min \left\{ \frac{2^{3 d / 4} \cdot n}{\varepsilon^2},
       \frac{2^{2 d / 3} \cdot n^{4 / 3}}{\varepsilon^{8 / 3}} \right\} + \frac{
       n^2}{\varepsilon^4} \right),
       \end{align*}
    as stated. (Note that the exponential terms in $d$ allow $\tilde{O}(\cdot)$ notation to hide ${\rm poly}(d)$ terms.)

   \paragraph*{Running time.} The algorithm runs over all $O(n^{d+1})$ subsets of size $d+1$, and for each conducts two local tests: an entropy difference test (\cref{theo:eet:full_informal}) and a Hellinger closeness test. The total runtime is dominated by $O(n^{d+1})$ times the running times of these two subroutines, each of which runs in time $O(m \log m)$ on its sample budget $m$, for a total runtime of
\begin{align*}
\tilde{O}\!\left( n^{d+1} \cdot \left( \min\!\left\{ \frac{2^{3d/4} \cdot n}{\varepsilon^2}, \frac{2^{2d/3} \cdot n^{4/3}}{\varepsilon^{8/3}} \right\} + \frac{n^2}{\varepsilon^4} \right) \right)\,.
\end{align*}
    \paragraph*{Correctness part 1 (completeness).} We first argue about the completeness of the algorithm, before turning to the more challenging step of establishing soundness. Assume $p=q$. Then $\tilde{p}=\tilde{q}$, and $\tilde{p}_{\mathcal{T}} = \tilde{q}_{\mathcal{T}}$ for every $\mathcal{T}$: by a union bound over all $2\cdot \binom{n}{d+1}$ tests performed, all successfully accept with probability at least $\frac{9}{10}$, in which case the algorithm returns $\textsf{accept}$.

    \paragraph*{Correctness part 2 (soundness).} Suppose $d_{\rm TV}(p,q) \geq \varepsilon$. In what follows, we will assume that all $2\cdot \binom{n}{d+1}$ behaved correctly, which by a union bound occurs with probability at least $9/10$.

        We will argue that then either (1) at least one of the \EET~tests must return $\textsf{reject}$, or  (2) at least one of the Hellinger tests must return $\textsf{reject}$. To do so, assume (1) does not hold, i.e., that all $\binom{n}{d+1}$ \EET~tests returned $\textsf{accept}$. 
        Our starting point is the following lemma from~\cite{CY24}:
    \begin{lemma}[{\cite[Lemma~3.4]{CY24}}]
        \label{lem:cy24:kl}
        Let $\varepsilon\in(0,1/2]$, and $p$ and $q$ be two Bayes nets with in-degree at most $d$ supported on $\{0,1\}^n$ such that, for each subset $T \subseteq [n]$ of $d+1$ coordinates, letting $\mathcal{T} \coloneqq \{0,1\}^T$,
        \[
            |H(p_{\mathcal{T}}) - H(q_{\mathcal{T}}) | \leq \frac{\varepsilon^2}{n}\,.
        \]
        Suppose further that $p$ is Markov w.r.t. $G$ and $q$  is Markov w.r.t. $G'$. Then
        \[
        d_{\tmop{KL}} (p \| p_{G'}) \leq O
(\varepsilon^2) \quad\text{ and }\quad d_{\tmop{KL}} (q \| q_{G}) \leq O
(\varepsilon^2)\,.
        \]
    \end{lemma}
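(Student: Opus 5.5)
The plan is to prove the statement through the Chow--Liu decomposition of KL divergence onto a DAG. For any distribution $p$ on $\{0,1\}^n$ and any DAG $G'$ with parent sets $\Pi_i'$, we have $p_{G'}(x)=\prod_i p(x_i\mid \pi_i')$. Hence
\begin{align*}
d_{\tmop{KL}}(p\,\|\,p_{G'}) = \sum_{i=1}^n H_p(X_i\mid \Pi_i') - H(p) = \sum_{i=1}^n \bigl(H(p_{X_i,\Pi_i'}) - H(p_{\Pi_i'})\bigr) - H(p).
\end{align*}
Applying the same identity with the DAG $G$, with respect to which $p$ is Markov, gives $d_{\tmop{KL}}(p\|p_G)=0$. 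So $H(p)=\sum_i \bigl(H(p_{X_i,\Pi_i}) - H(p_{\Pi_i})\bigr)$, where $\Pi_i$ are the parent sets in $G$. Write $L_p(G) := \sum_i \bigl(H(p_{X_i,\Pi_i}) - H(p_{\Pi_i})\bigr)$ for this ``local entropy score'', so that
\begin{align*}
d_{\tmop{KL}}(p\,\|\,p_{G'}) = L_p(G') - L_p(G).
\end{align*}

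Next, each of the $2n$ entropies in $L_p(G)$ and in $L_p(G')$ is the entropy of a marginal on at most $d+1$ coordinates, because the in-degree is at most $d$. By hypothesis, each such entropy differs from the corresponding entropy under $q$ by at most $\varepsilon^2/n$. Summing the $2n$ errors gives $|L_p(G)-L_q(G)|\le 2\varepsilon^2$, and likewise $|L_p(G')-L_q(G')|\le 2\varepsilon^2$. Therefore
\begin{align*}
d_{\tmop{KL}}(p\,\|\,p_{G'}) \le L_q(G') - L_q(G) + 4\varepsilon^2 .
\end{align*}

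Now switch roles. Since $q$ is Markov with respect to $G'$, we have $L_q(G') = H(q)$. Applying the decomposition to $q$ and $G$ gives $L_q(G) - H(q) = d_{\tmop{KL}}(q\|q_G) \ge 0$. Hence $L_q(G')-L_q(G)\le 0$, and so $d_{\tmop{KL}}(p\|p_{G'})\le 4\varepsilon^2 = O(\varepsilon^2)$. The bound $d_{\tmop{KL}}(q\|q_G)=O(\varepsilon^2)$ follows by the symmetric argument, exchanging $(p,G)$ with $(q,G')$.

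The main obstacle is the transfer step. The parent sets $\Pi_i$ and the family sets $\{X_i\}\cup\Pi_i$ can have fewer than $d+1$ elements, while the hypothesis is stated only for subsets of exactly $d+1$ coordinates. Entropies of smaller marginals cannot be read off directly from those of larger sets. I would handle this by padding: replace each parent set by a superset of size $d$ (by adding arbitrary extra coordinates), so that every family has size exactly $d+1$. This is a legitimate choice of parents for a DAG of in-degree at most $d$, as long as the padded graph stays acyclic. To guarantee acyclicity, take a topological order of $G$ and pad only with earlier vertices, which is possible for $i>d$. Padding consistently with a topological order keeps $p$ Markov with respect to the padded graph. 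It also keeps $q_{G}$ a valid projection, so the inequality $L_q(G)\ge H(q)$ still holds.

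The first $d$ vertices in the order cannot be padded in this way. For these I would group their joint contribution into a single marginal on $d+1$ coordinates via the chain rule, so that they telescope into one $(d+1)$-set entropy. The same padding is applied to $G'$. If this bookkeeping becomes messy, the fallback is to interpret the hypothesis (as tested in \cref{alg:bayesnet}) as holding for all subsets of size at most $d+1$. In that case the summation argument above goes through verbatim.
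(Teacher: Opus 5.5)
\textbf{The padding step is a genuine gap, and it cannot be filled: the statement read literally is false.} Your core argument is correct. It is the standard Chow--Liu argument behind the cited lemma. The paper gives no proof of its own, only the citation to \cite{CY24}, but it uses the same decomposition in \cref{lemma:mixture_preserve_approx_DAG}. The steps are: $d_{\mathrm{KL}}(p\|p_{G'})=L_p(G')-L_p(G)$; transfer the $4n$ local entropies to $q$; then use $L_q(G')-L_q(G)=-d_{\mathrm{KL}}(q\|q_G)\le 0$. This gives $d_{\mathrm{KL}}(p\|p_{G'})+d_{\mathrm{KL}}(q\|q_G)\le 4\varepsilon^2$ in one shot. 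However, the transfer step needs $|H(p_T)-H(q_T)|\le\varepsilon^2/n$ for every family $\{X_i\}\cup\Pi_i$ and every parent set $\Pi_i$ of both $G$ and $G'$. That means sets of every size up to $d+1$, including singletons. Your padding argument does not derive this from the exact-size hypothesis.

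The padding fails for two reasons.
\begin{itemize}
\item After padding, the non-initial parent sets have size exactly $d$. So $L_p(\hat G)=\sum_i\bigl(H(p_{X_i,\hat\Pi_i})-H(p_{\hat\Pi_i})\bigr)$ still contains $d$-set entropies that the hypothesis does not control. The chain-rule grouping only absorbs the first $d+1$ vertices of the order.
\item Padding $G'$ changes the quantity you are bounding. Since conditioning reduces entropy, $d_{\mathrm{KL}}(p\|p_{\hat G'})\le d_{\mathrm{KL}}(p\|p_{G'})$, so a bound for the padded graph says nothing about $G'$. Leaving $G'$ unpadded instead leaves its small families uncovered.
\end{itemize}

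No bookkeeping can repair this, because the statement with exactly $d+1$ coordinates is false. Take $d=1$, $n\ge 2$, and a uniform bit $B$.
\begin{itemize}
\item Let $p$ be the law of $(B,B,\dots,B)$, which is Markov w.r.t.\ the path $G\colon 1\to 2\to\cdots\to n$.
\item Let $q$ be the law of $(0,B,\dots,B)$, which is Markov w.r.t.\ the graph $G'$ consisting of the path $2\to\cdots\to n$ with vertex $1$ isolated.
\end{itemize}
Every pair of coordinates has entropy $\log 2$ under both $p$ and $q$. So the hypothesis holds with zero error for every $\varepsilon$, yet $d_{\mathrm{KL}}(p\|p_{G'})=I_p(X_1;X_2,\dots,X_n)=\log 2$. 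Padding $G'$ along the order $1,\dots,n$ gives back the path $G$, for which the divergence is $0$; this is exactly the second failure above.

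So your fallback is not optional. The lemma must assume the entropy bound for all $T$ with $|T|\le d+1$, and under that hypothesis your first three paragraphs are a complete proof. Correspondingly, to invoke the lemma in \cref{alg:bayesnet}, the entropy tests should range over all subsets of size at most $d+1$. This is cheap: there are at most $(n+1)^{d+1}$ such subsets, so the union bound changes only by constants in $\log(1/\delta)$.
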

    We are interested in applying this lemma to $G$, an in-degree $d$ DAG such that $p$ is Markov with respect to $G$. Indeed, as all \EET~tests from~\cref{alg:bayesnet} correctly returned $\textsf{accept}$, we must have $|H(p_{\mathcal{T}}) - H(q_{\mathcal{T}}) | < \varepsilon_1 = \frac{\varepsilon^2}{n}$ for all $\mathcal{T}$, which Lemma \ref{lem:cy24:kl} implies $d_{\tmop{KL}} (q\| q_G) \leq O
(\varepsilon^2)$ (and, of course, $d_{\tmop{KL}} (p\| p_G) = 0$). We \emph{would like} to use this (and the fact that $d_{\rm TV}(p,q) \geq \varepsilon$) to argue that 
$d_{\rm KL} (p_{\mathcal{T}} \| q_{\mathcal{T}})$ must then be noticeably large for \emph{some} $\mathcal{T}$, and from this conclude that $d^2_H (p_{\mathcal{T}}, q_{\mathcal{T}})$ must be noticeably large for that same $\mathcal{T}$ (causing the corresponding Hellinger test to reject). Unfortunately, this approach would fail, as the last step (relating KL and squared Hellinger in this direction) is not possible in general, if the probabilities of $p,q$ can be arbitrarily small. 

To alleviate this issue, we use as a proxy in our algorithm a suitable mixture of $p,q$ with the uniform distribution (over $\{0, 1\}^n$):  specifically, let $\tilde{p} \coloneqq \left( 1 - C_3 
\frac{\varepsilon^2}{d n \log(n/\varepsilon)} \right) p + C_3 \frac{\varepsilon^2}{d n \log(n/\varepsilon)} \cdot U$, and
$\tilde{q} \coloneqq \left( 1 - C_3 \frac{\varepsilon^2}{d n \log(n/\varepsilon)} \right) q + C_3
\frac{\varepsilon^2}{d n \log(n/\varepsilon)} \cdot U$. Now, if we marginalize over any subset, e.g., a coordinate $X_i$ and its (at most) $d$ parents $\Pi_i$, then we will
have
\begin{equation}
  \sum_{x_1, \ldots, x_n \backslash \{ x_i, \pi_i \} \in \{ 0, 1 \}^{n - |
  \Pi_i | - 1}} \tilde{p} (x) \geq 2^{n - d - 1} C_3 \cdot \frac{\varepsilon^2}{d n \log(n/\varepsilon)}
  \cdot \frac{1}{2^n} = C_3 \frac{\varepsilon^2}{2^{d + 1} d n \log(n/\varepsilon)},
\end{equation}
(and similarly for $\tilde{q}$), a lower bound which will allow us to lower bound Hellinger distance in terms of KL distance, using \cite[Lemma 3.1]{SST25}. The challenge now is
that $\tilde{p}$ is no longer necessarily Markov w.r.t.~a sparse DAG. Nevertheless, we show below that being \emph{close} to a distribution which is Markov w.r.t.~to a sparse DAG is enough for the argument to go through:
\begin{lemma}
  \label{lemma:mixture_preserve_approx_DAG}
  Fix any graph $G$ and distribution $p$, and a constant $C_3 > 0$, and 
  let $\tilde{p} = \left( 1 - \frac{C_3 \varepsilon^2}{d n \log(n/\varepsilon)} \right) \cdot p +
  \frac{C_3 \varepsilon^2}{d n \log(n/\varepsilon)} \cdot U$, then we have $| d_{\tmop{KL}} (\tilde{p}
  \| \tilde{p}_G) - d_{\tmop{KL}} (p\|p_G) | \leq {O}
  (\varepsilon^2)$.
\end{lemma}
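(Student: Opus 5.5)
The plan is to write both KL divergences as sums of local entropies and control how each local entropy moves under mixing with $U$. For any distribution $r$ on $\{0,1\}^n$ and DAG $G$ with parent sets $\Pi_i=\Pi_i^G$, the chain rule together with the definition of $r_G$ (\cref{definition:projection_to_DAG}) gives
\[
d_{\tmop{KL}}(r\|r_G) = \sum_{i=1}^n H(X_i \mid \Pi_i)_r - H(r) = \sum_{i=1}^n \bigl(H(r_{X_i,\Pi_i}) - H(r_{\Pi_i})\bigr) - H(r).
\]
This identity holds for every $r$; Markovianity is not needed, since $\mathbb{E}_r[\log(1/r_G)]$ only involves the marginals $r_{X_i,\Pi_i}$. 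Applying it to both $r=p$ and $r=\tilde p$ reduces the lemma to bounding
\[
\sum_{i=1}^n \Bigl(|H(\tilde p_{X_i,\Pi_i}) - H(p_{X_i,\Pi_i})| + |H(\tilde p_{\Pi_i}) - H(p_{\Pi_i})|\Bigr) + |H(\tilde p)-H(p)|.
\]

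Each entropy difference is controlled by the continuity bound \cite[Lemma 2.7]{CK11}, exactly as in the proof of \cref{lemma:mass_floor}. Write $\eta = C_3\varepsilon^2/(dn\log(n/\varepsilon))$. Marginalization commutes with mixing, so $\tilde p_A = (1-\eta)p_A + \eta U_A$ for every marginal $A$, and hence $d_{\rm TV}(\tilde p_A, p_A)\le \eta$ on a domain of size $N_A$. For the local marginals, $N_A \le 2^{d+1}$, so the continuity bound gives
\[
|H(\tilde p_A)-H(p_A)| \le \eta\log(2^{d+1}/\eta) = O\bigl(\eta\,(d+\log(dn/\varepsilon))\bigr).
\]
Multiplying by the $2n$ local terms yields $O\bigl(\tfrac{C_3\varepsilon^2}{d\log(n/\varepsilon)}(d+\log(n/\varepsilon))\bigr) = O(\varepsilon^2)$. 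The $1/d$ and $1/\log(n/\varepsilon)$ factors built into $\eta$ are what absorb the $d$ and $\log(1/\eta)$ terms.

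The main obstacle is the global term $|H(\tilde p)-H(p)|$. Here $N=2^n$, so the crude bound $\eta\log(2^n/\eta)\approx \eta n$ gives $O(\varepsilon^2/(d\log(n/\varepsilon)))$, which is still $O(\varepsilon^2)$. So this term is fine too, although it is the place where the scaling of $\eta$ with $n$ is essential. Before relying on it, I would double-check the applicability condition of \cite[Lemma 2.7]{CK11} ($d_{\rm TV}\le 1/2$, which holds since $\eta\le 1/2$ for a suitable constant $C_3$). A possible sharper alternative uses concavity of entropy and of conditional entropy in one direction, but the continuity bound already suffices. Summing the local and global contributions gives $|d_{\tmop{KL}}(\tilde p\|\tilde p_G) - d_{\tmop{KL}}(p\|p_G)| \le O(\varepsilon^2)$, with the constant depending on $C_3$.
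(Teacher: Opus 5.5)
Your proposal is correct and follows essentially the same route as the paper. The paper also decomposes $d_{\mathrm{KL}}(\tilde p\|\tilde p_G)$ via the Chow--Liu identity; your $\sum_i H(X_i\mid\Pi_i) - H(\cdot)$ form is the same identity, since $H(X_i) - I(X_i;\Pi_i) = H(X_i,\Pi_i) - H(\Pi_i)$. It likewise bounds each local entropy change by $O(\varepsilon^2/n)$ and the global one by $O(\varepsilon^2)$, using \cite[Lemma 2.7]{CK11} and the fact that marginalization commutes with mixing. Your version is slightly more explicit: you bound the $H(\tilde p_{\Pi_i})$ terms directly, whereas the paper writes out only the $H(\tilde p_{X_i,\Pi_i})$ bound and leaves the perturbation of the mutual-information terms implicit.
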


\begin{proof}
  We first
  show that $| H (p) - H (\tilde{p}) | \leq O (\varepsilon^2)$ and $| H
  (p_{X_i, \Pi_i}) - H (\tilde{p}_{X_i, \Pi_i}) | \leq O \left(
  \frac{\varepsilon^2}{n} \right)$. Firstly, we know that $d_{\tmop{TV}} (p,
  \tilde{p}) \leq O \left( \frac{\varepsilon^2}{d n \log(n/\varepsilon)} \right)$ and so $| H
  (p) - H (\tilde{p}) | \leq d_{\tmop{TV}} (p, \tilde{p}) \cdot \log
  \left( \frac{2^n}{d_{\tmop{TV}} (p, \tilde{p})} \right) \leq {O}
  (\varepsilon^2)$. On the other hand, we can consider the marginalization
  calculation of $\tilde{p}$ on $X_i, \Pi_i$ as follows:
  \begin{align*}
    \tilde{p} (x_i, \pi_i) & = \sum_{x_1, \ldots, x_n \backslash \{ x_i,
    \pi_i \} \in \{ 0, 1 \}^{n - | \Pi_i | - 1}} \left( 1 - C_3
    \frac{\varepsilon^2}{d n \log(n/\varepsilon)} \right) \cdot p(x) + C_3 \frac{\varepsilon^2}{d n \log(n/\varepsilon)} \cdot
    U(x)\\
    & = \left( \sum_{x_1, \ldots, x_n \backslash \{ x_i, \pi_i \} \in \{ 0,
    1 \}^{n - | \Pi_i | - 1}} \left( 1 - C_3 \frac{\varepsilon^2}{d n \log(n/\varepsilon)} \right) \cdot
    p(x) \right) + C_3 \frac{\varepsilon^2}{d n \log(n/\varepsilon)} \cdot \frac{1}{2^{| \Pi_i | + 1}} .\\
    & = \left( 1 - C_3 \frac{\varepsilon^2}{d n \log(n/\varepsilon)} \right) \cdot p (x_i, \pi_i) +
    C_3 \frac{\varepsilon^2}{d n \log(n/\varepsilon)} \cdot \frac{1}{2^{| \Pi_i | + 1}} .
  \end{align*}
  This implies that $d_{\tmop{TV}} (\tilde{p}_{X_i, \Pi_i}, p_{X_i, \Pi_i})
  \leq O \left( \frac{\varepsilon^2}{d n \log(n/\varepsilon)} \right)$, and thus
  \begin{align*} | H (p_{X_i, \Pi_i}) - H (\tilde{p}_{X_i, \Pi_i}) | \leq O \left(
     \frac{\varepsilon^2}{d n \log(n/\varepsilon)} \cdot \log\left(\frac{ 2^{d + 1} d n \log(n/\varepsilon)}{
     \varepsilon^2 }\right) \right) = {O} \left(
     \frac{\varepsilon^2}{n} \right). 
\end{align*}
  By the classic Chow-Liu decomposition \cite{ChowL68} (and noting the relation between entropies and mutual information), it follows that
  \begin{align*}
    d_{\tmop{KL}} (\tilde{p} \| \tilde{p}_G) & = - \sum_{i = 1}^n I
    (\tilde{p}_{X_i} ; \tilde{p}_{\Pi_i}) + \sum_{i = 1}^n H (\tilde{p}_{X_i})
    - H (\tilde{p}_{X_1, \ldots, X_n})\\
    & = - \left( \sum_{i = 1}^n I (p_{X_i} ; p_{\Pi_i}) \pm {O}
    \left( \frac{\varepsilon^2}{n} \right) \right) + \left( \sum_{i = 1}^n H
    (p_{X_i}) \pm {O} \left( \frac{\varepsilon^2}{n} \right) \right) -
    (H (p_{X_1, \ldots, X_n}) \pm {O} (\varepsilon^2))\\
    & = d_{\tmop{KL}} (p\|p_G) \pm {O} (\varepsilon^2) . \qedhere
  \end{align*}
\end{proof} 
Similarly, the same holds for $q$ and (any) graph $G$: $| d_{\tmop{KL}} (\tilde{q} \|
\tilde{q}_G) - d_{\tmop{KL}} (q\|q_G) | \leq {O} (\varepsilon^2)$.

We now argue that there exists a DAG $G$ such that $d_{\tmop{KL}} (\tilde{p}
\| \tilde{p}_G) \leq {O} (\varepsilon^2)$ and $d_{\tmop{KL}}
(\tilde{p} \| \tilde{q}_G) \geq \Omega (\varepsilon^2)$. 
Indeed, from the discussion after~\cref{lem:cy24:kl}, we know that the DAG $G$ that $p$ is Markov with respect to 
will satisfy $d_{\tmop{KL}} (q\|q_G) \leq O (\varepsilon^2)$, and so we have
\begin{align} d_{\tmop{KL}} (\tilde{q} \| \tilde{q}_G) \leq O (\varepsilon^2)
   \quad\infixand\quad d_{\tmop{KL}} (\tilde{p} \| \tilde{p}_G) \leq O
   (\varepsilon^2),  \label{eq:bn:useful1}
\end{align}
by~\cref{lemma:mixture_preserve_approx_DAG}. Hence, by Pinsker's inequality,
$d_{\tmop{TV}} (\tilde{q}, \tilde{q}_G) \leq \sqrt{\frac{1}{2}d_{\tmop{KL}}
(\tilde{q} \| \tilde{q}_G)} \leq O (\varepsilon)$, where the constant in $O(\varepsilon)$ can be made sufficiently small by adjusting $C_3$ in \cref{lemma:mixture_preserve_approx_DAG}. Now, using
$d_{\tmop{TV}} (p, \tilde{p}), d_{\tmop{TV}} (q, \tilde{q}) \leq O \left(
\frac{\varepsilon^2}{n} \right)$ and the triangle inequality,
\begin{align*} \Omega (\varepsilon) \leq d_{\tmop{TV}} (p, q) \leq d_{\tmop{TV}}
   (p, \tilde{p}) + d_{\tmop{TV}} (\tilde{p}, \tilde{q}_G) + d_{\tmop{TV}}
   (\tilde{q}_G, \tilde{q}) + d_{\tmop{TV}} (\tilde{q}, q) . \end{align*}
Rearranging terms:
\begin{align}
  \sqrt{\frac{1}{2}d_{\tmop{KL}} (\tilde{p} \| \tilde{q}_G)} & \geq  d_{\tmop{TV}}
  (\tilde{p}, \tilde{q}_G) \notag\\
  & \geq  d_{\tmop{TV}} (p, q) - d_{\tmop{TV}} (p, \tilde{p}) -
  d_{\tmop{TV}} (\tilde{q}_G, \tilde{q}) - d_{\tmop{TV}} (\tilde{q}, q) \notag\\
  & \geq  \varepsilon - O \left( 2 \cdot
  \frac{\varepsilon^2}{n} \right) - O (\varepsilon) = \Omega (\varepsilon) . \label{eq:bn:useful2}
\end{align}
We will use the above two facts \eqref{eq:bn:useful1}, \eqref{eq:bn:useful2} to get towards our goal, which is to show that $d_{\rm KL} (p_{\mathcal{T}}\| q_{\mathcal{T}})$ must then be noticeably large (namely, $\Omega(\varepsilon^2)$) for \emph{some} $\mathcal{T}$ (of the form $\{X_i\}\cup \Pi_i$), in view of finally concluding the same for $d_{H} (p_{\mathcal{T}}, q_{\mathcal{T}})$. 
Squaring \eqref{eq:bn:useful2} and removing $d_{\tmop{KL}} (\tilde{p} \| \tilde{p}_G) \leq O
   (\varepsilon^2)$ from both sides, we get
\begin{align}
  \Omega (\varepsilon^2) & \leq \overbrace{d_{\tmop{KL}}
  (\tilde{p} \| \tilde{q}_G)}^{\Omega (\varepsilon^2)} - \overbrace{d_{\tmop{KL}} (\tilde{p} \| \tilde{p}_G)}^{O (\varepsilon^2)} \nonumber\\
  & = \sum_x \tilde{p} (x) \log \frac{\tilde{p} (x)}{\tilde{q}_G (x)} -
  \sum_x \tilde{p} (x) \log \frac{\tilde{p} (x)}{\tilde{p}_G (x)} \nonumber\\
  & = \sum_x \tilde{p} (x) \log \frac{\tilde{p}_G (x)}{\tilde{q}_G (x)} \nonumber\\
  & = \sum_x \tilde{p} (x) \log \frac{\prod_{i = 1}^n \tilde{p} (x_i |
  \pi_i)}{\prod_{i = 1}^n \tilde{q} (x_i | \pi_i)} \nonumber\\
  & = \sum_{i = 1}^n \sum_{x_i, \pi_i} \tilde{p} (x_i, \pi_i) \log
  \frac{\tilde{p} (x_i | \pi_i)}{\tilde{q} (x_i | \pi_i)} \nonumber\\
  & = \sum_{i = 1}^n \left( d_{\tmop{KL}} (\tilde{p}_{X_i, \Pi_i} \|
  \tilde{q}_{X_i, \Pi_i}) - d_{\tmop{KL}} (\tilde{p}_{\Pi_i} \|
  \tilde{q}_{\Pi_i}) \right) \nonumber\\
  &\leq \sum_{i = 1}^n d_{\tmop{KL}} (\tilde{p}_{X_i, \Pi_i} \| \tilde{q}_{X_i,
   \Pi_i}) \label{eq:local_KL_divergence_large} ~~~~~~~~~~~~~~~~~~~~~~~~~~~~~~\text{(since $d_{\tmop{KL}} (\tilde{p}_{\Pi_i} \| \tilde{q}_{\Pi_i}) \geq 0$)}
\end{align}
Now, since $d_{\tmop{TV}} (p, q)
\geq \Omega (\varepsilon)$, by using Pinsker's inequality and properties of KL divergence, an averaging argument shows that there must exist $i \in [n]$ such
that,
\begin{align*} d_{\tmop{KL}} (\tilde{p}_{X_i, \Pi_i} \| \tilde{q}_{X_i, \Pi_i}) \geq
   \Omega \left( \frac{\varepsilon^2}{n} \right) . \end{align*}
For this specific $i$, the relationship between $d_{\tmop{KL}}$ and $d_H^2$ in~\cite[Lemma 3.1]{SST25} gives
\begin{align*} d_{\tmop{KL}} (\tilde{p}_{X_i, \Pi_i} \| \tilde{q}_{X_i, \Pi_i}) \leq
   \left( 2 + \log \left( \max_{x \in \{ 0, 1 \}^{| \Pi_i | + 1}}
   \frac{\tilde{p}_{X_i, \Pi_i} (x)}{\tilde{q}_{X_i, \Pi_i} (x)} \right) \cdot
   d_H^2 (\tilde{p}_{X_i, \Pi_i}, \tilde{q}_{X_i, \Pi_i}) \right) . \end{align*}
Using the lower bound $\tilde{q}_{X_i, \Pi_i} (x) \geq
\frac{C_3 \varepsilon^2}{2^{d + 1} d n \log(n/\varepsilon)}$ (which follows from the mixture we took), we have that
\begin{align*} d_H^2 (\tilde{p}_{X_i, \Pi_i}, \tilde{q}_{X_i, \Pi_i}) \geq \Omega
   \left( \frac{\varepsilon^2}{n} \cdot \frac{1}{2 + \log \left( \frac{2^{d +
   1} d n \log(n/\varepsilon)}{C_3 \varepsilon^2} \right)} \right) = \Omega \left(
   \frac{\varepsilon^2}{d \cdot n \cdot \log \left( \frac{d n}{\varepsilon}
   \right)} \right) , \end{align*}
and so the Hellinger test, when considering a subset $\mathcal{T} \supseteq \{X_i\}\cup \Pi_i$, will output \textsf{reject}.

Having established both completeness and soundness, this concludes the proof.
\fi
\end{proof}
\begin{remark}
    We have an interesting dichotomy: our algorithm is efficient (polynomial in $n$) only if $d$ is constant. However, it is worse than its learning counterpart when $d$ is constant: $n^2$ (our test) vs. $n$ (learning). On the other hand, if $d = \Omega(\log n)$, then our tester will be more efficient than learning sample-wise, but its runtime becomes $n^{\Omega(\log n)}$, and is thus quasi-polynomial.
\end{remark}

As identity testing is a special case of closeness testing, our above analysis can straightforwardly be ported to yield an identity testing algorithm (using an algorithm for identity testing in Hellinger instead of a closeness one as the first building block). An immediate consequence of this is to yield an improved identity tester for Bayes net with sample complexity $\tilde{O}(\frac{2^{d/2}n}{\varepsilon^2}+\frac{n^2}{\varepsilon^4})$, thus improving on the prior bound $\tilde{O}(\frac{2^{d/2}n^{3/2}}{\varepsilon^2}+\frac{n^2}{\varepsilon^4})$ \cite[Theorem 1.2]{CY24}.

\begin{corollary}
    \label{cor:improve:upon:cy24}
    Given sample access to a Bayesian networks $p$ and a full description of
    Bayesian network $q$ over $\{0, 1\}^n$ with in-degree at most $d$, as well as
    a parameter $\varepsilon \in (0, 1]$, there is an algorithm that takes
    \[ s = \tilde{O} \left( \frac{2^{d / 2} n}{\varepsilon^2} +
       \frac{n^2}{\varepsilon^4} \right) , \]
    samples from $p$, and distinguishes with probability at least $2 / 3$ between
    (1) $p = q$ and (2) $d_{\mathrm{TV}} (p, q) \geq \varepsilon$. Moreover, the
    algorithm runs in time $\tilde{O} (n^{d + 1} s)$.
\end{corollary}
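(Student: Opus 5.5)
We follow the proof of \cref{theo:bayesnet:closeness} line by line, changing only the subroutines, since $q$ is now fully known. The algorithm first forms the mixture $\tilde p$ by sampling, and forms $\tilde q$ exactly. It then loops over all subsets $T$ of size $d+1$ and runs two local tests on the marginals $\tilde p_{\mathcal{T}}$ and $\tilde q_{\mathcal{T}}$, each of which can be computed exactly from the description of $q$. The first is an entropy identity test (\EIT) of \cite{CY24} at accuracy $\varepsilon_1=\Theta(\varepsilon^2/n)$. The second is a Hellinger \emph{identity} test at squared-Hellinger threshold $\varepsilon_2=\Theta(\varepsilon^2/(dn\log(dn/\varepsilon)))$, for instance the $\chi^2$-type identity tester of \cite{DaskalakisKW18}. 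Both tests use failure probability $\delta=1/(20\binom{n}{d+1})$, and all tests share one sample set.

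\textbf{Sample complexity.} On a domain of size $2^{d+1}$, the \EIT{} of \cite{CY24} uses $\tilde O(\sqrt{2^{d+1}}/\varepsilon_1+1/\varepsilon_1^2)$ samples. This gives $\tilde O(2^{d/2}n/\varepsilon^2+n^2/\varepsilon^4)$. Identity testing in Hellinger distance with $d_H^2\ge\varepsilon_2$ costs $\tilde O(\sqrt{2^{d+1}}/\varepsilon_2)=\tilde O(2^{d/2}n/\varepsilon^2)$. The $\log(1/\delta)=O(d\log n)$ amplification factor and the $\mathrm{poly}(d,\log(n/\varepsilon))$ factors are absorbed into $\tilde O(\cdot)$, as in the closeness case. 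The running time is $\tilde O(n^{d+1}s)$ as before; computing the exact marginals of $\tilde q$ on $d+1$ coordinates takes time polynomial in $n2^d$ from the Bayes net factorization, which is absorbed.

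\textbf{Completeness.} If $p=q$, then $\tilde p=\tilde q$ and every local test accepts. A union bound over the tests gives acceptance with probability at least $9/10$.

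\textbf{Soundness.} The soundness argument does not depend on whether $q$ is sampled or known, so it carries over verbatim. Suppose all entropy tests accept. Then \cref{lem:cy24:kl}, applied with the DAG $G$ of $p$, gives $d_{\rm KL}(q\|q_G)\le O(\varepsilon^2)$. \Cref{lemma:mixture_preserve_approx_DAG} transfers this bound to the mixtures, giving \eqref{eq:bn:useful1}. Pinsker's inequality and the triangle inequality then give \eqref{eq:bn:useful2}. The chain-rule decomposition \eqref{eq:local_KL_divergence_large} yields some $i$ with $d_{\rm KL}(\tilde p_{X_i,\Pi_i}\|\tilde q_{X_i,\Pi_i})\ge\Omega(\varepsilon^2/n)$.

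The main step to double-check is the conversion from this local KL bound to a squared Hellinger bound via \cite[Lemma 3.1]{SST25}. This conversion needs the probability floor $\tilde q_{X_i,\Pi_i}(x)\ge C_3\varepsilon^2/(2^{d+1}dn\log(n/\varepsilon))$, and here $\tilde q$ is the known distribution. The floor therefore holds exactly, and it produces a logarithmic loss of only $O(d+\log(n/\varepsilon))$. Hence $d_H^2(\tilde p_{\mathcal T},\tilde q_{\mathcal T})\ge\varepsilon_2$ for any $\mathcal T\supseteq\{X_i\}\cup\Pi_i$, and the Hellinger identity test on that set rejects. A final union bound over all $2\binom{n}{d+1}$ tests gives success probability at least $2/3$.
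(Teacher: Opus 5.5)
Your sample-complexity and correctness argument follows the paper's route: replace the \EET{} subroutine by the entropy identity tester of \cite{CY24} at accuracy $\Theta(\varepsilon^2/n)$, and the Hellinger closeness tester by a Hellinger identity tester at threshold $\Theta(\varepsilon^2/(dn\log(dn/\varepsilon)))$. You then reuse the soundness chain unchanged: Lemma~\ref{lem:cy24:kl}, Lemma~\ref{lemma:mixture_preserve_approx_DAG}, the bound \eqref{eq:local_KL_divergence_large}, and \cite[Lemma 3.1]{SST25} with the floor on $\tilde q$. That part is fine, and your accounting $\tilde O(2^{d/2}n/\varepsilon^2+n^2/\varepsilon^4)$ is correct.

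The step that fails is the running-time justification. You assert that the marginals $\tilde q_{\mathcal T}$ can be computed exactly in time polynomial in $n2^d$ from the DAG and conditional probability tables. Equivalently this means computing $q_{\mathcal T}$, since $\tilde q_{\mathcal T}=(1-\eta)q_{\mathcal T}+\eta U$ with $\eta=C_3\varepsilon^2/(dn\log(n/\varepsilon))$. The assertion is false in general, because exact marginal inference in Bayes nets is \#P-hard already for constant in-degree.

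Here is a hard instance. Take independent uniform roots $U_1,\dots,U_k$. For each clause $j$ of a 3-CNF formula, add a deterministic node $C_j$ computing that clause from its three variables. Then add a chain $D_1=C_1$, $D_j=D_{j-1}\wedge C_j$ for $j\le \ell$. This net has in-degree at most $3$, and $q(D_\ell=1)$ equals the number of satisfying assignments divided by $2^k$. That quantity is one entry of the marginal on any $\mathcal T$ containing $D_\ell$.

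Both of your local tests take $\tilde q_{\mathcal T}$ as an explicit input. So your algorithm, as described, does not run in time $\tilde O(n^{d+1}s)$. The paper's one-line justification is silent on this point as well.

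A repair has to avoid exact inference. One option:
\begin{itemize}
\item Draw samples from $q$ by ancestral sampling, which costs no samples from $p$.
\item Estimate each $\tilde q_{\mathcal T}$ to small multiplicative error. This is possible thanks to the floor $C_3\varepsilon^2/(2^{d+1}dn\log(n/\varepsilon))$.
\item Run versions of the two local tests that tolerate this estimation error. For the Hellinger test, a $\chi^2$-tolerant identity tester as in \cite{DaskalakisKW18} would serve; you would also need to check that the entropy identity test is robust to it.
\end{itemize}
You would then still have to show that this extra computation fits within the claimed $\tilde O(n^{d+1}s)$ time bound, which is not automatic.
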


%% file: lowerbound.tex
\section{Lower Bound}
\label{sec:lower-bound}

In this section, we establish our sample complexity lower bound for \EET~(\cref{thm:eet-lb:intro}). The
$\widetilde\Omega(\min\{n^{3/4}/\varepsilon,\, n^{2/3}/\varepsilon^{4/3}\})$
term follows from a reduction to mutual information (\MI) testing
combined with \citet[Corollary~6.2]{SST25}, and the
$\Omega(\log^2 n / \varepsilon^2)$ term follows directly from
\citet[Lemma~2.7]{CY24}, whose two-point construction is also a hard instance
for \EET.

For a joint distribution $P_{AC}$ over
$[k_A] \times [k_C]$ with marginals $P_A, P_C$, the \emph{mutual information}
between coordinates is
\begin{align*}
  I(A : C) := H(A) + H(C) - H(A, C),
\end{align*}
and equals zero if and only if $P_{AC} = P_A \otimes P_C$. The \MI~testing problem asks, given sample access to $P_{AC}$, to distinguish
$I(A : C) = 0$ from $I(A : C) \geq \varepsilon$ with probability at least $2/3$.
\citet[Corollary~6.2]{SST25} established the lower bound
\begin{align*}
  \widetilde\Omega\!\left(\min\!\left\{
    \frac{k_A^{3/4} k_C^{1/4}}{\varepsilon},\,
    \frac{k_A^{2/3} k_C^{1/3}}{\varepsilon^{4/3}}
  \right\}\right)
\end{align*}
for this task.

\paragraph*{Reduction from \MI~testing to \EET.}
\begin{lemma}
\label{lem:mi-to-eet}
Let $k_A, k_C \geq 2$ and $n = k_A k_C$. Any \EET~algorithm
$\mathcal{A}$ over $[n]$ with sample complexity $T$ yields an \MI~testing algorithm over $[k_A] \times [k_C]$ with sample complexity $3T$ and the
same success probability.
\end{lemma}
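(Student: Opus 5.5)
The plan is to instantiate the \EET~instance with $p \coloneqq P_{AC}$ and $q \coloneqq P_A \otimes P_C$, both viewed as distributions over $[n]$ via a fixed bijection $[k_A]\times[k_C]\to[n]$ (e.g., $(a,c)\mapsto (a-1)k_C + c$). Entropy is invariant under relabeling of the support, so $H(p)=H(A,C)$. For the product, $H(q)=H(P_A\otimes P_C)=H(A)+H(C)$ by additivity of entropy over independent coordinates. Hence
\[
H(q)-H(p) = H(A)+H(C)-H(A,C) = I(A:C) \geq 0,
\]
so $|H(p)-H(q)| = I(A:C)$ exactly.

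The two cases of \MI~testing then map onto the two cases of \EET. If $I(A:C)=0$, then $P_{AC}=P_A\otimes P_C$, i.e., $p=q$, which is the ``accept'' case of \EET. If $I(A:C)\geq\varepsilon$, then $|H(p)-H(q)|\geq\varepsilon$, which is the ``reject'' case. Running $\mathcal{A}$ with the same parameter $\varepsilon$ and outputting its answer therefore solves \MI~testing with the same success probability, provided we can faithfully supply $\mathcal{A}$ with i.i.d.\ samples from $p$ and $q$.

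The remaining step is simulating the sample access, and this determines the factor $3$. A sample from $p$ costs exactly one sample $(a,c)\sim P_{AC}$. A sample from $q$ is obtained by drawing two fresh, independent samples $(a,c),(a',c')\sim P_{AC}$ and outputting $(a,c')$. Since the two draws are independent, $a\sim P_A$ and $c'\sim P_C$ are independent, so $(a,c')\sim P_A\otimes P_C$. Using fresh samples for every requested draw keeps all simulated samples mutually independent, which matches the sampling model $\mathcal{A}$ expects. If $\mathcal{A}$ uses at most $T$ samples from each of $p$ and $q$ (or $T$ in total), the simulation uses at most $T+2T=3T$ samples from $P_{AC}$.

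The only step needing any care is this independence bookkeeping. It requires that no $P_{AC}$ sample is reused across simulated draws, and that the success probability of $\mathcal{A}$ carries over because the simulated input distribution is exactly the product $p^{\otimes}\times q^{\otimes}$ it was designed for. I do not expect a real obstacle here; the entropy identity is immediate.
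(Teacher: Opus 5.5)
Your proposal is correct and matches the paper's proof: you use the same instantiation $p = P_{AC}$, $q = P_A \otimes P_C$ via a fixed bijection, the same identity $|H(p)-H(q)| = I(A:C)$, and the same simulation of each product sample by pairing the $A$-coordinate of one fresh joint sample with the $C$-coordinate of another. This gives $T + 2T = 3T$ samples in total, exactly as the paper does with $(a_{T+2i-1}, c_{T+2i})$.
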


\begin{proof}
Map $[k_A] \times [k_C]$ with $[n]$ via some fixed bijection. Using
$\mathcal{A}$, we construct a \MI~tester for $P_{AC}$ over
$[k_A] \times [k_C]$ as follows:
\begin{enumerate}
  \item Draw $3T$ i.i.d.\ samples $(a_1, c_1), \ldots, (a_{3T}, c_{3T})$ from $P_{AC}$.
  \item Form $T$ samples of $P := P_{AC}$, denoted by $(a_i, c_i)$ for $i = 1, \ldots, T$.
  \item Form $T$ samples of $Q := P_A \otimes P_C$, denoted by $(a_{T+2i-1},\, c_{T+2i})$
        for $i = 1, \ldots, T$. Since the samples are i.i.d., the $a$- and $c$-coordinates
        of each pair are independent draws from $P_A$ and $P_C$ respectively.
  \item Run $\mathcal{A}$ on the $T$ samples of $P$ and the $T$ samples of $Q$.
\end{enumerate}
Observe that $|H(P) - H(Q)| = |H(A, C) - H(A) - H(C)| = I(A : C)$.  Then:
\begin{itemize}
  \item If $I(A : C) = 0$, then $P_{AC} = P_A \otimes P_C$, so $P = Q$ and
        $\mathcal{A}$ returns \textsf{accept}.
  \item If $I(A : C) \geq \varepsilon$, then $|H(P) - H(Q)| \geq \varepsilon$ and
        $\mathcal{A}$ returns \textsf{reject}.
\end{itemize}
\end{proof}

\begin{theorem}[{\EET} lower bound: \cref{thm:eet-lb:intro}, restated]
\label{thm:eet-lb}
Any algorithm that, given sample access to two unknown distributions $p, q$ over
$[n]$ and parameter $\varepsilon \in (0, 1]$, distinguishes $p = q$ from
$|H(p) - H(q)| \geq \varepsilon$ with probability at least $2/3$ requires
\begin{align*}
 \widetilde\Omega\!\left(\min\!\left\{\frac{n^{3/4}}{\varepsilon},\,
    \frac{n^{2/3}}{\varepsilon^{4/3}}\right\}\right)
  + \Omega\!\left(\frac{\log^2 n}{\varepsilon^2}\right)
\end{align*}
samples from each of $p$ and $q$.
\end{theorem}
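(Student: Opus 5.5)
The lower bound splits into its two summands. A lower bound of the form $\max\{A,B\}$ suffices, since $\max\{A,B\}\geq (A+B)/2$. We prove each term separately.

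\emph{First term, via \MI~testing.} We use \cref{lem:mi-to-eet} with an unbalanced factorization of $n$. Pick $k_C = 2$ and $k_A = \lfloor n/2\rfloor$, and embed $[k_A]\times[k_C]$ into $[n]$, leaving at most one unused element on which both distributions put zero mass; this does not affect \EET. By \cref{lem:mi-to-eet}, an \EET~algorithm using $T$ samples from each of $p,q$ gives an \MI~tester on $[k_A]\times[k_C]$ using $3T$ samples with the same success probability $2/3$. The lower bound of \citet[Corollary~6.2]{SST25} then gives
\[
3T \geq \widetilde\Omega\!\left(\min\!\left\{\frac{k_A^{3/4}k_C^{1/4}}{\varepsilon},\,\frac{k_A^{2/3}k_C^{1/3}}{\varepsilon^{4/3}}\right\}\right) = \widetilde\Omega\!\left(\min\!\left\{\frac{n^{3/4}}{\varepsilon},\,\frac{n^{2/3}}{\varepsilon^{4/3}}\right\}\right),
\]
because $k_C=O(1)$ and $k_A=\Theta(n)$. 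Their result assumes $k_A\geq k_C$, which holds here. Taking $k_C$ constant maximizes the exponent of $n$ among factorizations with $k_A\geq k_C$, since the $k_A$-exponents dominate. We also need their $\varepsilon$ range to be compatible with $\varepsilon\in(0,1]$; if their statement needs $\varepsilon$ below some constant, we absorb this into the constants.

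\emph{Second term, via identity testing.} The \EIT~problem (entropy identity testing with $q$ known) reduces to \EET: given the explicit $q$, we can simulate samples from it at no sample cost. So any \EET~tester using $T$ samples from each distribution yields an \EIT~tester using $T$ samples from $p$. \citet[Lemma~2.7]{CY24} shows that \EIT~requires $\Omega(\log^2 n/\varepsilon^2)$ samples. Their construction is a pair of distributions $q$ and $p$ with $p=q$ in the null case and $|H(p)-H(q)|\geq\varepsilon$ in the alternative, which is exactly the \EET~promise. Hence $T=\Omega(\log^2 n/\varepsilon^2)$.

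\emph{Main obstacle.} The main thing to check is that the cited lower bounds apply verbatim. For \citet{SST25}, we must confirm that their hard instances for \MI~use the same success probability and parameter range, and that choosing $k_C=2$ is permitted. If $k_C$ must grow, a polylogarithmic $k_C$ also works, with the loss hidden in $\widetilde\Omega$. The per-distribution sample count differs by at most a constant factor: the reduction uses $3T$ total joint samples, and the stated bound is on samples from each of $p$ and $q$. Combining the two terms gives the theorem.
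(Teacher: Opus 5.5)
Your proposal is correct and follows the paper's proof essentially verbatim. For the first term you use \cref{lem:mi-to-eet} with $k_C=2$, $k_A\approx n/2$ together with \citet[Corollary~6.2]{SST25}, and for the second you simulate samples from the known $q$ to inherit the \EIT{} bound of \citet[Lemma~2.7]{CY24}; your $\lfloor n/2\rfloor$ embedding for odd $n$ and the $\max\geq$ average remark are harmless tidy-ups the paper leaves implicit.

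One caveat, which the paper shares: since $I(A:C)\leq\log k_C=\log 2$, the choice $k_C=2$ gives nothing for $\varepsilon\in(\log 2,1]$. This cannot be ``absorbed into the constants'', because a lower bound at smaller $\varepsilon$ does not transfer to larger $\varepsilon$. That range needs a larger constant $k_C$ with a hard instance from \citet{SST25} valid there, or a separate argument.
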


\begin{proof}
Apply \cref{lem:mi-to-eet} with $k_A = n/2$ and $k_C = 2$. Combining with
\cite[Corollary~6.2]{SST25} gives the first term:
\begin{align*}
  \widetilde\Omega\!\left(\min\!\left\{
      \frac{(n/2)^{3/4} \cdot 2^{1/4}}{\varepsilon},\,
      \frac{(n/2)^{2/3} \cdot 2^{1/3}}{\varepsilon^{4/3}}
    \right\}\right)
  = \widetilde\Omega\!\left(\min\!\left\{\frac{n^{3/4}}{\varepsilon},\,
      \frac{n^{2/3}}{\varepsilon^{4/3}}\right\}\right).
\end{align*}
For the second term, \cite[Lemma~2.7]{CY24} exhibits distributions
$p, q$ over $[n]$ with $|H(p) - H(q)| \geq \varepsilon$
that no algorithm can distinguish from $p = q$ using fewer than
$\Omega(\log^2 n / \varepsilon^2)$ samples from $p$ when $q$ is
given. Any \EET~algorithm using $T$ samples from each of $p, q$
yields such a tester by simulating $T$ samples from the known $q$, so
\EET~also requires $\Omega(\log^2 n / \varepsilon^2)$ samples.
\end{proof}

%% file: Appendix.tex
\appendix
\section{Deferred Proofs}\label{app:deferred}
\factorialmoment*
\begin{proof}
We first show that $\mathbb{E} [X \cdot f (X)] = \lambda \cdot \mathbb{E} [f (X
  + 1)]$ by standard Poisson calculations:
    \begin{eqnarray*}
      &  & \mathbb{E} [X \cdot f (X)] = \sum_{k = 0} \frac{\lambda^k e^{-
      \lambda}}{k!} \cdot k \cdot f (k) = \lambda \cdot \sum_{k = 1}
      \frac{\lambda^{k - 1} e^{- \lambda}}{(k - 1) !} \cdot f (k) = \lambda \cdot
      \sum_{k = 0} \frac{\lambda^k e^{- \lambda}}{k!} f (k + 1)\\
      & = & \lambda \cdot \mathbb{E} [f (X + 1)]
    \end{eqnarray*}
  The proof is then completed by induction, by noting that
  \begin{align*}
    \mathbb{E} [(X)_m \cdot f (X)] & = \mathbb{E} [X \cdot (X-1)_{m - 1} \cdot
    f (X)]\\
    & = \lambda \cdot \mathbb{E} [(X-1+1)_{m - 1} \cdot f (X + 1)]\\
    & = \lambda^2 \cdot \mathbb{E} [(X)_{m - 2} \cdot f (X + 2)]\\
    & = \dots\\
    & = \lambda^m \cdot \mathbb{E} [f (X + m)] .\qedhere
  \end{align*}
\end{proof}